\documentclass[prd,aps,eqsecnum,superscriptaddress,nofootinbib,notitlepage,longbibliography,draft]{revtex4-2}
\usepackage[T1]{fontenc}
\usepackage{lmodern}
\usepackage{mathtools,amssymb,amsthm}
\usepackage{microtype}
\usepackage{comment}
\usepackage{xcolor}
\usepackage[final,hidelinks,pagebackref]{hyperref}
\backrefparscanfalse
\renewcommand*{\backref}[1]{}
\renewcommand*{\backrefalt}[4]{%
  \ifcase #1\relax
  \or\space(Cited on page~\textcolor{blue}{#2})%
  \else\space(Cited on pages~\textcolor{blue}{#2})%
  \fi}
\hypersetup{pdftitle={Poisson identities for relative-entropy loss and quantum recovery},
  pdfsubject={Exact entropy-loss identities; pointwise Petz recovery and general-channel modular matrix differences}}
\numberwithin{equation}{section}
\newtheorem{theorem}{Theorem}[section]
\newtheorem{proposition}[theorem]{Proposition}
\newtheorem{lemma}[theorem]{Lemma}
\newcommand{\restatedlemmanumber}{}
\newtheorem*{restatedlemma}{Lemma \restatedlemmanumber}
\newtheorem{corollary}[theorem]{Corollary}
\theoremstyle{remark}
\newtheorem{remark}[theorem]{Remark}
\newcommand{\R}{\mathbb R}
\newcommand{\C}{\mathbb C}
\newcommand{\HH}{\mathcal H}
\newcommand{\BB}{\mathcal B}
\newcommand{\one}{\mathbf 1}

\newcommand{\Tr}{\operatorname{Tr}}
\newcommand{\supp}{\operatorname{supp}}
\newcommand{\DM}{D_{\mathrm M}}
\newcommand{\HS}[1]{\left\|#1\right\|_2}
\newcommand{\op}[1]{\left\|#1\right\|_\infty}

\newcommand{\Pcal}{\mathcal P}

\newcommand{\dd}{\,\mathrm d}
\allowdisplaybreaks[2]
\RequirePackage{mathrsfs}

\begin{document}
\title[Poisson identities and quantum recovery]{Quantum data processing equality}
\author{Tai-Hsuan Yang}
\affiliation{Department of Physics and Institute for Condensed Matter Theory,
University of Illinois Urbana-Champaign}
\date{\today}
\keywords{Quantum relative entropy, Petz recovery map, measured relative entropy,
  conditional mutual information, Poisson integral}
\begin{abstract}
The quantum data processing inequality asserts that quantum relative entropy is monotonically 
non-increasing under quantum channels. Here, we promote this fundamental inequality to an 
exact equality via boundary and surface Poisson integral representations across a complex strip. 
The relative-entropy loss is resolved into two independent, nonnegative physical mechanisms: 
an operational Petz recovery component governing state reconstructibility, and a boundary term 
measuring the mismatch of relative modular dynamics. We further express the loss as the
measured relative entropy between the original state and its averaged Petz recovery state plus
a nonnegative variational remainder. For the Belavkin--Staszewski (BS) loss, analogous
integral representations yield a decomposition into negative log Uhlmann fidelity with
an averaged reconstruction plus a nonnegative variational remainder.
Furthermore, we establish recovery bounds at any prescribed modular parameter,
including the canonical unrotated Petz map, with optimal square-root scaling
in the relative entropy loss. The prefactors depend on distinct-eigenvalue counts or
logarithmic spectral quantities, improving upon the inverse-power dependence
on small eigenvalues in previous bounds.
\end{abstract}
\maketitle

\section*{AI usage statement}\label{sec:ai-usage}
The results presented in this work were obtained using GPT-5.6 Sol Pro and
GPT-6 Astra Pro. The author takes full responsibility for the content of
this manuscript, including the correctness of all results, proofs, and
conclusions. The initial result for the integral form of the CMI arose during the preparation
of~\cite{yang2026modularcommutatorrobusttopological} because we wanted to show that small conditional mutual information
implies a small Markov-decomposition error for the modular flow. Later we realized the significance of this result and generalized it to the relative-entropy loss under quantum channels.
The Poisson integral representation of CMI was first discovered by
GPT-5.6 Sol Pro in
\href{https://chatgpt.com/share/6aab1a12-e410-83ea-9c23-4ea856943f1d}{this conversation}.
We subsequently obtained its generalization to relative-entropy loss
under quantum channels in
\href{https://chatgpt.com/share/6ab06667-fd64-83e9-bcc5-a4405b711d95}{this subsequent conversation}.
Some proposed integral formulas for other divergences not discussed in this paper generated by GPT-6 Astra Pro are recorded in
\href{https://chatgpt.com/share/6ab06810-9020-83ea-9029-ea52f513377c}{this further conversation}.

\tableofcontents
\section{Introduction}\label{sec:main}

Monotonicity of the quantum relative entropy under completely positive, trace-preserving 
(CPTP) maps is among the most foundational principles of quantum information theory~\cite{Lindblad75,lieb1973proof}. 
For density matrices $\rho, \sigma$ and a quantum channel $\Phi$, the quantum data processing 
inequality asserts that the relative-entropy loss,
\begin{equation}
  \Delta(\rho,\sigma;\Phi) := D(\rho\|\sigma) - D(\Phi(\rho)\|\Phi(\sigma)),
\end{equation}
is nonnegative. The landmark theorem of Petz~\cite{Petz86,Petz88} characterized the exact equality 
case: $\Delta = 0$ if and only if there exists a common recovery channel reversing the action 
of $\Phi$ on both states, which can be constructed canonically as the unrotated Petz map $\mathcal{R}_0$.

Over the past decade, a major research program has focused on establishing \emph{quantitative} 
strengthenings of this inequality, relating a small entropy loss $\Delta \ll 1$ to approximate 
state reconstructibility. Initiated by Fawzi and Renner~\cite{Fawzi2014} for the conditional mutual 
information (CMI) and subsequently generalized to arbitrary quantum channels and von Neumann 
algebras~\cite{SFR,Wilde,Junge2015,SutterBertaTomamichel2017,JL}, it was shown that $\Delta$ 
controls state reconstruction under an averaged family of rotated Petz maps $\mathcal{R}_s$ ($s \in \mathbb{R}$). 
In particular, complex interpolation techniques~\cite{Junge2015} yield recovery bounds in
terms of an average of the negative logarithm of the fidelity, while multivariate trace inequalities~\cite{SutterBertaTomamichel2017} establish
lower bounds in terms of the measured relative entropy $\DM(\rho\|\tau)$ with respect to an averaged 
recovered state $\tau$.

Despite these advances, existing recovery frameworks leave several fundamental questions open:
\begin{enumerate}
  \item \textbf{One-sided bounds versus exact equality:} 
  It has remained unknown whether the quantum data processing inequality can be promoted to an 
  \emph{exact equality}, and what physical and mathematical mechanism governs the discarded remainder.
  \item \textbf{Pointwise versus averaged recovery:} While universal recovery channels average 
  over the rotation parameter $s$, operational applications in quantum error correction and many-body 
  physics often require reconstruction at a \emph{single, prescribed} modular parameter—most 
  notably the canonical unrotated Petz map at $s = 0$. Existing pointwise estimates~\cite{CV,GaoWilde} 
  either suffer from sub-optimal scaling ($O(\Delta^{1/4})$ or $O(\sqrt{\Delta \log(1/\Delta)})$) or 
  introduce singular dependencies on minimum eigenvalues, such as $Q = \Tr(\rho^2 \sigma^{-1})$, 
  which blow up for low-entropy or nearly orthogonal states.
\end{enumerate}

In this paper, we address these questions by deriving an exact Poisson
integral representation of relative-entropy loss and using it to establish
Petz recovery bounds at any prescribed modular parameter, including $s=0$.
For faithful finite-dimensional states, these bounds have the optimal
$O(\sqrt{\Delta})$ trace-norm scaling, with coefficients controlled by
distinct-eigenvalue counts or logarithmic spectral quantities.
Our proof of the identity shares the complex-analytic strip
framework used by Junge et al.~\cite{Junge2015}. Their argument applies
Hirschman's strengthened three-line theorem and H\"older's inequality to
obtain a recovery bound. Here, we apply the Poisson integral formula
directly to the real part of a scalar analytic function and differentiate
at the boundary, obtaining an exact identity that retains both boundary
contributions. 

Across the complex strip
$0 \le \operatorname{Re} z \le 1/2$, the relative-entropy loss decomposes cleanly into two nonnegative boundary contributions.
\begin{equation}\label{eq:intro-loss-schematic}
  \Delta = \Delta_{\mathrm{mod}} + \Delta_{\mathrm{rec}}.
\end{equation}
The two terms possess distinct physical interpretations:
\begin{itemize}
 \item \textbf{Modular mismatch ($\Delta_{\mathrm{mod}}$):}
 The lower boundary measures how the channel fails to preserve the
 relative modular evolution of the two states.

 \item \textbf{Petz recovery ($\Delta_{\mathrm{rec}}$):}
 The upper boundary controls the reconstruction error of rotated
 Petz recovery, averaged over the rotation parameter.
\end{itemize}
Standard recovery inequalities emerge simply by discarding $\Delta_{\mathrm{mod}} \ge 0$. 
Our identity proves that this term is not an artifact of proof techniques, but a fundamental 
measure of how severely the channel disrupts the modular dynamics.

\subsection*{Summary of Main Contributions}
\begin{itemize}
  \item \textbf{Exact Relative-Entropy Loss and Measured Recovery (Theorems~\ref{thm:loss} and~\ref{thm:measured}):} 
  We establish the exact boundary Poisson representation for $\Delta$. By tilting the underlying analytic 
  function with a positive test operator $\omega$, we obtain an exact variational formula for the gap in the
  measured-relative-entropy recovery bound,
  \begin{equation}\label{eq:intro-measured-remainder}
   \Delta-\DM(\rho\|\tau)=\inf_{\omega>0}\mathcal P(\omega)\ge0,
  \end{equation}
  completing the known inequality of Sutter, Berta, and Tomamichel~\cite{SutterBertaTomamichel2017} into an equality.

  \item \textbf{Belavkin--Staszewski Divergence and Fidelity Remainder (Theorems~\ref{thm:bs-poisson} and~\ref{thm:bs-fidelity-remainder}):} 
  We derive parallel boundary and surface identities for the loss of the
  Belavkin--Staszewski (BS) relative entropy~\cite{BelavkinStaszewski1982},
  whose strengthened data-processing inequalities were studied
  in~\cite{BluhmCapel2020}. We obtain the exact fidelity remainder
  \begin{equation}\label{eq:intro-bs-remainder}
   \Delta_{\mathrm{BS}}+\log F(\rho,\overline\tau_{\mathrm{BS}})
   =\inf_{\omega>0}\mathcal P_{\mathrm{BS}}(\omega)\ge0,
  \end{equation}
  where $F(a,b)=\|a^{1/2}b^{1/2}\|_1^2$ and
  $\overline\tau_{\mathrm{BS}}$ is a positive reconstruction operator
  determined by $\sigma$, $\Phi$, and the output $\Phi(\rho)$.
  It is obtained by averaging over the modular parameter as in
  \eqref{eq:bs-averaged-reconstruction} and has trace at most one.

  \item \textbf{Optimal Pointwise Recovery without Singular Spectral Bounds (Theorem~\ref{thm:finite-recovery}):} 
  Treating the boundary errors as finite exponential polynomials, we prove sharp Cauchy-type interpolation 
  lemmas that yield pointwise bounds at \emph{any prescribed} rotation $s_0 \in \mathbb{R}$ (including canonical 
  Petz $s_0 = 0$) with strictly optimal $O(\sqrt{\Delta})$ scaling in trace distance. The prefactors depend 
  only linearly on distinct-eigenvalue counts or logarithmically on operator norms, improving upon previous 
  inverse-eigenvalue dependencies such as $\Tr(\rho^2 \sigma^{-1})$.

  \item \textbf{Surface Representation and Interior Bound (Theorems~\ref{sv:U-surface-thm} and~\ref{sv:complex-interior}):}
  Green's identity expresses $\Delta$ as a weighted surface integral of
  $\HS{\Gamma'}^2$ plus nonnegative boundary Schwarz defects,
  using only the channel and the density matrices.
  The full boundary numerator also has a nonnegative harmonic extension,
  giving an interior pointwise bound whose coefficient depends only on
  the complex parameter, independently of dimensions and spectra.

  \item \textbf{Modular Flow Factorization for Quantum Markov Chains (Corollary~\ref{cor:CMI}):} 
  For conditional mutual information (where Schwarz defects vanish identically), the identity partitions 
  $I(X:Z\mid Y)_\rho$ into Petz state-reconstruction error plus modular-flow factorization error. This bridges 
  the operational view of quantum Markov states with the dynamical locality of modular flows in topological 
  phases~\cite{Kim2022c-minus,Modular-commutator-Gapped,yang2026modularcommutatorrobusttopological}.
\end{itemize}

\subsection*{Organization of the Paper}
Section~\ref{sec:results} formally presents the mathematical setup and states the Poisson boundary and 
strip surface identities for Umegaki relative entropy, measured recovery, CMI, and the BS divergence. 
Section~\ref{sec:channel} provides the proofs of the boundary Poisson formula, the gap in the measured-relative-entropy recovery bound, 
and the vector Green surface identity. 
Section~\ref{sv:pointwise} establishes the finite-frequency interpolation lemmas and proves the pointwise 
recovery bounds at prescribed parameters.

\section{Main results}\label{sec:results}
\subsection{Setup and notation}
Let
\[
 \Phi:\BB(\HH_A)\longrightarrow\BB(\HH_B)
\]
be completely positive and trace preserving, namely a quantum channel. Denote its Hilbert--Schmidt adjoint
by $\Phi^\dagger$. Let $\rho,\sigma>0$ be density matrices on
$\HH_A$. We use $D(a\|b)=\Tr a(\log a-\log b)$ for relative entropy and all logarithms are natural.
We write $\|X\|_p$ for the Schatten $p$-norm, with $\|X\|_\infty$ denoting the operator norm.
For density matrices $a,b$, let $\DM(a\|b)$ denote the supremum of the
classical relative entropy of their outcome distributions over all finite
positive operator-valued measurements.
We restrict the output algebra to the common support of
$\Phi(\rho)$ and $\Phi(\sigma)$.  All complex powers are defined by $a^z=\exp(z\log a)$.
The rotated Petz channels and averaged recovered state are
\begin{equation}\label{eq:recovery-map}
 \mathcal R_s(X) =\sigma^{1/2+is}\Phi^\dagger\!\left( \Phi(\sigma)^{-1/2-is}X\Phi(\sigma)^{-1/2+is} \right)\sigma^{1/2-is},\qquad \tau=\int_\R\frac{\pi}{\cosh(2\pi s)+1}\mathcal R_s(\Phi(\rho))\dd s.
\end{equation}
Each $\mathcal R_s$ is trace preserving on the reduced output algebra and
satisfies $\mathcal R_s(\Phi(\sigma))=\sigma$. In particular, $\mathcal R_s(\Phi(\rho))$ and
$\tau$ are faithful density matrices. The averaged recovery channel depends
only on $\sigma$ and $\Phi$. Define
\begin{equation}\label{eq:Gamma}
 \Gamma(z)
 =\sigma^z
   \Phi^\dagger\!\left(\Phi(\sigma)^{-z}\Phi(\rho)^z\right)
   \rho^{1/2-z},\qquad 0\leq\Re z\leq\frac12.
\end{equation}
This is a bounded analytic matrix-valued function on the strip.
For a positive definite operator $\omega$ on $\HH_A$, define
\begin{equation}\label{eq:defect-norms}
 \varepsilon_0(s)=1-\HS{\Gamma(-is)}^2,
 \qquad
 \varepsilon_{1,\omega}(s)
   =\Tr(\omega\mathcal R_s(\Phi(\rho)))-\HS{\omega^{1/2+is}\Gamma(1/2+is)}^2,
\end{equation}
and write $\varepsilon_1=\varepsilon_{1,\one}$.
To see their nonnegativity, set
\begin{equation}\label{eq:UB}
 U_s=\Phi(\sigma)^{is}\Phi(\rho)^{-is},\qquad
 B_s=\Phi(\sigma)^{-1/2-is}\Phi(\rho)^{1/2+is}.
\end{equation}
Substituting the definitions of $\Gamma(z)$ and $\mathcal R_s$ gives
\begin{equation}\label{eq:eps0}
 \varepsilon_0(s) =\Tr\rho\bigl[\one_A-\Phi^\dagger(U_s)^*\Phi^\dagger(U_s)\bigr], \qquad \varepsilon_{1,\omega}(s) =\Tr\!\left\{\omega\sigma^{1/2+is} \bigl[\Phi^\dagger(B_sB_s^*)-\Phi^\dagger(B_s)\Phi^\dagger(B_s)^*\bigr] \sigma^{1/2-is}\right\}.
\end{equation}
Both are nonnegative by the Schwarz inequality for $\Phi^\dagger$.
The same Schwarz defect completes the product $\Gamma(1/2+is)\Gamma(1/2+is)^*$ to the rotated
Petz recovered state:
\begin{equation}\label{eq:recovery-schwarz-completion}
\begin{aligned}
 \mathcal R_s(\Phi(\rho))
 &=\Gamma(1/2+is)\Gamma(1/2+is)^* + \sigma^{1/2+is}
 \bigl[\Phi^\dagger(B_sB_s^*)
       -\Phi^\dagger(B_s)\Phi^\dagger(B_s)^*\bigr]
 \sigma^{1/2-is}.
\end{aligned}
\end{equation}

\begin{equation}\label{sv:Gamma-repeat}
 f(z)=\Tr\rho^{1/2}\Gamma(z),\qquad
 \Delta=D(\rho\|\sigma)-D(\Phi(\rho)\|\Phi(\sigma)).
\end{equation}
Here $0<x<1/2$ and $z=x+it$.

\subsection{Entropy-loss and measured-recovery identities}\label{sv:surface}
\begin{theorem}[Poisson and surface identities for entropy loss]\label{thm:loss}\label{sv:U-surface-thm}
Under the setup,
\begin{equation}\label{eq:loss-expanded}
 \Delta =\pi\int_\R\left[
 \frac{\HS{\rho^{1/2}-\Gamma(-is)}^2+\varepsilon_0(s)}{\cosh(2\pi s)-1}
 +\frac{\HS{\rho^{1/2}-\Gamma(1/2+is)}^2+\varepsilon_1(s)}{\cosh(2\pi s)+1}
 \right]\dd s.
\end{equation}
\begin{equation}\label{sv:U-surface}
\begin{aligned}
 \Delta
 &=2\int_0^{1/2}\!\int_\R
 \frac{\sin(2\pi x)}{\cosh(2\pi t)-\cos(2\pi x)}
 \HS{\Gamma'(x+it)}^2\dd t\dd x + \pi\int_\R\left[
 \frac{\varepsilon_0(t)}{\cosh(2\pi t)-1}
 +\frac{\varepsilon_1(t)}{\cosh(2\pi t)+1}\right]\dd t.
\end{aligned}
\end{equation}
The surface term and both boundary defect terms are nonnegative.
The integrals converge, and the apparent
singularity at $s=0$ in the first boundary integral is removable; no
principal value is required.
\end{theorem}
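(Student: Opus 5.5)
The plan is to reduce everything to the scalar analytic function $f(z)=\Tr\rho^{1/2}\Gamma(z)$ of \eqref{sv:Gamma-repeat}, show that $\Delta=-f'(0)$, and then read off $f'(0)$ from a Poisson representation of the harmonic function $g=1-\Re f$ on the strip $0\le\Re z\le 1/2$.

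\textbf{Step 1: boundary values.} At $z=0$ we have $\Gamma(0)=\Phi^\dagger(\one)\rho^{1/2}=\rho^{1/2}$, since $\Phi$ is trace preserving. Hence $f(0)=1$.

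Differentiating \eqref{eq:Gamma} at $z=0$ gives
\[
\Gamma'(0)=\log\sigma\,\rho^{1/2}+\Phi^\dagger\bigl(\log\Phi(\rho)-\log\Phi(\sigma)\bigr)\rho^{1/2}-\rho^{1/2}\log\rho.
\]
Taking $\Tr\rho^{1/2}(\cdot)$ and using duality gives
\[
f'(0)=\Tr\rho\log\sigma+D(\Phi(\rho)\|\Phi(\sigma))-\Tr\rho\log\rho=-\Delta.
\]
So $\Delta=\partial_x g(0)$, where $g(x+it)=1-\Re f(x+it)$ is harmonic, bounded, and satisfies $g(0)=0$.

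Expanding the square, $\HS{\rho^{1/2}-\Gamma}^2=1-2\Re f+\HS{\Gamma}^2$. Together with \eqref{eq:defect-norms} this gives the lower-boundary value
\[
2g(-is)=\HS{\rho^{1/2}-\Gamma(-is)}^2+\varepsilon_0(s).
\]
On the upper boundary, \eqref{eq:recovery-schwarz-completion} with $\Tr\mathcal R_s(\Phi(\rho))=1$ gives $\HS{\Gamma(1/2+is)}^2+\varepsilon_1(s)=1$. Therefore
\[
2g(1/2+is)=\HS{\rho^{1/2}-\Gamma(1/2+is)}^2+\varepsilon_1(s).
\]
Both boundary functions are nonnegative, by the Schwarz inequality as in \eqref{eq:eps0}.

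\textbf{Step 2: Poisson formula and boundary derivative.} For the strip of width $1/2$, map to the unit disk via $e^{2\pi i z}$-type conformal maps to obtain the Poisson kernels. These are
\[
P_0(x,t)=\frac{\sin 2\pi x}{\cosh 2\pi t-\cos2\pi x},\qquad P_{1/2}(x,t)=\frac{\sin 2\pi x}{\cosh 2\pi t+\cos2\pi x},
\]
up to the normalization $1$, for the lower and upper edges respectively. The bounded harmonic $g$ equals its Poisson integral by a Phragm\'en--Lindel\"of argument, using boundedness of $\Gamma$ on the strip.

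We then compute $\partial_x g(0)=\lim_{x\downarrow0}\bigl(g(x)-g(0)\bigr)/x$. On the upper edge we differentiate under the integral directly, producing the kernel $2\pi/(\cosh2\pi s+1)$. On the lower edge we subtract $g(0)=0$ and pass to the limit, producing $2\pi/(\cosh2\pi s-1)$. Multiplying by the factor $\tfrac12$ from $2g$ yields \eqref{eq:loss-expanded}.

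\textbf{Step 3: the surface identity.} Set $h(z)=\HS{\rho^{1/2}-\Gamma(z)}^2$. Since $\Gamma$ is analytic, $h$ is subharmonic with $\nabla^2 h=4\HS{\Gamma'}^2$. Apply Green's second identity on the strip to $h$, using the Green function with pole at $x\downarrow 0$, normalized by $x$. Its normal derivatives on the edges are exactly the boundary kernels above, and its interior limit is the kernel $\tfrac{1}{\pi}\,P_0(x,t)$. Letting the pole tend to $0$, where $h(0)=0$ and $\nabla h(0)=0$, expresses the boundary $h$-terms of \eqref{eq:loss-expanded} as $2\iint P_0\HS{\Gamma'}^2$. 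Substituting this leaves only the $\varepsilon$ terms and gives \eqref{sv:U-surface}.

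\textbf{Main obstacle.} The main technical point is justifying the boundary differentiation and the removable singularity at $s=0$. Since $g\ge0$ on the lower edge and $g(0)=0$, the point $0$ is a boundary minimum. Smoothness of $\Gamma$ (finite dimensions and $\rho,\sigma>0$) gives $g(-is)=O(s^2)$, which cancels the $1/(\cosh2\pi s-1)\sim 1/(2\pi^2s^2)$ singularity. Monotone or dominated convergence then handles the limit $x\downarrow0$; the same argument applies to $h$ in the Green step. Exponential decay of the kernels together with boundedness of $g$ and $h$ handles the ends $|t|\to\infty$.
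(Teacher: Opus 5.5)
Your proposal is correct and follows the paper's proof essentially step for step. It uses the same overlap $f=\Tr\rho^{1/2}\Gamma$ with $f'(0)=-\Delta$, the same polarization of $2(1-\Re f)$ on the two edges, and the differentiated strip Poisson formula with $h(is)=O(s^2)$. For the surface identity it uses Green's identity with the strip Green function divided by the pole position; applying it to $h$ with boundary terms is equivalent to the paper's $H_q-q$ formulation.

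Two small repairs are needed. First, with the normalization you describe, $G(r,\cdot)/r\to P_0$, not $P_0/\pi$; with $P_0/\pi$ you would get $\frac{2}{\pi}\iint P_0\HS{\Gamma'}^2$, although the final constant $2\iint P_0\HS{\Gamma'}^2$ you state is the right one. Second, the interior limit $r\downarrow0$ is not covered by the Poisson-kernel domination argument, because of the moving logarithmic pole of $G$. The paper handles it with Scheff\'e's lemma, using that $\iint G/r$ and $\iint P_0$ both equal $1/4$ in the limit.
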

Discarding the first term and the nonnegative defect $\varepsilon_1(s)$
in \eqref{eq:loss-expanded} immediately yields the averaged $2$-norm recovery bound. The
measured-relative-entropy recovery bound follows similarly below.
\begin{theorem}[Test-operator and measured-recovery identities]
\label{thm:measured}
For $\omega>0$, define the nonnegative boundary functional
\begin{equation}\label{eq:P-functional}
 \Pcal(\omega)=\pi\int_\R\left[
 \frac{\HS{\rho^{1/2}-\omega^{-is}\Gamma(-is)}^2+\varepsilon_0(s)}{\cosh(2\pi s)-1}
 +\frac{\HS{\rho^{1/2}-\omega^{1/2+is}\Gamma(1/2+is)}^2+\varepsilon_{1,\omega}(s)}{\cosh(2\pi s)+1}
 \right]\dd s.
\end{equation}
Then, for every $\omega>0$,
\begin{equation}\label{eq:master}
 D(\rho\|\sigma)-D(\Phi(\rho)\|\Phi(\sigma))=\Tr(\rho\log\omega)+1-\Tr(\tau\omega)+\Pcal(\omega).
\end{equation}
From the variational characterization of measured relative entropy~\cite{SutterBertaTomamichel2017},
\begin{equation}\label{eq:measured-variational-main}
 \DM(\rho\|\tau)
 =\sup_{\omega>0}\{\Tr(\rho\log\omega)+1-\Tr(\tau\omega)\}.
\end{equation}
Consequently,
\begin{equation}\label{eq:measured-equality}
 D(\rho\|\sigma)-D(\Phi(\rho)\|\Phi(\sigma))-\DM(\rho\|\tau)
   =\inf_{\omega>0}\Pcal(\omega)\geq0.
\end{equation}
\end{theorem}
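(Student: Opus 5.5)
The plan is to rerun the Poisson-boundary argument of Theorem~\ref{thm:loss} with a tilted scalar function,
\[
 f_\omega(z)=\Tr\bigl(\rho^{1/2}\omega^{z}\Gamma(z)\bigr),\qquad 0\le\Re z\le \tfrac12 .
\]
Since $\omega^z$ is bounded and analytic on the strip, $f_\omega$ is bounded, analytic, and zero-free near $z=0$, where $f_\omega(0)=\Tr\rho=1$. I will apply the Poisson integral formula for the strip to $\Re\log f_\omega$, or equivalently to $\Re f_\omega$ together with the elementary quadratic identities used before, and differentiate at $x=0$. The derivative is
\[
 f_\omega'(0)=\Tr\rho\log\omega+\Tr\rho\log\sigma-\Tr\Phi(\rho)\log\Phi(\sigma)+\Tr\Phi(\rho)\log\Phi(\rho)-\Tr\rho\log\rho .
\]
Its real part is therefore $\Tr\rho\log\omega-\Delta$. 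So the left side of \eqref{eq:master} appears exactly as in the $\omega=\one$ case, shifted by $\Tr(\rho\log\omega)$.

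The two boundaries are handled by expanding the squares in $\mathcal P(\omega)$.

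On the lower boundary, $\omega^{-is}$ is unitary. This gives $\HS{\omega^{-is}\Gamma(-is)}^2=\HS{\Gamma(-is)}^2=1-\varepsilon_0(s)$, so
\[
 \HS{\rho^{1/2}-\omega^{-is}\Gamma(-is)}^2+\varepsilon_0(s)=2-2\Re f_\omega(-is).
\]
This is the same structure as before, with $f$ replaced by $f_\omega$.

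On the upper boundary, the Schwarz completion \eqref{eq:recovery-schwarz-completion} multiplied by $\omega$ gives
\[
 \HS{\omega^{1/2+is}\Gamma(1/2+is)}^2+\varepsilon_{1,\omega}(s)=\Tr\omega\mathcal R_s(\Phi(\rho)).
\]
The bracket therefore equals
\[
 1+\Tr\omega\mathcal R_s(\Phi(\rho))-2\Re f_\omega(1/2+is).
\]
Integrating against $\pi/(\cosh 2\pi s+1)$, which has total mass $1$, produces $1+\Tr(\tau\omega)-2\int\Re f_\omega\,(\cdots)$. The lower-boundary kernel $\pi/(\cosh 2\pi s-1)$ is the normal derivative of the Poisson kernel at the corner point $z=0$. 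With the subtraction $1-\Re f_\omega(-is)$ it integrates to a finite quantity; the numerator vanishes to second order at $s=0$ because $f_\omega(0)=1$ and $|f_\omega(-is)|\le1$.

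Combining these, the Poisson identity $\Re f_\omega'(0)=$ (boundary derivative integral) rearranges to \eqref{eq:master}. Concretely, I will take the identity already established in the proof of Theorem~\ref{thm:loss} for an arbitrary bounded analytic $g$ on the strip with $g(0)=1$. That identity states
\[
 -\Re g'(0)=\pi\int\frac{2-2\Re g(-is)}{\cosh 2\pi s-1}\,ds+\pi\int\frac{1-2\Re g(1/2+is)}{\cosh2\pi s+1}\,ds .
\]
I will check that the constant terms match: the $1$ from the upper kernel's mass versus the $+1$ in \eqref{eq:master}, and the $\Tr\tau\omega$ term. Then I substitute $g=f_\omega$.

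The main obstacle is justifying that general formula: the boundary behaviour at the corner $z=0$, removability of the singularity, and the use of $\Re$ rather than $\log$. I would first map the strip to the disk, or use the explicit strip Poisson kernel $\sin(2\pi x)/(\cosh 2\pi t\mp\cos 2\pi x)$, and take $x\to0$ using dominated convergence with $|g|\le1$ on the lower line.

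Nonnegativity of $\mathcal P(\omega)$ is immediate, since every term is a square norm or a Schwarz defect. Finally, \eqref{eq:measured-equality} follows by taking the supremum over $\omega$ in \eqref{eq:master} and invoking \eqref{eq:measured-variational-main}.
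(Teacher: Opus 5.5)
Your route is the paper's: tilt by $\omega^z$, apply the boundary-derivative formula of Lemma~\ref{lem:Poisson-derivative} to $1-\Re f_\omega$, expand both boundary brackets, use the unit mass of $\pi/(\cosh 2\pi s+1)$, and finish with the variational formula. Several pieces are correct: your derivative $f_\omega'(0)=\Tr(\rho\log\omega)-\Delta$, both boundary expansions, and the final sup/inf step. Your second-order vanishing argument is also valid and substitutes for the paper's appeal to the reality of $f_\omega'(0)$: $1-\Re f_\omega(-is)$ is a nonnegative real-analytic function vanishing at $s=0$, because $|f_\omega(-is)|\le1$.

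The step that fails as written is the general identity you propose to quote. Applying Lemma~\ref{lem:Poisson-derivative} to $h=1-\Re g$ puts the factor $2\pi$ on both kernels. The upper integrand must therefore be $\pi[2-2\Re g(1/2+is)]/(\cosh 2\pi s+1)$, not $\pi[1-2\Re g(1/2+is)]/(\cosh 2\pi s+1)$. Your version is already false for $g\equiv1$, where it reads $0=-1$.

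This matters for the conclusion. Substituting your correct bracket $1+\Tr\omega\mathcal R_s(\Phi(\rho))-2\Re f_\omega(1/2+is)$ into your formula gives $\Delta=\Tr(\rho\log\omega)-\Tr(\tau\omega)+\Pcal(\omega)$. That is missing the $+1$ of \eqref{eq:master}, and at $\omega=\one$ it contradicts Theorem~\ref{thm:loss}. The constant check you deferred would not pass.

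With the correct constant, $2-2\Re f_\omega(1/2+is)$ equals your bracket plus $1-\Tr\omega\mathcal R_s(\Phi(\rho))$. Integrating against the unit-mass kernel then gives $\Pcal(\omega)+1-\Tr(\tau\omega)$, which is \eqref{eq:master}.

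Drop the alternative of applying the Poisson formula to $\Re\log f_\omega$; it is not equivalent to using $\Re f_\omega$. The function $f_\omega$ need not be zero-free on the strip. Even where the logarithm exists, its boundary data $\log|f_\omega|$ do not produce the squared-norm functional $\Pcal(\omega)$.
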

The known inequality $\DM(\rho\|\tau)\leq D(\rho\|\sigma)-D(\Phi(\rho)\|\Phi(\sigma))$ follows from the nonnegativity of this gap
\cite{SutterBertaTomamichel2017}; our equality gives its exact value.

\subsection{Conditional mutual information}\label{sec:CMI}
For a faithful tripartite state $\rho=\rho_{XYZ}$, take
$\Phi=\Tr_Z$ and $\sigma=\rho_X\otimes\rho_{YZ}$. Its conditional mutual
information is the relative-entropy loss
\[
 I(X:Z\mid Y)_\rho
 =D(\rho\|\rho_X\otimes\rho_{YZ})
  -D(\rho_{XY}\|\rho_X\otimes\rho_Y).
\]
Its nonnegativity is strong subadditivity~\cite{lieb1973proof}; vanishing
CMI characterizes quantum Markov states~\cite{HaydenJozsaPetzWinter2004},
and small CMI controls approximate recoverability~\cite{Fawzi2014}.

\begin{corollary}[CMI boundary, surface, and measured-recovery equalities]\label{cor:CMI}\label{sv:CMI-surface-thm}
Let
\begin{equation}\label{eq:CMI-Gamma}
 \Gamma^{XYZ}(z)
 =\rho_{YZ}^z\rho_Y^{-z}\rho_{XY}^z\rho^{1/2-z},
\end{equation}
and
\begin{equation}\label{eq:CMI-P}
 \Pcal_{XYZ}(\omega)=\pi\int_\R\left[
 \frac{\HS{\rho^{1/2}-\omega^{-is}\Gamma^{XYZ}(-is)}^2}{\cosh(2\pi s)-1}
 +\frac{\HS{\rho^{1/2}-\omega^{1/2+is}\Gamma^{XYZ}(1/2+is)}^2}{\cosh(2\pi s)+1}
 \right]\dd s.
\end{equation}
\begin{equation}\label{eq:CMI-tau}
 \tau=\int_\R\frac{\pi}{\cosh(2\pi s)+1}
         \rho_{YZ}^{1/2+is}\rho_Y^{-1/2-is}\rho_{XY}
         \rho_Y^{-1/2+is}\rho_{YZ}^{1/2-is}\dd s.
\end{equation}
Then
\begin{equation}\label{eq:CMI-main}
 I(X:Z\mid Y)_\rho=\Pcal_{XYZ}(\one),\qquad
 I(X:Z\mid Y)_\rho-\DM(\rho\|\tau)
   =\inf_{\omega>0}\Pcal_{XYZ}(\omega).
\end{equation}
Both Schwarz defects vanish in this specialization. The surface representation is
\begin{equation}\label{sv:CMI-surface}
 I(X:Z\mid Y)_\rho=2\int_0^{1/2}\!\int_\R
           \frac{\sin(2\pi x)}{\cosh(2\pi t)-\cos(2\pi x)}\|{\Gamma^{XYZ}}'(x+it)\|_2^2\dd t\dd x.
\end{equation}
The derivative, with all noncommuting factors in their original order, is
\begin{equation}\label{sv:CMI-derivative}
 {\Gamma^{XYZ}}'(z) =\rho_{YZ}^z(\log\rho_{YZ}-\log\rho_Y) \rho_Y^{-z}\rho_{XY}^z\rho^{1/2-z}+\rho_{YZ}^z\rho_Y^{-z}\rho_{XY}^z (\log\rho_{XY}-\log\rho)\rho^{1/2-z}.
\end{equation}
In particular,
\begin{equation}\label{sv:CMI-origin}
 {\Gamma^{XYZ}}'(0)=-(\log\rho-\log\rho_{XY}-\log\rho_{YZ}
                           +\log\rho_Y)\rho^{1/2}.
\end{equation}
\end{corollary}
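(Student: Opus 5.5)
This corollary is a direct specialization of Theorems~\ref{thm:loss} and~\ref{thm:measured}, so the plan is to plug in $\Phi=\Tr_Z$, $\sigma=\rho_X\otimes\rho_{YZ}$, identify each object, and check that the Schwarz defects vanish.

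\textbf{Step 1: the channel data.} We have $\Phi^\dagger(Y)=Y\otimes\one_Z$ on $\BB(\HH_{XY})$ and $\Phi(\sigma)=\rho_X\otimes\rho_Y$. Because $\rho_X$ commutes with everything acting on $YZ$, the factors $\sigma^z=\rho_X^z\otimes\rho_{YZ}^z$ and $\Phi(\sigma)^{-z}=\rho_X^{-z}\otimes\rho_Y^{-z}$ combine as follows:
\[
\sigma^z\,\Phi^\dagger\!\bigl(\Phi(\sigma)^{-z}\rho_{XY}^z\bigr)
=\rho_X^{z}\rho_X^{-z}\otimes\bigl(\rho_{YZ}^z\rho_Y^{-z}\bigr)\rho_{XY}^z .
\]
The $\rho_X$ powers cancel. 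Hence $\Gamma(z)=\Gamma^{XYZ}(z)$, as in \eqref{eq:CMI-Gamma}. The same cancellation reduces $\mathcal R_s(\rho_{XY})$ to the integrand of \eqref{eq:CMI-tau}. Faithfulness of $\rho$ makes all marginals faithful, so the support restriction on the output is vacuous.

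\textbf{Step 2: the defects vanish.} Here $\Phi^\dagger$ is a unital $*$-homomorphism $Y\mapsto Y\otimes\one$, so $\Phi^\dagger(U_s)^*\Phi^\dagger(U_s)=\Phi^\dagger(U_s^*U_s)=\one$. Also $U_s$ is unitary, being a product of two unitaries. Likewise $\Phi^\dagger(B_sB_s^*)=\Phi^\dagger(B_s)\Phi^\dagger(B_s)^*$. By \eqref{eq:eps0}, $\varepsilon_0\equiv0$ and $\varepsilon_{1,\omega}\equiv0$ for every $\omega>0$.

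\textbf{Step 3: substitute into the theorems.} With the defects gone, \eqref{eq:loss-expanded} becomes $I(X:Z\mid Y)_\rho=\Pcal_{XYZ}(\one)$. Formula \eqref{eq:measured-equality} gives the measured-recovery equality, since $\Pcal(\omega)=\Pcal_{XYZ}(\omega)$ and $\tau$ matches. The surface identity \eqref{sv:U-surface} loses its boundary defect terms, which yields \eqref{sv:CMI-surface}.

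\textbf{Step 4: the derivative.} Differentiating $a^z=e^{z\log a}$ gives $\tfrac{d}{dz}a^z=(\log a)a^z=a^z\log a$. Apply the product rule to the four factors $\rho_{YZ}^z$, $\rho_Y^{-z}$, $\rho_{XY}^z$, $\rho^{1/2-z}$.
\begin{itemize}
\item The terms from $\rho_{YZ}^z$ and $\rho_Y^{-z}$ contribute $\rho_{YZ}^z\log\rho_{YZ}\,\rho_Y^{-z}$ and $-\rho_{YZ}^z\log\rho_Y\,\rho_Y^{-z}$. These combine into $\rho_{YZ}^z(\log\rho_{YZ}-\log\rho_Y)\rho_Y^{-z}$, placing the logarithm between the first two factors.
\item The $\rho_{XY}^z$ term contributes $\rho_{XY}^z\log\rho_{XY}$. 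The $\rho^{1/2-z}$ term contributes $-\log\rho\,\rho^{1/2-z}$. These combine to place $(\log\rho_{XY}-\log\rho)$ between the third and fourth factors.
\end{itemize}
Together these give \eqref{sv:CMI-derivative}. At $z=0$ all powers are the identity except $\rho^{1/2}$, so
\[
{\Gamma^{XYZ}}'(0)=(\log\rho_{YZ}-\log\rho_Y+\log\rho_{XY}-\log\rho)\rho^{1/2},
\]
which is \eqref{sv:CMI-origin}.

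\textbf{Main obstacle.} The only delicate point is Step 1: the $\rho_X$ powers must cancel exactly, including their ordering relative to the $\Phi^\dagger$ image. This works because $\rho_X^{\pm z}$ sits in a tensor factor commuting with $\rho_{YZ}$ and $\rho_Y$, and the two $\rho_X$ powers are adjacent. Everything else is bookkeeping on top of Theorems~\ref{thm:loss} and~\ref{thm:measured}.
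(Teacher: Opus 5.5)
Your proposal is correct and follows the paper's proof essentially step for step. You specialize to $\Phi=\Tr_Z$ and $\sigma=\rho_X\otimes\rho_{YZ}$, cancel the adjacent $\rho_X$ powers, use that $\Phi^\dagger(A)=A\otimes\one_Z$ is a $*$-homomorphism to kill both defects, and then invoke Theorems~\ref{thm:loss} and~\ref{thm:measured} together with the product rule. The only differences are cosmetic. You read the surface identity off \eqref{sv:U-surface} with vanishing defects instead of re-applying Lemma~\ref{sv:greenlemma} to $\Gamma^{XYZ}$; this is equivalent, since that lemma is how \eqref{sv:U-surface} was proved. You also take the identification of the loss with $I(X:Z\mid Y)_\rho$ from Section~\ref{sec:CMI}, whereas the paper re-verifies it through $S(\rho_{XY})+S(\rho_{YZ})-S(\rho_Y)-S(\rho)$.
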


The boundary equality in \eqref{eq:CMI-main} is explicitly
\begin{align}
 I(X:Z\mid Y)_\rho
 &=\pi\!\int_\R\!
 \frac{\displaystyle\HS{\rho^{1/2}
   -\rho_{YZ}^{-is}\rho_Y^{is}\rho_{XY}^{-is}\rho^{1/2+is}}^2}
      {\cosh(2\pi s)-1}\dd s +\pi\!\int_\R\!
 \frac{\displaystyle\HS{\rho^{1/2}
   -\rho_{YZ}^{1/2+is}\rho_Y^{-1/2-is}
    \rho_{XY}^{1/2+is}\rho^{-is}}^2}
      {\cosh(2\pi s)+1}\dd s.
      \label{eq:CMI-expanded}
\end{align}
The measured-recovery equality takes the infimum over $\omega>0$ of
these two squared errors with the additional left factors $\omega^{-is}$
and $\omega^{1/2+is}$ in the respective second terms inside the norms.
It is not obtained by subtracting an unmodified term of
\eqref{eq:CMI-expanded} from a measured relative entropy.

The proof is given in Section~\ref{sec:cmi-proof}.

\subsection{Belavkin--Staszewski entropy-loss identities}\label{sec:bs-results}
\label{sec:bs-amplitudes}\label{sec:bs-integral}
We retain the faithful finite-dimensional setup above. The BS boundary
identity and fidelity remainder are proved in Appendix~\ref{sec:bs-poisson-bures};
the surface identity follows from the boundary identity in
Section~\ref{sv:surface-proof}. The Belavkin--Staszewski (BS) relative entropy is
\begin{equation}\label{eq:bs-definition}
 D_{\mathrm{BS}}(\rho\|\sigma)
 =\Tr\rho\log(\rho^{1/2}\sigma^{-1}\rho^{1/2})
 =\Tr\sigma\bigl(\sigma^{-1/2}\rho\sigma^{-1/2}\bigr)
       \log\bigl(\sigma^{-1/2}\rho\sigma^{-1/2}\bigr).
\end{equation}
These are equivalent expressions for the maximal $f$-divergence associated
with $f(x)=x\log x$; see, for example, Ref.~\cite{BluhmCapel2020}.
We write
\begin{equation}\label{eq:bs-loss}
 \Delta_{\mathrm{BS}}
 :=D_{\mathrm{BS}}(\rho\|\sigma)
   -D_{\mathrm{BS}}\bigl(\Phi(\rho)\|\Phi(\sigma)\bigr).
\end{equation}
The identity below proves $\Delta_{\mathrm{BS}}\geq0$ directly and
retains its full two-boundary remainder.

Define the positive likelihood-ratio operators
\begin{equation}\label{eq:bs-likelihood-ratios}
 \ell_{\mathrm{BS}}:=\sigma^{-1/2}\rho\sigma^{-1/2},
 \qquad
 \widehat\ell_{\mathrm{BS}}
 :=\Phi(\sigma)^{-1/2}\Phi(\rho)\Phi(\sigma)^{-1/2}.
\end{equation}
Define the analytic input-space family
\begin{equation}\label{eq:bs-Gamma}
 \Gamma_{\mathrm{BS}}(z)
 :=\ell_{\mathrm{BS}}^{1/2-z}\sigma^{1/2}
 \Phi^\dagger\!\left(\Phi(\sigma)^{-1/2}
 \widehat\ell_{\mathrm{BS}}^{\,z}\Phi(\sigma)^{1/2}\right).
\end{equation}
Its value at zero is a factor of the original state:
\begin{equation}\label{eq:bs-reference-factorization}
 \Gamma_{\mathrm{BS}}(0)=\ell_{\mathrm{BS}}^{1/2}\sigma^{1/2},
 \qquad \Gamma_{\mathrm{BS}}(0)^\dagger\Gamma_{\mathrm{BS}}(0)=\rho.
\end{equation}
The boundary defects
\begin{equation}\label{eq:bs-defects}
 \delta_{\mathrm{BS},j}(s)
 :=1-\HS{\Gamma_{\mathrm{BS}}(j/2+is)}^2,
 \qquad j\in\{0,1\},
\end{equation}
are nonnegative by Lemma~\ref{lem:bs-contraction}.
\begin{theorem}[BS Poisson and surface identities]\label{thm:bs-poisson}\label{sv:BS-surface-thm}
Under the stated faithful finite-dimensional setup,
\begin{equation}\label{eq:bs-poisson}
 \Delta_{\mathrm{BS}}
 =\pi\int_{\R}\Bigg[
 \frac{\HS{\Gamma_{\mathrm{BS}}(0)-\Gamma_{\mathrm{BS}}(is)}^2
                   +\delta_{\mathrm{BS},0}(s)}
             {\cosh(2\pi s)-1}
 +\frac{\HS{\Gamma_{\mathrm{BS}}(0)-\Gamma_{\mathrm{BS}}(1/2+is)}^2
                   +\delta_{\mathrm{BS},1}(s)}
             {\cosh(2\pi s)+1}
 \Bigg]\dd s.
\end{equation}
\begin{equation}\label{sv:BS-surface}
\begin{aligned}
 \Delta_{\mathrm{BS}}
 &=2\int_0^{1/2}\!\int_\R
 \frac{\sin(2\pi x)}{\cosh(2\pi t)-\cos(2\pi x)}
 \HS{\Gamma_{\mathrm{BS}}'(x+it)}^2\dd t\dd x+\pi\int_\R\left[
 \frac{\delta_{\mathrm{BS},0}(t)}{\cosh(2\pi t)-1}
 +\frac{\delta_{\mathrm{BS},1}(t)}{\cosh(2\pi t)+1}\right]\dd t.
\end{aligned}
\end{equation}
Every displayed summand is nonnegative. All integrals converge
absolutely, and the apparent singularity at $s=0$ in the first boundary integrand
is removable. No principal-value prescription is required.
\end{theorem}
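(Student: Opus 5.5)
We prove the BS identities by running the argument used for Theorem~\ref{thm:loss}, with $\Gamma_{\mathrm{BS}}$ in place of $\Gamma$. Set $g(z)=\Tr\Gamma_{\mathrm{BS}}(0)^\dagger\Gamma_{\mathrm{BS}}(z)$, which is analytic and bounded on the strip $0\le\Re z\le 1/2$. Let $\ell=\ell_{\mathrm{BS}}$ and $\widehat\ell=\widehat\ell_{\mathrm{BS}}$. Using $\Gamma_{\mathrm{BS}}(0)^\dagger\ell^{1/2-z}\sigma^{1/2}=\sigma^{1/2}\ell^{1-z}\sigma^{1/2}$ and the duality of $\Phi^\dagger$, we get
\[
g(z)=\Tr\Phi\bigl(\sigma^{1/2}\ell^{1-z}\sigma^{1/2}\bigr)\Phi(\sigma)^{-1/2}\widehat\ell^{\,z}\Phi(\sigma)^{1/2}.
\]
This gives $g(0)=\Tr\Phi(\rho)=1$. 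Differentiating, the derivative at $z=0$ is
\[
g'(0)=-\Tr\sigma^{1/2}\ell\log\ell\,\sigma^{1/2}+\Tr\Phi(\sigma)^{1/2}\widehat\ell\,\Phi(\sigma)^{1/2}\cdot\Phi(\sigma)^{-1/2}\log\widehat\ell\,\Phi(\sigma)^{1/2},
\]
where we use $\Phi(\sigma^{1/2}\ell\sigma^{1/2})=\Phi(\rho)=\Phi(\sigma)^{1/2}\widehat\ell\,\Phi(\sigma)^{1/2}$. By the second expression in \eqref{eq:bs-definition}, this equals $-D_{\mathrm{BS}}(\rho\|\sigma)+D_{\mathrm{BS}}(\Phi(\rho)\|\Phi(\sigma))=-\Delta_{\mathrm{BS}}$. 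Hence $\Delta_{\mathrm{BS}}=-\Re g'(0)$.

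Next we apply the Poisson formula for the strip to $u=\Re g$, with Poisson kernels on the lines $\Re z=0$ and $\Re z=1/2$. We take the normal derivative at the boundary point $z=0$ and subtract the constant $1=u(0)$, which is harmonic. This gives
\[
-\partial_x u(0)=\pi\int\frac{1-\Re g(is)}{\cosh 2\pi s-1}\dd s+\pi\int\frac{1-\Re g(1/2+is)}{\cosh 2\pi s+1}\dd s,
\]
with the same kernels as in \eqref{eq:loss-expanded}; we reuse the kernel computation from the proof of Theorem~\ref{thm:loss}. Justifying differentiation under the integral requires $1-\Re g(is)=O(s^2)$ near $s=0$. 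This holds because $g$ is smooth at $0$ and $g(0)=1$ is real, and $1-\Re g(is)\ge0$, so it has a critical point at $s=0$. The same fact shows the singularity at $s=0$ is removable. The integrands stay bounded as $|s|\to\infty$, and the kernels decay exponentially, so the integrals converge.

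Now we split the integrands. We have $\HS{\Gamma_{\mathrm{BS}}(0)}^2=\Tr\rho=1$, so
\[
1-\Re g(j/2+is)=\tfrac12\HS{\Gamma_{\mathrm{BS}}(0)-\Gamma_{\mathrm{BS}}(j/2+is)}^2+\tfrac12\delta_{\mathrm{BS},j}(s).
\]
This gives \eqref{eq:bs-poisson} up to an overall factor of $2$, which must be fixed to agree with the normalization of the kernel in Theorem~\ref{thm:loss}. Matching that normalization exactly is the main bookkeeping point. Nonnegativity of $\delta_{\mathrm{BS},j}$ is Lemma~\ref{lem:bs-contraction}. The underlying reason is that $\Gamma_{\mathrm{BS}}(is)$ and $\Gamma_{\mathrm{BS}}(1/2+is)$ are built from contractions together with Kadison--Schwarz for $\Phi^\dagger$, and that lemma can be assumed.

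For the surface identity \eqref{sv:BS-surface}, apply Green's identity on the strip to $h=\HS{\Gamma_{\mathrm{BS}}(z)-\Gamma_{\mathrm{BS}}(0)}^2$. This function is subharmonic with $\Delta_{\mathrm{Lap}}h=4\HS{\Gamma_{\mathrm{BS}}'}^2$. Pair it with the Green function of the strip having its pole at the boundary point $0$, namely the harmonic function $\sin(2\pi x)/(\cosh 2\pi t-\cos 2\pi x)$ (up to a constant). Because $0$ lies on the boundary, the Green function is singular there, and handling this is the main obstacle. We excise a small half-disc around $0$ and use $h=O(|z|^2)$, which makes the contribution from the small arc vanish in the limit. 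The resulting boundary terms are exactly the first-term integrals in \eqref{eq:bs-poisson}, with the defects left over, which yields \eqref{sv:BS-surface}. Each summand is nonnegative: the surface term is a positive kernel times a square, and the defects are nonnegative by the lemma.
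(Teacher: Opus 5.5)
Your boundary argument follows the paper's route, and most of its computations are right: $g(z)=\Tr\Phi(\sigma^{1/2}\ell^{1-z}\sigma^{1/2})\Phi(\sigma)^{-1/2}\widehat\ell^{\,z}\Phi(\sigma)^{1/2}$, $g(0)=1$, $g'(0)=-\Delta_{\mathrm{BS}}$, the $O(s^2)$ vanishing of $1-\Re g(is)$, and the polarization $2(1-\Re g(j/2+is))=\HS{\Gamma_{\mathrm{BS}}(0)-\Gamma_{\mathrm{BS}}(j/2+is)}^2+\delta_{\mathrm{BS},j}(s)$.

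The gap is the constant in your differentiated Poisson formula. You use weights $\pi/[\cosh(2\pi s)\mp1]$. Combined with the polarization factor $\tfrac12$, this makes $\Delta_{\mathrm{BS}}$ equal to half the right side of \eqref{eq:bs-poisson}. You then say the factor of two ``must be fixed'' by matching normalizations. For an exact identity the constant is the content and cannot be adjusted afterwards. Your displayed formula is in fact false. Test it on $u(z)=\Re(1-2z)=1-2\Re z$: then $-\partial_xu(0)=2$, while your right side is $\int_\R\pi/[\cosh(2\pi s)+1]\dd s=1$.

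The correct computation divides the strip Poisson kernels by $x$. Since $\sin(2\pi x)/\{x[\cosh(2\pi s)\mp\cos(2\pi x)]\}\to2\pi/[\cosh(2\pi s)\mp1]$,
\[
 \Delta_{\mathrm{BS}}=-\partial_x u(0)=2\pi\int_\R\left[\frac{1-\Re g(is)}{\cosh(2\pi s)-1}+\frac{1-\Re g(1/2+is)}{\cosh(2\pi s)+1}\right]\dd s.
\]
The $\tfrac12$ from polarization then gives exactly the prefactor $\pi$ in \eqref{eq:bs-poisson}, with no adjustment.

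For the surface identity you take a genuinely different route, and it is valid. The paper's route:
\begin{itemize}
\item it uses the Dirichlet Green function $G(r,\cdot)$ with an interior pole;
\item it compares $q=\HS{\Gamma_{\mathrm{BS}}-\Gamma_{\mathrm{BS}}(0)}^2$ with its harmonic extension $H_q$;
\item it divides by $r$ and passes $r\downarrow0$ via Scheff\'e's lemma (Appendix~\ref{app:green-boundary-limit}).
\end{itemize}

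You instead pair $q$ directly with the limiting kernel $P(x,t)=\sin(2\pi x)/[\cosh(2\pi t)-\cos(2\pi x)]=\Re\cot(\pi z)$. This is the Poisson kernel at $0$, i.e.\ $\lim_{r\downarrow0}G(r,\cdot)/r$. It is not a Green function with a boundary pole, which would vanish identically.

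Your excision works, for these reasons:
\begin{itemize}
\item Near $0$, $P=O(|z|^{-1})$ and $|\nabla P|=O(|z|^{-2})$. Since $\Gamma_{\mathrm{BS}}$ is entire, $q=O(|z|^2)$ and $|\nabla q|=O(|z|)$, so both half-arc terms are $O(\epsilon)$.
\item $P\ge0$ is locally integrable, so the area term converges monotonically.
\item The horizontal ends vanish by exponential decay of $P$ and $\nabla P$.
\item The side terms are $q\,|\partial_xP|$, with $|\partial_xP|=2\pi/[\cosh(2\pi t)\mp1]$ on $x=0$ and $x=1/2$ respectively.
\end{itemize}
Together these give $4\iint P\HS{\Gamma_{\mathrm{BS}}'}^2=2\mathscr E(\Gamma_{\mathrm{BS}})$, which is Lemma~\ref{sv:greenlemma}.

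Compared with the paper, your route avoids the harmonic extension and the $L^1$-convergence argument, at the cost of local estimates at the boundary singularity. The same value of $\partial_xP$ on $x=0$ is exactly what corrects the factor of two in your boundary step.
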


\subsection{Fidelity remainder for the BS loss}\label{sec:bs-fidelity-variational}
\label{sec:bs-reconstruction}
The upper-boundary matrix defines a positive reconstructed operator
\begin{equation}\label{eq:bs-reconstructed-operator}
 \tau_{\mathrm{BS}}(s)
 :=\Gamma_{\mathrm{BS}}(1/2+is)^\dagger\Gamma_{\mathrm{BS}}(1/2+is)
 =\Phi^\dagger\!\left(
       \Phi(\sigma)^{1/2}\widehat\ell_{\mathrm{BS}}^{1/2-is}
       \Phi(\sigma)^{-1/2}\right)\sigma\,
   \Phi^\dagger\!\left(
       \Phi(\sigma)^{-1/2}\widehat\ell_{\mathrm{BS}}^{1/2+is}
       \Phi(\sigma)^{1/2}\right).
\end{equation}
In particular,
\begin{equation}\label{eq:bs-subnormalization}
 \tau_{\mathrm{BS}}(s)\geq0,
 \qquad
 \Tr\tau_{\mathrm{BS}}(s)=1-\delta_{\mathrm{BS},1}(s)\leq1.
\end{equation}
The averaged reconstruction
\begin{equation}\label{eq:bs-averaged-reconstruction}
 \overline\tau_{\mathrm{BS}}
 :=\int_{\R}\frac{\pi}{\cosh(2\pi s)+1}\tau_{\mathrm{BS}}(s)\dd s
\end{equation}
is also positive and subnormalized.

The construction depends only on $\sigma$, $\Phi$, and the accessible
output $\Phi(\rho)$. When this output is $\Phi(\sigma)$,
$\widehat\ell_{\mathrm{BS}}=\one$ and
$\tau_{\mathrm{BS}}(s)=\overline\tau_{\mathrm{BS}}=\sigma$.
It is generally nonlinear in $\Phi(\rho)$, and is not asserted to be a
linear completely positive recovery channel. This distinction is
consistent with the fact that preservation of BS entropy is weaker than
Petz sufficiency~\cite{BluhmCapel2020}.
Neither $\tau_{\mathrm{BS}}(s)$ nor $\overline\tau_{\mathrm{BS}}$ is divided
by its trace anywhere below; they need not be faithful.

\begin{theorem}[Variational fidelity remainder for the BS loss]
\label{thm:bs-fidelity-remainder}
For every $\omega>0$ define
\begin{equation}\label{eq:bs-fid-P-functional}
 \begin{aligned}
 \mathcal P_{\mathrm{BS}}(\omega)
 &:=
 \pi\int_{\R}
 \frac{
   \left\|\Gamma_{\mathrm{BS}}(0)-\bigl[\Tr(\rho\omega)\Tr(\overline\tau_{\mathrm{BS}}\omega^{-1})\bigr]^{-is}\Gamma_{\mathrm{BS}}(is)\right\|_2^2
   +\delta_{\mathrm{BS},0}(s)
 }{
   \cosh(2\pi s)-1
 }\dd s
 \\[2pt] &\quad+
 \pi\int_{\R}
 \frac{
 \left\|
   \frac{\Gamma_{\mathrm{BS}}(0)\omega^{1/2}}{\sqrt{\Tr(\rho\omega)}}
   -
   \bigl[\Tr(\rho\omega)\Tr(\overline\tau_{\mathrm{BS}}\omega^{-1})\bigr]^{-is}
   \frac{\Gamma_{\mathrm{BS}}(1/2+is)\omega^{-1/2}}{\sqrt{\Tr(\overline\tau_{\mathrm{BS}}\omega^{-1})}}
 \right\|_2^2
 }{
   \cosh(2\pi s)+1
 }\dd s .
 \end{aligned}
\end{equation}
Then $\mathcal P_{\mathrm{BS}}(\omega)\ge0$ and
\begin{equation}
 \Delta_{\mathrm{BS}}
 =
 -\log\!\left[
   \Tr(\rho\omega)\,
   \Tr(\overline\tau_{\mathrm{BS}}\omega^{-1})
 \right]
 +\mathcal P_{\mathrm{BS}}(\omega)
 \label{eq:bs-fid-test-identity}
\end{equation}
for every $\omega>0$. Consequently,
\begin{equation}
 \Delta_{\mathrm{BS}}
 =
 -\log F(\rho,\overline\tau_{\mathrm{BS}})
 +
 \inf_{\omega>0}\mathcal P_{\mathrm{BS}}(\omega)
 \label{eq:bs-fid-remainder}
\end{equation}
with $F(a,b):=\|a^{1/2}b^{1/2}\|_1^2$ .

\end{theorem}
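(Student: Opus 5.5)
The plan is to mimic the proof of Theorem~\ref{thm:measured}, with the BS boundary identity \eqref{eq:bs-poisson} replaced by a tilted version. For fixed $\omega>0$, set $a=\Tr(\rho\omega)$, $b=\Tr(\overline\tau_{\mathrm{BS}}\omega^{-1})$ and $c=ab$. I would introduce the modified analytic family
\[
 G_\omega(z)=c^{-z}\,\omega^{z}\,\Gamma_{\mathrm{BS}}(z)\,\omega^{-z}\cdot(\text{normalising scalars}),
\]
or, more simply, work with the scalar function
\[
 g(z)=\Tr\bigl(\Gamma_{\mathrm{BS}}(0)\,\omega^{1-2z}\,\Gamma_{\mathrm{BS}}(z)^{\dagger\!*}\bigr)
\]
built so that three properties hold. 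First, $g(0)$ is $a$ (or normalised to $1$). Second, on the lower boundary $\Re z=0$ it reduces to the overlap $\langle\Gamma_{\mathrm{BS}}(0),\Gamma_{\mathrm{BS}}(is)\rangle$. Third, on the upper boundary the factor $\omega^{\pm1/2}$ splits symmetrically onto $\Gamma_{\mathrm{BS}}(0)$ and $\Gamma_{\mathrm{BS}}(1/2+is)$, exactly as in the second numerator of \eqref{eq:bs-fid-P-functional}.

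\textbf{Poisson step.} I would apply the Poisson integral for the strip $0\le\Re z\le1/2$ to $\Re\log$ of the normalised function, or directly to $\Re$ of the function as in the paper's method. Differentiating at $z=0$ gives
\[
 \Delta_{\mathrm{BS}}=-\tfrac{d}{dz}\Re(\cdot)\big|_{0},
\]
which the strip identity writes as a kernel integral of $1-\Re(\text{boundary values})$ against $\pi/(\cosh 2\pi s\mp1)$. The scalar $c^{-is}$ phase enters as a pure imaginary shift in the exponent. Its real-part contribution on the lower boundary is absorbed into the norm-square completion, and the derivative of $c^{-z}$ at $0$ produces the additive constant $-\log c$.

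\textbf{Completing squares.} I would then rewrite each $1-\Re\langle\cdot,\cdot\rangle$ as $\tfrac12\|x-y\|_2^2$ plus norm defects.

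On the lower boundary, $\|\Gamma_{\mathrm{BS}}(0)\|_2^2=1$ and $\|\Gamma_{\mathrm{BS}}(is)\|_2^2=1-\delta_{\mathrm{BS},0}(s)$. This gives the first numerator of \eqref{eq:bs-fid-P-functional}.

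On the upper boundary, the normalisations by $\sqrt a$ and $\sqrt b$ make both vectors have unit norm on average. The first has norm $\Tr(\rho\omega)/a=1$ exactly. The second has squared norm
\[
 \Tr\bigl(\tau_{\mathrm{BS}}(s)\omega^{-1}\bigr)/b,
\]
whose average against $\pi/(\cosh2\pi s+1)$ equals $1$ by the definition \eqref{eq:bs-averaged-reconstruction} of $\overline\tau_{\mathrm{BS}}$. This is precisely why no $\delta_{\mathrm{BS},1}$ term survives in the second line. The identity \eqref{eq:bs-fid-test-identity} then follows, and $\mathcal P_{\mathrm{BS}}(\omega)\ge0$ is manifest because every numerator is nonnegative, using Lemma~\ref{lem:bs-contraction} for $\delta_{\mathrm{BS},0}$.

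\textbf{Optimisation.} Taking the infimum over $\omega>0$, \eqref{eq:bs-fid-remainder} follows from the variational formula
\[
 F(\rho,\overline\tau)=\inf_{\omega>0}\Tr(\rho\omega)\Tr(\overline\tau\omega^{-1}),
\]
the Alberti form of fidelity. It remains valid for non-faithful subnormalised $\overline\tau_{\mathrm{BS}}$ with $\rho>0$, by a limiting argument with $\omega^{-1}$ allowed to blow up off the support. Then $\sup_\omega(-\log c)=-\log F$.

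\textbf{Main obstacle.} I expect the hard part to be the first step: choosing the tilted analytic function so that its boundary values at $\Re z=0$ and $\Re z=1/2$ reproduce exactly the two vectors in \eqref{eq:bs-fid-P-functional}, with the $\omega^{\pm1/2}$ placement and the $c^{-is}$ phase. I would first try
\[
 h(z)=c^{-z}\Tr\bigl(\omega^{z}\Gamma_{\mathrm{BS}}(0)^\dagger\,\omega^{z}\,\Gamma_{\mathrm{BS}}(\bar z)\bigr)^{*}
\]
and check the values at $z=0$ and $z=1/2$. Boundedness on the strip, and the removability at $s=0$ (the lower numerator vanishes to second order), follow as in Theorem~\ref{thm:bs-poisson}.
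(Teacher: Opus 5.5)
Your skeleton matches the paper's: a scalar function tilted by $c^{-z}$, with $c=ab$, $a=\Tr(\rho\omega)$, $b=\Tr(\overline\tau_{\mathrm{BS}}\omega^{-1})$; then the differentiated strip Poisson formula, completing squares on both boundaries, the averaged upper norm defect integrating to zero, and Alberti's formula. However, the step you flag as the main obstacle is never resolved, and each concrete candidate you write down fails to reproduce $\mathcal P_{\mathrm{BS}}(\omega)$.

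Putting $z$-dependent powers of $\omega$ into the analytic function, by analogy with Theorem~\ref{thm:measured}, is the wrong model here. First, it corrupts the derivative at $0$. For instance, $c^{-z}\Tr[\Gamma_{\mathrm{BS}}(0)^\dagger\omega^z\Gamma_{\mathrm{BS}}(z)\omega^{-z}]$ acquires the extra real term $\Tr[(\Gamma_{\mathrm{BS}}(0)\Gamma_{\mathrm{BS}}(0)^\dagger-\rho)\log\omega]$. This is nonzero in general, since $\Gamma_{\mathrm{BS}}(0)\Gamma_{\mathrm{BS}}(0)^\dagger=\ell_{\mathrm{BS}}^{1/2}\sigma\ell_{\mathrm{BS}}^{1/2}\neq\rho$. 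The version with $\omega^{1-2z}$ produces $\omega\log\omega$-type terms instead of $-\log c$. Second, it places $\omega^{is}$ factors on the lower boundary, whereas the first numerator of $\mathcal P_{\mathrm{BS}}(\omega)$ contains no $\omega$ at all. Your last candidate, with $\Gamma_{\mathrm{BS}}(\bar z)$ and an outer complex conjugation, is not analytic: the conjugation turns $\omega^{z}$ into $\omega^{\bar z}$. Working with $\Re\log f$ instead of $1-\Re f$ would also not yield the quadratic numerators.

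The missing idea is that $\omega$ should enter only through the constant $c$:
\[
 f(z)=c^{-z}\Tr\bigl[\Gamma_{\mathrm{BS}}(0)^\dagger\Gamma_{\mathrm{BS}}(z)\bigr].
\]
With this choice:

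\begin{itemize}
\item $f(0)=1$, and $f'(0)=-\log c-\Delta_{\mathrm{BS}}$ is real. The second term is the derivative already computed for the boundary identity of Theorem~\ref{thm:bs-poisson}. Hence $1-\Re f(is)=O(s^2)$, and Lemma~\ref{lem:Poisson-derivative} applies to $1-\Re f$, whose inward normal derivative at $0$ is $\Delta_{\mathrm{BS}}+\log c$.
\item On the lower boundary, $|c^{-is}|=1$ and $\HS{\Gamma_{\mathrm{BS}}(is)}^2=1-\delta_{\mathrm{BS},0}(s)$ give exactly the first numerator.
\item On the upper boundary, the $\omega^{\pm1/2}$ split comes from cyclicity of the trace together with $c^{-1/2}=(ab)^{-1/2}$, not from analyticity. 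With $u_\omega=\Gamma_{\mathrm{BS}}(0)\omega^{1/2}/\sqrt a$ and $v_{\omega,s}=c^{-is}\Gamma_{\mathrm{BS}}(1/2+is)\omega^{-1/2}/\sqrt b$,
\[
 \Tr\bigl(u_\omega^\dagger v_{\omega,s}\bigr)
 =\frac{c^{-is}}{\sqrt{ab}}\Tr\bigl[\omega^{1/2}\Gamma_{\mathrm{BS}}(0)^\dagger\Gamma_{\mathrm{BS}}(1/2+is)\omega^{-1/2}\bigr]
 =f(1/2+is).
\]
\end{itemize}

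From here your argument goes through as you describe. The completion of squares gives $2(1-\Re f(1/2+is))=\HS{u_\omega-v_{\omega,s}}^2+1-\Tr(\tau_{\mathrm{BS}}(s)\omega^{-1})/b$. Averaging removes the last two terms, and the Alberti step then yields the fidelity remainder.
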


\section{Proof of main results}\label{sec:channel}
This section proves the Umegaki entropy-loss and CMI identities and derives
the BS surface identity from its boundary counterpart. The remaining BS
proofs are in Appendix~\ref{sec:bs-poisson-bures}.
\subsection{The strip Poisson formula}\label{sec:poisson}
We use the Poisson integration formula for a strip
\cite{widder1961functions}. With strip width $1/2$, it says
that a bounded harmonic function $h$, continuous on the closed strip,
satisfies, for $0<x<1/2$,
\begin{align}
 h(x)
 &=\int_\R
 \frac{\sin(2\pi x)}{\cosh(2\pi s)-\cos(2\pi x)}h(is)\dd s + \int_\R
 \frac{\sin(2\pi x)}{\cosh(2\pi s)+\cos(2\pi x)}h(1/2+is)\dd s.
 \label{eq:Poisson}
\end{align}
Its differentiated boundary form, which fixes all constants in the results, is the following.
\begin{lemma}[Boundary derivative]\label{lem:Poisson-derivative}
Let $h$ be as above, differentiable at $0$ in the inward normal direction,
with $h(0)=0$ and $h(is)=O(s^2)$ as $s\to0$. Then
\begin{equation}\label{eq:normal-derivative}
 \partial_x h(0)
 =2\int_\R \frac{\pi}{\cosh(2\pi s)-1}h(is)\dd s
  +2\int_\R\frac{\pi}{\cosh(2\pi s)+1}h(1/2+is)\dd s.
\end{equation}
\end{lemma}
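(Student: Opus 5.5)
Since $h(0)=0$, I would start from the Poisson formula \eqref{eq:Poisson} at the real point $x\in(0,1/2)$, divide by $x$, and let $x\downarrow0$. The left side $h(x)/x$ tends to $\partial_x h(0)$ by the assumed inward normal differentiability. On the right side the kernels behave as follows:
\[
\frac{\sin(2\pi x)}{x}\to 2\pi
\quad\text{(upper boundary)},
\qquad
\frac{\sin(2\pi x)}{x\,(\cosh(2\pi s)-\cos(2\pi x))}\to\frac{2\pi}{\cosh(2\pi s)-1}
\quad\text{(lower boundary, } s\neq0).
\]
So formally the limit gives \eqref{eq:normal-derivative}, with the factor $2\pi$ split as $2\cdot\pi$. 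What remains is to justify passing the limit inside both integrals.

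\emph{Upper boundary.} This is easy. For $0<x\le 1/4$ we have $\cos(2\pi x)\ge0$, so $\cosh(2\pi s)+\cos(2\pi x)\ge\cosh(2\pi s)$. Also $\sin(2\pi x)/x\le 2\pi$. Since $h$ is bounded, the integrand is dominated by $2\pi\|h\|_\infty/\cosh(2\pi s)$, which is integrable. Dominated convergence then applies.

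\emph{Lower boundary.} This is the main obstacle, because the kernel concentrates near $s=0$: it is an approximate identity of total mass tending to $1$. This is exactly why the hypothesis $h(is)=O(s^2)$ is needed. I would fix $\delta>0$ and split the integral at $|s|<\delta$.

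On $|s|\ge\delta$, the denominator satisfies $\cosh(2\pi s)-\cos(2\pi x)\ge\cosh(2\pi s)-1\ge c_\delta\cosh(2\pi s)$. Together with $\sin(2\pi x)/x\le2\pi$ and boundedness of $h$, this gives an integrable majorant, so dominated convergence applies again.

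On $|s|<\delta$, I would use $|h(is)|\le Cs^2$ together with the elementary bound
\[
\cosh(2\pi s)-\cos(2\pi x)\ge \cosh(2\pi s)-1\ge 2\pi^2 s^2 .
\]
This gives $s^2/(\cosh(2\pi s)-\cos(2\pi x))\le 1/(2\pi^2)$, uniformly in $x$. The integrand is therefore bounded by $2\pi C/(2\pi^2)$, a constant, on a bounded interval. Dominated convergence applies once more, and the pointwise limit is $2\pi h(is)/(\cosh(2\pi s)-1)$. This limit is bounded near $0$, which also shows that the apparent singularity at $s=0$ is removable. The point $s=0$ itself has measure zero and can be ignored.

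Combining the two regions, both limits exist and equal the stated integrals. Since the right side is absolutely convergent, no principal value is needed. This gives \eqref{eq:normal-derivative}.
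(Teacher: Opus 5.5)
Your proof is correct and follows the paper's own argument: divide the Poisson formula by $x$, let $x\downarrow0$, and justify the limit by dominated convergence, using an exponential majorant for the upper kernel and for the tails of the lower one, and letting the hypothesis $h(is)=O(s^2)$ supply the local majorant for the lower kernel near $s=0$. The only difference is cosmetic. You bound the denominator by $\cosh(2\pi s)-\cos(2\pi x)\ge 2\pi^2 s^2$, whereas the paper bounds the kernel divided by $x$ by $C/(x^2+s^2)$; both give the same uniform local bound.
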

\begin{proof}
Divide \eqref{eq:Poisson} by $x$ and let $x\downarrow0$. Away from $s=0$,
the two kernels divided by $x$ converge to
$2\pi/[\cosh(2\pi s)-1]$ and $2\pi/[\cosh(2\pi s)+1]$.
For $x$ sufficiently small, the second kernel in \eqref{eq:Poisson} divided by $x$ is bounded by
an integrable exponential in $|s|$. Near zero the first kernel satisfies
\[
 \frac{\sin(2\pi x)}
 {x\,[\cosh(2\pi s)-\cos(2\pi x)]}
 \leq\frac{C}{x^2+s^2}.
\]
Multiplication by $h(is)=O(s^2)$ gives a uniform local integrable bound.
The tails of the first integral are exponentially dominated because
$h$ is bounded. Dominated convergence proves \eqref{eq:normal-derivative}.
\end{proof}
\begin{remark}[Parameter convention]
The conventional recovery density is
$\beta_0(t)=\pi/[2(\cosh(\pi t)+1)]$ \cite{Junge2015,SutterBertaTomamichel2017}.
With $t=2s$, the weight used here is
$2\beta_0(2s)=\pi/[\cosh(2\pi s)+1]$; the corresponding rotation is
$\mathcal R_s$ in \eqref{eq:recovery-map}. 
\end{remark}
\subsection{The scalar analytic function}
Define
\begin{equation}\label{eq:scalar-f}
 f_\omega(z) =\Tr[\rho^{1/2}\omega^{z}\Gamma(z)] =\Tr\!\left[\rho^{1-z}\omega^z\sigma^z \Phi^\dagger(\Phi(\sigma)^{-z}\Phi(\rho)^z)\right].
\end{equation}
The second equality is only a cyclic permutation in the trace; it does
not commute any distinct matrix factors. We have $f_\omega(0)=1$ and
\begin{align}
 f_\omega'(0)
 &=\Tr\rho(-\log\rho+\log\omega+\log\sigma)
    +\Tr\rho\Phi^\dagger(-\log\Phi(\sigma)+\log\Phi(\rho))\notag\\
 &=\Tr(\rho\log\omega)-\bigl[D(\rho\|\sigma)-D(\Phi(\rho)\|\Phi(\sigma))\bigr]\in\R.
 \label{eq:derivative-f}
\end{align}
Set
\begin{equation}\label{eq:h}
 h_\omega(z)=1-\Re f_\omega(z).
\end{equation}
Since $f_\omega$ is analytic, its real part is harmonic by the
Cauchy--Riemann equations. Hence $h_\omega=1-\Re f_\omega$ is harmonic
and is bounded because $f_\omega$ is bounded. Moreover,
\begin{equation}\label{eq:h-derivative}
 h_\omega(0)=0,\qquad
 \partial_xh_\omega(0)=D(\rho\|\sigma)-D(\Phi(\rho)\|\Phi(\sigma))-\Tr(\rho\log\omega),\qquad
 h_\omega(is)=O(s^2).
\end{equation}
The last estimate follows from analyticity and the reality of
$f_\omega'(0)$: the linear term $is\,f_\omega'(0)$ has zero real part.
For any $z$ on the boundary of the strip, expanding the squared
Hilbert--Schmidt norm gives
\begin{equation}\label{eq:polarization}
 2h_\omega(z)=\HS{\rho^{1/2}-\omega^{z}\Gamma(z)}^2+1-\HS{\omega^{z}\Gamma(z)}^2.
\end{equation}
By \eqref{eq:defect-norms} and unitary invariance of the Hilbert--Schmidt norm,
\begin{align}
 2h_\omega(-is)
 &=\HS{\rho^{1/2}-\omega^{-is}\Gamma(-is)}^2+\varepsilon_0(s),
      \label{eq:lower-polarization}\\
 2h_\omega(1/2+is)
 &=\HS{\rho^{1/2}-\omega^{1/2+is}\Gamma(1/2+is)}^2
    +\varepsilon_{1,\omega}(s)+1-\Tr(\omega\mathcal R_s(\Phi(\rho))).
      \label{eq:upper-polarization}
\end{align}
By \eqref{eq:h-derivative}, $h_\omega(-is)=O(s^2)$.
Since both terms on the right of \eqref{eq:lower-polarization} are
nonnegative, each is $O(s^2)$ separately. Together with
$\cosh(2\pi s)-1\sim2\pi^2s^2$, this makes the lower-boundary
integrand bounded near zero. The upper-boundary denominator is at least
two. Both numerators are bounded on the real axis and both kernels decay
exponentially at infinity, so the integrals in \eqref{eq:P-functional}
converge absolutely.
\begin{proof}[Proof of Theorem~\ref{thm:loss}]
Apply Lemma~\ref{lem:Poisson-derivative} to $h_\omega$. Since $\pi/[\cosh(2\pi s)-1]$ is even,
the integral along $\operatorname{Re}z=0$ may be parametrized by $-is$. Substituting
\eqref{eq:lower-polarization}--\eqref{eq:upper-polarization} gives
\begin{equation*}
 D(\rho\|\sigma)-D(\Phi(\rho)\|\Phi(\sigma))-\Tr(\rho\log\omega) =\Pcal(\omega) +\int_\R\frac{\pi}{\cosh(2\pi s)+1}\,[1-\Tr(\omega\mathcal R_s(\Phi(\rho)))]\dd s =\Pcal(\omega)+1-\Tr(\tau\omega).
\end{equation*}
This proves \eqref{eq:master}. With $\omega=\one_A$, the scalar terms
cancel because $\Tr\tau=1$, and \eqref{eq:loss-expanded} follows.
The preceding estimates justify convergence and the removable
singularity.
\end{proof}
\subsection{The gap in the measured-relative-entropy recovery bound}\label{sec:measured}
We recall the measured-relative-entropy variational formula from
\cite[Eq.~(49)]{SutterBertaTomamichel2017}.
\begin{lemma}[Measured variational formula]\label{lem:measured}
For faithful density matrices $a,b$,
\begin{equation}\label{eq:variational}
 \DM(a\|b)=\sup_{\omega>0}
   \{\Tr(a\log\omega)+1-\Tr(b\omega)\}.
\end{equation}
A maximizer exists and every maximizer satisfies $\Tr(b\omega)=1$.
\end{lemma}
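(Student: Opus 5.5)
The plan is to reduce the matrix supremum to a classical one by two bounds. For the upper bound, fix $\omega>0$ with spectral decomposition $\omega=\sum_k w_kP_k$. The projective measurement $\{P_k\}$ has outcome distributions $p_k=\Tr(aP_k)$ and $q_k=\Tr(bP_k)$. Since $\Tr(a\log\omega)=\sum_k p_k\log w_k$ and $\Tr(b\omega)=\sum_k q_k w_k$, it suffices to prove the classical inequality
\[
 \sum_k p_k\log w_k+1-\sum_k q_kw_k\le\sum_k p_k\log(p_k/q_k).
\]
This is pointwise concavity: $p\log w - qw\le p\log(p/q)-p$, maximized at $w=p/q$, with terms having $p_k=0$ only lowering the left side. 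Summing and using $\sum_k p_k=1$ gives the inequality, and hence that the supremum is at most $\DM(a\|b)$.

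For the lower bound, take an arbitrary POVM $\{M_j\}$ with $p_j=\Tr(aM_j)$ and $q_j=\Tr(bM_j)$. Faithfulness of $b$ gives $q_j>0$ whenever $M_j\neq0$. I would set $\omega=\sum_j (p_j/q_j)M_j$, regularized to $\omega_\epsilon=\sum_j(p_j/q_j+\epsilon)M_j>0$ to handle $p_j=0$. The key step is operator concavity of $\log$: the Jensen operator inequality for the unital positive map $X\mapsto\sum_j x_jM_j$ gives
\[
 \log\Bigl(\sum_j x_jM_j\Bigr)\ge\sum_j (\log x_j)M_j .
\]
Consequently $\Tr(a\log\omega_\epsilon)\ge\sum_j p_j\log(p_j/q_j+\epsilon)$, while $\Tr(b\omega_\epsilon)=\sum_j p_j+\epsilon=1+\epsilon$. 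Letting $\epsilon\to0$ shows the supremum is at least the classical relative entropy of this POVM. Taking the supremum over POVMs gives \eqref{eq:variational}. This Jensen step is the main obstacle, and I would invoke the Hansen--Pedersen/Choi form for unital positive maps on commutative algebras.

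For existence of a maximizer, the functional $g(\omega)=\Tr(a\log\omega)+1-\Tr(b\omega)$ is concave. Since $a,b>0$, it tends to $-\infty$ as $\lambda_{\min}(\omega)\to0$, because $\Tr(a\log\omega)\le\lambda_{\min}(a)\log\lambda_{\min}(\omega)+\text{const}\cdot\log\|\omega\|$. It also tends to $-\infty$ as $\|\omega\|\to\infty$, since $\Tr(b\omega)\ge\lambda_{\min}(b)\|\omega\|$ grows linearly while the log term grows only logarithmically. So $g$ attains its maximum on a compact set of positive definite matrices. For the normalization, any maximizer $\omega$ must in particular be optimal along the ray $t\omega$ with $t>0$. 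Using $\Tr(a\log t\omega)=\log t+\Tr(a\log\omega)$ and differentiating at $t=1$ gives $1-\Tr(b\omega)=0$.
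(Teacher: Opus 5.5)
Your argument is correct and complete. The paper itself gives no proof of this lemma. It recalls the variational formula from \cite{SutterBertaTomamichel2017} and simply asserts the attainment and normalization claims. Your route is the standard self-contained one. The ``$\le$'' direction comes from measuring in the eigenbasis of $\omega$ together with the pointwise bound $p\log w-qw\le p\log(p/q)-p$. The ``$\ge$'' direction comes from operator concavity of $\log$ under the unital positive map $x\mapsto\sum_j x_jM_j$, which is completely positive because its domain is commutative.

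Your version also justifies the last sentence of the lemma, which the citation does not cover. Existence of a maximizer follows from coercivity of $g$ on the cone of positive definite matrices. The identity $\Tr(b\omega)=1$ follows from the first-order condition along the ray $t\omega$, using $\Tr a=1$. Note, though, that the proof of Theorem~\ref{thm:measured} only uses the supremum formula.

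Two cosmetic points. First, the $\epsilon$-regularization is unnecessary: faithfulness of $a$ gives $p_j>0$ for every nonzero $M_j$, so $\sum_j(p_j/q_j)M_j$ is already positive definite. Second, in your coercivity estimate the term should read $\max\{0,\log\|\omega\|\}$ rather than $\mathrm{const}\cdot\log\|\omega\|$, so that the bound stays valid when $\|\omega\|<1$.
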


\begin{proof}[Proof of Theorem~\ref{thm:measured}]
The test-operator identity was proved in Section~\ref{sec:channel}.
Apply Lemma~\ref{lem:measured} to the faithful states $a=\rho$ and $b=\tau$.
Equation~\eqref{eq:master} implies, for every
$\omega>0$,
\[
 \Pcal(\omega)=D(\rho\|\sigma)-D(\Phi(\rho)\|\Phi(\sigma))-
 \{\Tr(\rho\log\omega)+1-\Tr(\tau\omega)\}.
\]
Taking the infimum on the left and the supremum on the right gives
\eqref{eq:measured-equality} and nonnegativity.
\end{proof}

\subsection{A vector-valued Green identity and the surface representation}\label{sv:surface-proof}
For a bounded analytic function $F$ with values in a finite-dimensional
Hilbert space, analytic near the finite points of the closed strip, write
$z=x+it$ and set
\begin{equation}\label{sv:energy}
 \mathscr E(F)=\int_\R \frac{\pi}{\cosh(2\pi t)-1}\|F(it)-F(0)\|^2\dd t
       +\int_\R \frac{\pi}{\cosh(2\pi t)+1}\|F(1/2+it)-F(0)\|^2\dd t.
\end{equation}
All analytic families used below have uniformly bounded derivatives on the
closed strip. Their lower-boundary differences are $O(t)$, so this integral
is finite.
\begin{lemma}[Boundary integral equals surface integral]\label{sv:greenlemma}
Under these assumptions,
\begin{equation}\label{sv:greenidentity}
 \mathscr E(F)=2\int_0^{1/2}\!\int_\R
           \frac{\sin(2\pi x)}{\cosh(2\pi t)-\cos(2\pi x)}\|F'(x+it)\|^2\dd t\dd x.
\end{equation}
\end{lemma}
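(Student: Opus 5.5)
We apply the Poisson machinery to the subharmonic function $u(z)=\|F(z)-F(0)\|^2$ rather than a harmonic one. Since $F$ is analytic, $u$ is smooth in the open strip with
\[
 \Delta u = 4\,\partial_z\partial_{\bar z}\langle F-F(0),F-F(0)\rangle = 4\|F'(z)\|^2 ,
\]
because $\partial_{\bar z}F=0$. The plan is to write Green's representation for $u$ on the strip $S=\{0<x<1/2\}$:
\[
 u(z_0)=\int_{\partial S} P(z_0,\zeta)\,u(\zeta)\,|d\zeta| - \int_S G(z_0,w)\,\Delta u(w)\,dA(w),
\]
where $G\ge0$ is the Dirichlet Green function of the strip (normalized so that $-\Delta G=\delta$) and $P$ is the Poisson kernel appearing in \eqref{eq:Poisson}. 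We then take $z_0=x_0$ real, divide by $x_0$, and let $x_0\downarrow0$, as in Lemma~\ref{lem:Poisson-derivative}.

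On the left, $u(x_0)=O(x_0^2)$ because $F'$ is bounded, so $u(x_0)/x_0\to0$. The boundary term is handled exactly as in the proof of Lemma~\ref{lem:Poisson-derivative}. Since $u(it)=O(t^2)$ (from the bounded derivative) and $u$ is bounded, dominated convergence gives the limit $2\mathscr E(F)$, with the same kernel constants as in \eqref{eq:normal-derivative}. For the interior term, $G(x_0,w)/x_0\to \partial_{x}G(0,w)$, which is the inward normal derivative of the Green function at the boundary point $0$. This equals the Poisson kernel of the strip evaluated at the boundary point $0$ with interior point $w$, i.e.
\[
 \frac{\sin(2\pi x)}{\cosh(2\pi t)-\cos(2\pi x)}
\]
for $w=x+it$, by the symmetry $P(w,\zeta)=\partial_{n_\zeta}G(w,\zeta)$. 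Putting the pieces together gives $0=2\mathscr E(F)-4\int_S \partial_xG(0,w)\|F'(w)\|^2\,dA$, i.e.\ \eqref{sv:greenidentity} after checking the constant. Rather than tracking the $2\pi$ normalizations of $G$, we will fix the constant by testing the identity on the scalar function $F(z)=z$, or by comparing with the harmonic case via the conformal map $z\mapsto e^{2\pi i z}$ onto a half-plane/disc region, where the kernel is classical.

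An alternative route, which may be cleaner, avoids Green functions. First prove the identity for $F$ a finite exponential polynomial, or use the conformal map to a half-plane, where the Littlewood--Paley identity
\[
 \int_{\partial}\|F-F(0)\|^2\,P = 2\int G\,\Delta\bigl(\|F\|^2\bigr)/4
\]
is standard. Then transport it to the strip. We would keep the direct Green argument as the primary plan.

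The main obstacle is justifying the limit $x_0\downarrow0$ in the interior term. The integrand $G(x_0,w)/x_0$ must be dominated uniformly near the point $0$ (where $G(x_0,w)/x_0$ behaves like $x/|w|^2$ for $|w|$ small, but is controlled as $\lesssim x\,x_0/(x_0|w-x_0|^2)$) and at $|t|\to\infty$, where the decay is exponential and $\|F'\|$ is bounded. Near $0$ we have $\|F'\|^2\le C$, and $\partial_xG(0,w)\sim x/|w|^2$ is locally integrable in two dimensions. We will use the elementary bound $G(x_0,w)\le C\,x_0 x/|w-x_0|^2$ near the corner together with monotone convergence; this works since $G\ge0$ and $\|F'\|^2\ge0$. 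Equivalently, Fatou's lemma gives one inequality and the boundary computation gives finiteness. Applying Green's formula on an unbounded strip also requires truncating to $|t|\le T$ and showing the lateral contributions vanish, which follows from the boundedness of $F$ and $F'$ together with the exponential decay of $G$ and $\nabla G$ in $|t|$.
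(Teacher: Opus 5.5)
Your route is the paper's. You write $\|F-F(0)\|^2$ as the Poisson integral of its boundary values minus the Green potential of $4\|F'\|^2$, evaluate at $x_0$, divide by $x_0$, and let $x_0\downarrow0$, using the differentiated Poisson formula on the boundary and $G(x_0,w)/x_0\to P(w,0)$ in the interior. No constant needs fixing afterwards. With $-\Delta G=\delta$ and $P=\partial_nG$, your limit already reads $0=2\mathscr E(F)-4\int_S P(w,0)\|F'(w)\|^2\,dA$, which is the claim. Also, $F(z)=z$ is not bounded on the strip, although both sides do equal $1/2$ for it.

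The gap is the step you yourself flag: passing $x_0\downarrow0$ under the interior integral. None of the tools you list does this.
\begin{itemize}
\item Your bound $G(x_0,w)\le Cx_0x/|w-x_0|^2$ gives $G(x_0,w)/x_0\le Cx/|w-x_0|^2$. This depends on $x_0$ and is not even integrable near $w=x_0$, where it behaves like $x_0/|w-x_0|^2$. The logarithmic singularity of $G$ has been overestimated by an inverse square.
\item Monotone convergence is unavailable, because $G(x_0,w)/x_0$ is not monotone in $x_0$. For $w$ near the real axis it has a large peak at $x_0\approx\operatorname{Re}w$.
\item Fatou's lemma gives only $2\int_S P(w,0)\|F'(w)\|^2\,dA\le\mathscr E(F)$. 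Finiteness does not supply the reverse inequality, so \emph{equivalently} is not right.
\end{itemize}

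What is missing is convergence of the total mass, $\int_S G(x_0,w)\,dA(w)/x_0\to\int_S P(w,0)\,dA(w)=1/4$. The paper gets it by integrating $-\Delta G=\delta$ in $t$. This makes $\int_\R G(r,x+it)\,dt$ the tent function $2x(1/2-r)$ for $x\le r$ and $2r(1/2-x)$ for $x\ge r$, so $\int_S G(r,\cdot)/r=(1/2-r)/2\to1/4$. Scheff\'e's lemma then upgrades pointwise convergence to $L^1$ convergence, and boundedness of $\|F'\|^2$ finishes the argument. Your own test function $F(z)=z$, with $\|F'\|\equiv1$, would deliver the same mass identity if you ran it through your representation.

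Alternatively, an honest majorant exists if you keep the logarithm. Use $G(x_0,w)\le\frac1{4\pi}\log\bigl(1+4x_0x/|w-x_0|^2\bigr)$, together with $|w-x_0|\ge|t|$ in the range $x_0\asymp|w|$. This gives $\sup_{x_0}G(x_0,w)/x_0\lesssim|w|^{-1}\bigl[1+\log(1+|w|^2/t^2)\bigr]$ near $0$, which is integrable in the plane. Away from $0$, $G(x_0,w)/x_0$ is bounded by $\sup_{r\le x_0}\partial_rG(r,w)$, which decays exponentially in $|t|$.
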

\begin{proof}
Put $q(z)=\|F(z)-F(0)\|^2$. For each scalar component $f$ of
$F-F(0)$, analyticity gives $\partial_xf=f'$ and $\partial_tf=if'$.
Since the real and imaginary parts are harmonic, differentiating
$|f|^2$ and summing over components yields
\[
 (\partial_x^2+\partial_t^2)q(x+it)=4\|F'(x+it)\|^2.
\]
Also $F(r)-F(0)=rF'(0)+O(r^2)$, so $q(r)=O(r^2)$.

For $0<r<1/2$, the Dirichlet Green function of the strip is
\begin{equation}\label{sv:Green}
 G(r,x+it)=\frac1{4\pi}\log
 \frac{\cosh(2\pi t)-\cos(2\pi(r+x))}
      {\cosh(2\pi t)-\cos(2\pi(r-x))}.
\end{equation}
This is the strip formula in \cite{Melnikov2012},
rescaled to width $1/2$. We use the sign convention
\begin{equation}
 -(\partial_x^2+\partial_t^2)G(r,x+it)=\delta(x-r)\delta(t).
\end{equation}
Indeed, $G$ is harmonic away from $z=r$ and vanishes at $x=0,1/2$.
It is positive in the interior and decays exponentially as $|t|\to\infty$.
Let $H_q$ be the bounded harmonic function with the same boundary
values as $q$:
\[
 (\partial_x^2+\partial_t^2)H_q=0,\qquad
 H_q(it)=q(it),\qquad H_q(1/2+it)=q(1/2+it).
\]
This is the harmonic extension of the boundary values of $q$; it need
not agree with $q$ inside the strip, since $q$ is generally not harmonic.
Explicitly, the strip Poisson formula~\cite{widder1961functions} gives
\[
\begin{aligned}
 H_q(x+it)
 &=\int_\R\frac{\sin(2\pi x)}{\cosh(2\pi(t-s))-\cos(2\pi x)}q(is)\dd s+\int_\R\frac{\sin(2\pi x)}{\cosh(2\pi(t-s))+\cos(2\pi x)}q(1/2+is)\dd s.
\end{aligned}
\]
The difference $H_q-q$ therefore vanishes on both boundaries and satisfies
$-(\partial_x^2+\partial_t^2)(H_q-q)=4\|F'\|^2$.
Green's second identity, obtained by integrating by parts twice, gives
\[
\begin{aligned}
 &\int_0^{1/2}\!\int_\R
 G(r,x+it)\bigl[-(\partial_x^2+\partial_t^2)(H_q-q)(x+it)\bigr]\dd t\dd x\\
 &\qquad=\int_0^{1/2}\!\int_\R
 (H_q-q)(x+it)\bigl[-(\partial_x^2+\partial_t^2)G(r,x+it)\bigr]\dd t\dd x\\
 &\qquad=H_q(r)-q(r).
\end{aligned}
\]
The boundary terms on $x=0,1/2$ vanish because both $G$ and $H_q-q$
are zero there. We now justify dividing this identity by $r$ and passing to $r\downarrow0$.
Since $q(0)=0$ and $q(r)=O(r^2)$, the differentiated Poisson formula
of Lemma~\ref{lem:Poisson-derivative} gives
\[
 \lim_{r\downarrow0}\frac{H_q(r)}r=2\mathscr E(F),\qquad
 \lim_{r\downarrow0}\frac{q(r)}r=0.
\]
Equation~\eqref{sv:Green} gives
\[
 \frac{G(r,x+it)}r\longrightarrow
 \frac{\sin(2\pi x)}{\cosh(2\pi t)-\cos(2\pi x)}
 \qquad(r\downarrow0).
\]
This convergence also holds in $L^1$ over the strip, as proved in
Appendix~\ref{app:green-boundary-limit}.
Since $\|F'\|^2$ is bounded, we may therefore pass to the limit in
Green's representation divided by $r$, obtaining
\[
 2\mathscr E(F)=4\int_0^{1/2}\!\int_\R
 \frac{\sin(2\pi x)}{\cosh(2\pi t)-\cos(2\pi x)}
 \|F'(x+it)\|^2\dd t\dd x.
\]
Dividing by two proves \eqref{sv:greenidentity}.
\end{proof}

\begin{proof}[Proof of the surface identity in Theorem~\ref{thm:loss}]
Apply Lemma~\ref{sv:greenlemma} to $\Gamma$, using
$\Gamma(0)=\rho^{1/2}$. Together with the boundary identity
\eqref{eq:loss-expanded}, this gives
\begin{equation}\label{sv:Gamma-area-defects}
\begin{aligned}
 \Delta
 &=2\int_0^{1/2}\!\int_\R
 \frac{\sin(2\pi x)}{\cosh(2\pi t)-\cos(2\pi x)}
 \HS{\Gamma'(x+it)}^2\dd t\dd x+\pi\int_\R\left[
 \frac{\varepsilon_0(t)}{\cosh(2\pi t)-1}
 +\frac{\varepsilon_1(t)}{\cosh(2\pi t)+1}\right]\dd t.
\end{aligned}
\end{equation}
This is \eqref{sv:U-surface}. Nonnegativity follows from the squared
norm and the boundary Schwarz inequalities.
\end{proof}

\begin{proof}[Proof of the BS surface identity \eqref{sv:BS-surface}]
The same argument applies to the bounded analytic family
$\Gamma_{\mathrm{BS}}$ in \eqref{eq:bs-Gamma}, with
$\Gamma_{\mathrm{BS}}(0)=\ell_{\mathrm{BS}}^{1/2}\sigma^{1/2}$.
The boundary loss identity \eqref{eq:bs-poisson} reads
\[
 \Delta_{\mathrm{BS}}=\mathscr E(\Gamma_{\mathrm{BS}})
 +\pi\int_\R\left[
 \frac{\delta_{\mathrm{BS},0}(t)}{\cosh(2\pi t)-1}
 +\frac{\delta_{\mathrm{BS},1}(t)}{\cosh(2\pi t)+1}\right]\dd t.
\]
Lemma~\ref{sv:greenlemma} converts $\mathscr E(\Gamma_{\mathrm{BS}})$
into the weighted surface integral of $\HS{\Gamma_{\mathrm{BS}}'}^2$,
giving \eqref{sv:BS-surface}. The two nonnegative Schwarz defects
remain boundary integrals, exactly as in \eqref{sv:Gamma-area-defects}.
\end{proof}

\subsection{The CMI specialization}\label{sec:cmi-proof}
\begin{proof}[Proof of Corollary~\ref{cor:CMI}]
Take $\Phi=\Tr_Z$ and $\sigma=\rho_X\otimes\rho_{YZ}$. Then
\[
 \Phi(\rho)=\rho_{XY},\qquad
 \Phi(\sigma)=\rho_X\otimes\rho_Y,\qquad
 \Phi^\dagger(A)=A\otimes\one_Z.
\]
The adjoint is a $*$-homomorphism, so both Schwarz defects vanish. Moreover,
\begin{equation*}
 \omega^{z}\Gamma(z) =\omega^z(\rho_X\otimes\rho_{YZ})^z \bigl[(\rho_X\otimes\rho_Y)^{-z}\rho_{XY}^z \otimes\one_Z\bigr]\rho^{1/2-z} =\omega^z\rho_{YZ}^z\rho_Y^{-z}\rho_{XY}^z\rho^{1/2-z}.
\end{equation*}
The $\rho_X$ factors cancel adjacent to one another; they are not commuted
through $\rho_{XY}^z$. The same cancellation in \eqref{eq:recovery-map}, followed by averaging
over $s$, gives \eqref{eq:CMI-tau}.
Finally,
\begin{align*}
 D(\rho\|\sigma)-D(\Phi(\rho)\|\Phi(\sigma))
 &=D(\rho\|\rho_X\otimes\rho_{YZ})
     -D(\rho_{XY}\|\rho_X\otimes\rho_Y)\\
 &=S(\rho_{XY})+S(\rho_{YZ})-S(\rho_Y)-S(\rho)
 =I(X:Z\mid Y)_\rho,
\end{align*}
where $S(a)=-\Tr(a\log a)$. The boundary and measured-recovery identities follow from
Theorems~\ref{thm:loss} and~\ref{thm:measured}.

The boundary CMI identity says $I(X:Z\mid Y)=\mathscr E(\Gamma^{XYZ})$.
Apply Lemma~\ref{sv:greenlemma}. The product rule, not any commutation
between distinct marginals, gives \eqref{sv:CMI-derivative}.
\end{proof}

\section{Pointwise estimate}\label{sv:pointwise}
We first bound recovery errors at prescribed real rotations, then use
harmonicity to control the full numerator, including the boundary defects,
inside the open complex strip.
We also compare the real-rotation bounds with previous quantitative
estimates.

\subsection{Prescribed-parameter bounds with spectral-count and logarithmic dependence}\label{sec:frequency-lemma-proof}
Our aim in this subsection is to bound the recovery error at each individually prescribed
time $s_0\in\R$. We first establish frequency-evaluation inequalities
that convert the integrated boundary errors into pointwise bounds.
The following estimate follows from Cauchy--Schwarz and the classical
entry-sum identity for inverse Cauchy matrices; see
\cite[Theorem~2.1]{GrinbergInverseCauchy} and  \cite{Schechter1959}. The proofs of the following four lemmas are given in
Appendix~\ref{app:frequency-evaluation-proofs}.
\begin{lemma}[One-sided finite-frequency evaluation]\label{lem:one-sided-evaluation}
Let $\omega_1,\ldots,\omega_N$ be distinct real numbers and let
$p(t)=\sum_{j=1}^N c_je^{i\omega_jt}$, with $c_j\in\C$.
For every $a>0$,
\begin{equation}\label{eq:one-sided-evaluation}
 |p(0)|^2\le aN\int_0^\infty e^{-at}|p(t)|^2\dd t.
\end{equation}
The constant $aN$ is optimal for each such frequency set.
\end{lemma}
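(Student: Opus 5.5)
The plan is to reduce \eqref{eq:one-sided-evaluation} to a finite-dimensional quadratic-form inequality. Write $c=(c_1,\dots,c_N)^T\in\C^N$ and $\mathbf 1=(1,\dots,1)^T$, so that $p(0)=\mathbf 1^Tc$. Expanding $|p(t)|^2=\sum_{j,k}\bar c_jc_ke^{i(\omega_k-\omega_j)t}$ and integrating term by term (legitimate since $a>0$) gives
\[
 \int_0^\infty e^{-at}|p(t)|^2\dd t=c^*Kc,\qquad K_{jk}=\frac{1}{a-i(\omega_k-\omega_j)}=\frac{1}{x_j+y_k},
\]
with $x_j=a/2+i\omega_j$ and $y_k=a/2-i\omega_k$. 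Thus $K$ is a Cauchy matrix. It is Hermitian and positive semidefinite, since $c^*Kc$ is the integral of a nonnegative function. It is in fact positive definite: $c^*Kc=0$ forces $p\equiv0$ on $[0,\infty)$, and because the frequencies are distinct the exponentials are linearly independent, so $c=0$.

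The inequality then follows from Cauchy--Schwarz in the $K$-inner product:
\[
 |\mathbf 1^Tc|^2=|\langle K^{-1}\mathbf 1,c\rangle_K|^2\le(\mathbf 1^*K^{-1}\mathbf 1)(c^*Kc),
\]
with equality exactly when $c$ is proportional to $K^{-1}\mathbf 1$. The whole problem is therefore to show that the entry sum $\mathbf 1^*K^{-1}\mathbf 1$ equals $aN$. This single identity gives the bound with constant $aN$, and it also gives optimality for every frequency set, since the extremal choice $c=K^{-1}\mathbf 1$ attains equality.

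The main step is evaluating the entry sum of the inverse Cauchy matrix. For this I would use the classical identity (Schechter; Grinberg, Theorem~2.1): for $K_{jk}=1/(x_j+y_k)$,
\[
 \sum_{j,k}(K^{-1})_{jk}=\sum_j x_j+\sum_k y_k.
\]
Here $\sum_j x_j+\sum_k y_k=N(a/2)+N(a/2)=aN$, because the imaginary parts $\pm i\omega_j$ cancel. The sign conventions need care, so I would check the identity independently.

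First check: in the case $N=1$, $K=1/a$ and the entry sum is $a$, which agrees.

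Second check, by partial fractions. Consider the rational function $R(u)=\prod_k(u-y_k)/\prod_k(u+x_k)$, which tends to $1$ at infinity. The vector $v_k=-\mathrm{Res}_{u=-x_k}R$, suitably normalized, satisfies $Kv=\mathbf 1$, because $R(y_j)=0$ makes the partial-fraction sum vanish appropriately at each $y_j$. The total residue sum then equals the coefficient of $1/u$ in the expansion of $R$ at infinity, namely $-\sum y_k-\sum x_k$. Doing this bookkeeping carefully, rather than quoting the identity, is the step I expect to be most error-prone.
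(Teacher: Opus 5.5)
Your proposal is correct and is essentially the paper's proof: the same Cauchy-matrix Gram form, positive definiteness, Cauchy--Schwarz with extremizer $K^{-1}\mathbf 1$ for optimality, and the inverse-Cauchy entry-sum identity $\sum_j x_j+\sum_k y_k=aN$. Your partial-fraction check of that identity is the paper's rational-interpolation argument for $1-\prod_k (z-a-i\omega_k)/(z-i\omega_k)$, up to a shift of variable and negating the frequencies. The only slip is bookkeeping: your vector $v_k=-\operatorname{Res}_{u=-x_k}R$ solves $K^{T}v=\mathbf 1$ (equivalently $K\bar v=\mathbf 1$), not $Kv=\mathbf 1$. This is harmless, because the entry sum of $(K^{T})^{-1}$ equals that of $K^{-1}$.
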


\begin{lemma}[Two finite-frequency evaluation inequalities]\label{lem:frequency-evaluation}
Let $p(s)=\sum_{j=1}^N c_j e^{i\omega_js}$, where the $\omega_j$
are distinct real numbers and $c_j\in\C$. For every $s_0\in\R$,
\begin{equation}\label{eq:finite-frequency-evaluation}
 |p(s_0)|^2\le 2N\cosh^2(\pi s_0)
 \int_\R\frac{\pi|p(s)|^2}{\cosh(2\pi s)+1}\dd s.
\end{equation}
If additionally $p(0)=0$, then
\begin{equation}\label{eq:finite-frequency-vanishing}
 |p(s_0)|^2\le N\sinh(2\pi|s_0|)
 \int_\R\frac{\pi|p(s)|^2}{\cosh(2\pi s)-1}\dd s.
\end{equation}
\end{lemma}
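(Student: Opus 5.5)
The plan is to derive both inequalities from Lemma~\ref{lem:one-sided-evaluation}. I would apply it to a translate of $p$, and possibly to its reflection. Translation $q(t)=p(s_0+t)=\sum_j (c_je^{i\omega_js_0})e^{i\omega_jt}$ and reflection $t\mapsto -t$ (which sends $\omega_j\mapsto-\omega_j$) both keep $q$ a finite exponential polynomial with $N$ distinct frequencies. I would then compare the exponential weight $e^{-at}$ with the hyperbolic kernels pointwise. All the work is in choosing $a$ and proving the pointwise kernel comparison sharply enough to hit the stated constants.

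\emph{First inequality.} I would apply Lemma~\ref{lem:one-sided-evaluation} with $a=2\pi$ to $q(t)$ and to $q(-t)$ on $[0,\infty)$, then average the two bounds. This gives
\[
 |p(s_0)|^2\le \pi N\int_\R e^{-2\pi|t|}|p(s_0+t)|^2\dd t.
\]
Next I use $\pi/(\cosh(2\pi s)+1)=\pi/(2\cosh^2(\pi s))$ together with the elementary bound $\cosh(A+B)\le\cosh A\,e^{|B|}$. This yields
\[
 e^{-2\pi|t|}\le\frac{\cosh^2(\pi s_0)}{\cosh^2(\pi(s_0+t))}
 =\frac{2}{\pi}\cosh^2(\pi s_0)\,\frac{\pi}{\cosh(2\pi(s_0+t))+1}.
\]
Substituting $s=s_0+t$ gives exactly $|p(s_0)|^2\le 2N\cosh^2(\pi s_0)\int_\R\pi|p(s)|^2/(\cosh(2\pi s)+1)\dd s$. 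As a sanity check, at $s_0=0$ the constant $2N$ is consistent with the optimality in Lemma~\ref{lem:one-sided-evaluation}.

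\emph{Second inequality.} By the symmetry $s\mapsto -s$ (reflecting the frequencies), I may assume $s_0>0$; the case $s_0=0$ is trivial since $p(0)=0$. The kernel is now $\pi/(2\sinh^2(\pi s))$. On the half-line moving away from the origin, $s=s_0+t$ with $t\ge0$, the monotone bound $\frac{d}{dt}\log\sinh(\pi(s_0+t))=\pi\coth(\pi(s_0+t))\le\pi\coth(\pi s_0)$ gives $\sinh^2(\pi(s_0+t))\le\sinh^2(\pi s_0)e^{2\pi\coth(\pi s_0)t}$. Lemma~\ref{lem:one-sided-evaluation} with $a=2\pi\coth(\pi s_0)$ then gives
\[
 |p(s_0)|^2\le 2N\sinh(2\pi s_0)\int_{s_0}^\infty\frac{\pi|p(s)|^2}{\cosh(2\pi s)-1}\dd s .
\]
This bound does not yet use $p(0)=0$.

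\emph{Main obstacle.} The one-sided bound above has the right scaling $\sinh(2\pi s_0)$ but is off from the stated constant by a factor of $2$. Recovering that factor requires the other half-line together with the vanishing $p(0)=0$, and this is the step I expect to be hardest. A naive comparison on the left half-line fails for small $s_0$. What I would try first is to replace the scalar comparison by the Gram-matrix (Cauchy--Schwarz) form behind Lemma~\ref{lem:one-sided-evaluation}. Concretely, I would maximise $|p(s_0)|^2$ subject to $p(0)=0$ and fixed weighted norm, and compute the resulting reproducing-kernel constant through the inverse-Cauchy entry-sum identity of \cite{GrinbergInverseCauchy}. In this Cauchy--Schwarz argument, the constraint $p(0)=0$ should enter as a rank-one correction to the Gram matrix. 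I expect it to halve the contribution of the near-origin side, giving $N\sinh(2\pi|s_0|)$.
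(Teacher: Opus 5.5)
\textbf{Gap in the second inequality.} Your first inequality is correct and matches the paper's proof. You average Lemma~\ref{lem:one-sided-evaluation} over both half-lines with $a=2\pi$ and then compare the kernel with $e^{-2\pi|s-s_0|}$. The second inequality, however, is not proved. You stop at a one-sided bound with constant $2N\sinh(2\pi s_0)$. The missing factor of $2$ is then deferred to a Gram-matrix and rank-one-correction computation whose outcome you only expect. That plan has no supporting calculation. The inverse-Cauchy entry-sum identity you cite computes $\mathbf e^*C^{-1}\mathbf e$ for the one-sided exponential weight, not for the weight $\pi/(\cosh(2\pi s)-1)$, so nothing in the proposal shows that the constraint produces exactly $N\sinh(2\pi|s_0|)$. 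More importantly, the plan rests on a false premise: the comparison on the near-origin half-line does not fail, for any $s_0$.

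\textbf{How to close it.} The missing observation is that $K(s)=\pi/(\cosh(2\pi s)-1)=\pi/(2\sinh^2(\pi s))$ is even and decreasing on $(0,\infty)$. Moving toward the origin therefore only increases the weight. For $t\ge0$ you have $|s_0-t|\le s_0+t$. Combined with the bound you already proved on the far side, this gives
\[
 \sinh^2(\pi(s_0-t))\le\sinh^2(\pi(s_0+t))\le\sinh^2(\pi s_0)\,e^{at},\qquad a=2\pi\coth(\pi s_0).
\]
Hence $K(s)\ge K(s_0)e^{-a|s-s_0|}$ for every $s\ne0$, on both sides of $s_0$. Now apply Lemma~\ref{lem:one-sided-evaluation} with this same $a$ to both $t\mapsto p(s_0+t)$ and $t\mapsto p(s_0-t)$, and average, exactly as in your first part. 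This gives $|p(s_0)|^2\le\frac{aN}{2}\int_\R e^{-a|s-s_0|}|p(s)|^2\dd s$, and the prefactor becomes
\[
 \frac{aN}{2}\cdot\frac{\cosh(2\pi s_0)-1}{\pi}=N\coth(\pi s_0)\,2\sinh^2(\pi s_0)=N\sinh(2\pi s_0),
\]
which is the stated constant; this is the paper's argument. In particular, the hypothesis $p(0)=0$ plays no role in the constant. It only makes the weighted integral finite: if $p(0)\ne0$ the right-hand side is $+\infty$ and the inequality holds vacuously. Your expectation that the vanishing condition is what halves the constant is therefore a misdiagnosis.
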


\begin{lemma}[Two-sided exponential-weight evaluation]
\label{lem:pwlog-exponential}
Let
\begin{equation}
 p(t)=c_0+\sum_{j=1}^{N}c_j e^{i\omega_jt},
 \qquad N\geq1,
 \label{eq:pwlog-polynomial}
\end{equation}
where the $\omega_j$ are real and lie in an interval of length at most
$W\geq0$. There is no restriction on the location of this interval
relative to the constant frequency. For $a>0$, define
\begin{equation}
 A_a(W,N)=\left[
 \sqrt{\frac{8a}{\pi^2}}+
 \sqrt{\frac{4W}{\pi}
       +\frac{4a}{\pi^2}\bigl(2\log(2N)+1\bigr)}
 \right]^2.
 \label{eq:pwlog-A}
\end{equation}
Then, for every $s_0\in\R$,
\begin{equation}
 |p(s_0)|^2\leq A_a(W,N)
 \int_{\R} e^{-a|t-s_0|}|p(t)|^2\dd t.
 \label{eq:pwlog-exponential}
\end{equation}
\end{lemma}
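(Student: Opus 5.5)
The plan is to obtain the bound from a reproducing-kernel identity plus Cauchy--Schwarz, with the square-root-sum shape of $A_a(W,N)$ coming from a triangle inequality between two pieces. First I will translate: $t\mapsto p(t+s_0)$ has the same form \eqref{eq:pwlog-polynomial}, with the same frequencies and rotated coefficients, and the weight becomes $e^{-a|t|}$. It therefore suffices to treat $s_0=0$ and bound $|p(0)|^2$ by a multiple of $I:=\int_\R e^{-a|t|}|p(t)|^2\dd t$.

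Next I will write $p=c_0+q$, where $q(t)=\sum_{j\ge1}c_je^{i\omega_jt}$ has spectrum in an interval $[\alpha,\alpha+W]$. The band may contain $0$, so $c_0$ cannot be separated from $q$ by spectral filtering alone. Instead I will choose a scale $T>0$ and a kernel $k_T(t)=T^{-1}\kappa(t/T)$, with $\kappa$ smooth, even, supported in $[-1,1]$ and of unit mass. I then represent
\[
 p(0)=\int_\R k_T(t)\,p(t)\dd t+\Bigl[p(0)-\int_\R k_T(t)p(t)\dd t\Bigr].
\]
For the averaged term, Cauchy--Schwarz against $e^{-a|t|}$ gives the bound $\bigl(\int k_T^2e^{a|t|}\bigr)^{1/2}I^{1/2}$. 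With $T\sim 1/a$ and an optimized profile (essentially $\kappa\propto\cos$ on its support), this coefficient should come out as $\sqrt{8a/\pi^2}$, the first summand in \eqref{eq:pwlog-A}.

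The correction term annihilates $c_0$, since $k_T$ has unit mass, and equals $\sum_j c_j\bigl(1-\widehat{k_T}(\omega_j)\bigr)$. Here I plan to modulate by $e^{-i\alpha t}$ so that the frequencies sit in $[0,W]$, and then use the Paley--Wiener/Plancherel estimate $|r(0)|^2\le \frac{W}{2\pi}\int_{-L}^{L}|r|^2$ (up to the standard constant) for a $W$-bandlimited $r$ on a window of length $2L$. Bounding $e^{a|t|}\le e^{aL}$ on the window produces a factor $e^{aL}$. I will choose $L$ so that the contribution of the finitely many exponentials outside the window balances against $e^{-aL}$. Counting those exponentials via the one-sided bound of Lemma~\ref{lem:one-sided-evaluation} costs a factor $N$, so the natural choice is $e^{aL}\approx 2N$, i.e.\ $aL\approx\log(2N)+\tfrac12$. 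This should yield the second summand $\sqrt{4W/\pi+\tfrac{4a}{\pi^2}(2\log(2N)+1)}$. Finally, $|p(0)|\le$ (first piece) $+$ (second piece), each bounded by its coefficient times $I^{1/2}$; squaring gives \eqref{eq:pwlog-exponential}.

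The main obstacle is the second piece. There one must control the band part uniformly in where the band sits relative to the constant frequency, while paying only $\log N$ rather than the factor $N$ of Lemma~\ref{lem:one-sided-evaluation}. Concretely, I must show that the truncation error from replacing $\R$ by $[-L,L]$ is a fixed fraction of $|p(0)|^2$, so that it can be absorbed. I would first try an absorption argument: prove $\int_{|t|>L}e^{-a|t|}|q|^2\le 2Ne^{-aL}\sup|q|^2/a$ using the $N$-term structure, compare $\sup|q|$ with a local quantity, and choose $L$ as above. If that fails, I would fall back on an explicit extremal kernel: a Beurling--Selberg-type function that equals $1$ on the band, is analytic in the strip $|\operatorname{Im}|<a/2$, and has controlled weighted $L^2$ norm.
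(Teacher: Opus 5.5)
There is a genuine gap: your second piece, which carries the whole content of the lemma, does not go through, and the missing ingredient is the complex-analytic interpolant the paper uses.

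\textbf{What matches the paper.} Your two-piece architecture parallels the paper's proof. There the interpolant is $H=(1-B_0)+B_0(1-B_1)$. The term $1-B_0$ plays the role of your averaging kernel: it equals $1$ at the constant frequency and contributes $8a/\pi^2$. The term $B_0(1-B_1)$ is the correction that vanishes at $0$. The triangle inequality in the boundary $L^2$ norm then produces the square-root sum. Your first piece is also achievable: the cosine profile with $T=3/(2a)$ gives $\int k_T^2e^{a|t|}\dd t\approx0.68\,a<8a/\pi^2$.

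\textbf{Why your second piece fails.} The window estimate $|r(0)|^2\le\frac{W}{2\pi}\int_{-L}^{L}|r|^2$ is false for exponential sums. For $W=0$, $r=ce^{i\omega t}$ gives $|c|^2\le0$. Any correct local estimate needs a $1/L$-type term. Moreover, when the $N$ frequencies cluster with $WL\ll1$, $r$ can mimic an arbitrary polynomial of degree $N-1$ on $[-L,L]$, and evaluating such a polynomial at the centre costs order $N/L$.

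On top of that, bounding $e^{a|t|}\le e^{aL}$ with $e^{aL}\approx2N$ is a \emph{multiplicative} loss of order $N$. It cannot become the additive term $\frac{4a}{\pi^2}\cdot2\log(2N)$. Invoking Lemma~\ref{lem:one-sided-evaluation} reintroduces exactly the linear $aN$ that this lemma is designed to beat.

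Your absorption idea, comparing $\sup_\R|q|$ with data on $[-L,L]$, is a Tur\'an-type inequality whose constant grows exponentially in $N$. For example, $q=(e^{i\epsilon t}-1)^{N-1}$ is at most $(\epsilon L)^{N-1}$ on the window but reaches $2^{N-1}$ globally. No time-truncation scheme of this kind yields $\log N$; the logarithm has to come from using the full weight $e^{-a|t|}$.

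\textbf{The missing idea: dualize in the Laplace variable.} The resolvent $R(z)=c_0/z+\sum_jc_j/(z-i\omega_j)$ equals $\int_0^\infty e^{-zt}p(t)\dd t$ for $\operatorname{Re}z>0$ and $-\int_0^\infty e^{zt}p(-t)\dd t$ for $\operatorname{Re}z<0$. Parseval therefore gives $\int_\R(|R(a/2+iy)|^2+|R(-a/2+iy)|^2)\dd y=2\pi\int_\R e^{-a|t|}|p|^2\dd t$.

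Suppose $H$ is analytic on $S_a=\{|\operatorname{Re}z|<a/2\}$, decays at the vertical ends, and satisfies $H(0)=H(i\omega_j)=1$. The residue theorem gives $p(0)=\frac1{2\pi i}\oint RH\dd z$. Cauchy--Schwarz on the two boundary lines then yields $|p(0)|^2\le\frac{1}{2\pi}\|H\|^2_{\partial S_a,2}\int e^{-a|t|}|p|^2$.

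The paper takes $H=1-b_0^2B_1$, where $b_\omega(z)=-i\tan(\pi(z-i\omega)/(2a))$ is unimodular on $\operatorname{Re}z=\pm a/2$. Here $B_1$ is a product of an even number $M\le2N$ of such factors, one for each $\omega_j$, with $b_{\omega_1}$ repeated once when $N$ is odd. Then $|1-B_1|\le2\min\{1,Me^{-\pi\operatorname{dist}(y,[\alpha,\beta])/a}\}$. The $W$ term comes from the band itself, and $\log M$ comes from the frequency-side length $\frac a\pi\log M$ before this bound decays.

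\textbf{Your fallback.} It points toward this dual problem but is misstated. An analytic function equal to $1$ on the whole segment $i[\alpha,\beta]$ inside the strip is identically $1$ and has infinite boundary norm. It must interpolate only at the nodes $0$ and $i\omega_j$. The number of nodes is exactly what produces $\log N$ rather than $N$.
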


For a faithful
density matrix $a$, write $\Lambda_a=\operatorname{spec}(\log a)$ and
$r_a=|\Lambda_a|$, counting distinct eigenvalues, and let
\begin{equation}
 w(a)=\log\frac{\lambda_{\max}(a)}{\lambda_{\min}(a)}.
 \label{eq:pwlog-spectral-data}
\end{equation}

\begin{lemma}[Evaluation against the two Poisson weights]
\label{lem:pwlog-Poisson}
Under the hypotheses of Lemma~\ref{lem:pwlog-exponential}, define
\begin{equation}
 C_{\mathrm{eval}}(W,N)=\frac{8}{\pi^2}
 \left[2+\sqrt{W+4\log(2N)+2}\right]^2.
 \label{eq:pwlog-K}
\end{equation}
For every $s_0\in\R$,
\begin{equation}
 |p(s_0)|^2\leq C_{\mathrm{eval}}(W,N)\cosh^2(\pi s_0)
 \int_{\R}\frac{\pi|p(t)|^2}{\cosh(2\pi t)+1}\dd t.
 \label{eq:pwlog-plus-evaluation}
\end{equation}
If in addition $p(0)=0$, then
\begin{align}
 |p(s_0)|^2
 \leq\frac12 C_{\mathrm{eval}}(W,N)\sinh(2\pi|s_0|)
 \int_{\R}\frac{\pi|p(t)|^2}{\cosh(2\pi t)-1}\dd t.
 \label{eq:pwlog-minus-evaluation}
\end{align}
For the faithful density matrices $\sigma$ and $\Phi(\sigma)$ in our setup,
\begin{equation}
 C_{\mathrm{eval}}\!\left(w(\sigma)+w(\Phi(\sigma)),r_\sigma r_{\Phi(\sigma)}\right)
 \leq16(1+\op{\log\sigma}+\op{\log\Phi(\sigma)}).
 \label{eq:pwlog-log-simplification}
\end{equation}
\end{lemma}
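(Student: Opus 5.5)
The plan is to derive both Poisson-weight inequalities from Lemma~\ref{lem:pwlog-exponential} by bounding each Poisson kernel below by a multiple of an exponential weight centered at $s_0$, with a choice of $a$ that keeps $A_a$ under control.

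\textbf{Plus-weight inequality.} We have
\[
 \frac{\pi}{\cosh(2\pi t)+1}=\frac{\pi}{2\cosh^2(\pi t)}\ge 2\pi e^{-2\pi|t|}\cdot\tfrac14\cdot 4=\ldots,
\]
and more precisely $\cosh(\pi t)\le e^{\pi|t|}$, so $\frac{\pi}{2\cosh^2(\pi t)}\ge\frac{\pi}{2}e^{-2\pi|t|}$. To move the center to $s_0$ we use $|t|\le|t-s_0|+|s_0|$, which gives $e^{-2\pi|t|}\ge e^{-2\pi|s_0|}e^{-2\pi|t-s_0|}$. That loses $e^{2\pi|s_0|}\le4\cosh^2(\pi s_0)$, which matches the claimed $\cosh^2(\pi s_0)$ factor. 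Applying Lemma~\ref{lem:pwlog-exponential} with $a=2\pi$ then gives
\[
|p(s_0)|^2\le A_{2\pi}(W,N)\cdot\tfrac{2}{\pi}\cdot 4\cosh^2(\pi s_0)\int\frac{\pi|p|^2}{\cosh(2\pi t)+1}.
\]
It remains to check $\tfrac{8}{\pi}A_{2\pi}\le C_{\mathrm{eval}}$. We have $A_{2\pi}=\bigl[\sqrt{16/\pi}+\sqrt{(4/\pi)(W+4\log(2N)+2)}\bigr]^2=\tfrac{4}{\pi}\bigl[2+\sqrt{W+4\log(2N)+2}\bigr]^2$. This yields $\tfrac{32}{\pi^2}[\cdot]^2$, which is off by a factor of $4$ from $C_{\mathrm{eval}}$. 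So the loose step is the kernel comparison, and I would sharpen it. One option is to center at $s_0$ using $\cosh(\pi t)\le\cosh(\pi s_0)\cosh(\pi(t-s_0))(1+\tanh\cdots)$, or directly $\cosh(\pi t)\le 2\cosh(\pi s_0)\cosh(\pi(t-s_0))\cdot\tfrac12\cdot 2$. Alternatively, one can use $\cosh(x+y)\le \cosh x\, e^{|y|}$, so that $\frac{1}{\cosh^2(\pi t)}\ge\frac{e^{-2\pi|t-s_0|}}{\cosh^2(\pi s_0)}$. This gives kernel $\ge\frac{\pi}{2\cosh^2(\pi s_0)}e^{-2\pi|t-s_0|}$, hence the factor $\tfrac{2}{\pi}A_{2\pi}\cosh^2(\pi s_0)=\tfrac{8}{\pi^2}[\cdot]^2$, which is exactly $C_{\mathrm{eval}}$.

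\textbf{Minus-weight inequality, with $p(0)=0$.} Here $p$ vanishes at $0$, so the plus estimate cannot be applied naively. I would instead write $\frac{\pi}{\cosh(2\pi t)-1}=\frac{\pi}{2\sinh^2(\pi t)}$ and compare it with the plus weight applied to a modified exponential polynomial. The first candidate is $q(t)=p(t)\cdot\frac{\cosh(\pi t)}{\sinh(\pi t)}$, but this is not an exponential polynomial. The better route is to use the exponential weight directly: since $\sinh^2(\pi t)\le \tfrac14 e^{2\pi|t|}$, the minus kernel is at least $2\pi e^{-2\pi|t|}$, which dominates the plus kernel by a factor of $4$. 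Recentering as above costs $e^{2\pi|s_0|}$, and this must be compared with $\sinh(2\pi|s_0|)$. That works for large $|s_0|$ but fails as $s_0\to0$, where the claim forces $|p(s_0)|\to0$. For small $|s_0|$ I would therefore use the vanishing at $0$ to interpolate. Writing $p(s_0)=\int_0^{s_0}p'$, with $p'$ again a finite exponential polynomial with the same frequencies (at most $W$ in spread, with the constant term dropped), one can bound $|p(s_0)|^2$ by combining the exponential estimate for $p$ with one centered between $0$ and $s_0$. Alternatively, one can apply the exponential lemma to $p(t)\,$ against the weight $e^{-a|t-s_0|}-e^{-a|t+s_0|}$-type difference. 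Producing exactly the $\sinh(2\pi|s_0|)$ factor with constant $\tfrac12C_{\mathrm{eval}}$ is the main obstacle. My first attempt would be the minus-kernel lower bound $\frac{\pi}{2\sinh^2(\pi t)}\ge \frac{\pi}{2\cosh^2(\pi t)}\cdot\frac{\cosh^2\pi t}{\sinh^2\pi t}$, combined with the fact that $\coth^2\ge1$ plus a careful treatment near $t=0$.

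\textbf{Spectral simplification \eqref{eq:pwlog-log-simplification}.} Frequencies of the relevant exponential polynomials are differences $\lambda-\mu$ with $\lambda\in\Lambda_\sigma$ and $\mu\in\Lambda_{\Phi(\sigma)}$. Their number is at most $r_\sigma r_{\Phi(\sigma)}$, and they lie in an interval of length $w(\sigma)+w(\Phi(\sigma))$. Using $w(a)\le2\op{\log a}$ and $\log(2r_\sigma r_{\Phi(\sigma)})\le\log 2+\op{\log\sigma}+\op{\log\Phi(\sigma)}$ (since $r_a\le\dim\le 1/\lambda_{\min}(a)$), the quantity under the square root is at most $6(1+L)$ plus a constant, where $L=\op{\log\sigma}+\op{\log\Phi(\sigma)}$. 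Then $(2+\sqrt{x})^2\le 2(4+x)$ together with $\tfrac{8}{\pi^2}<1$ gives $\le16(1+L)$, after a routine constant check.
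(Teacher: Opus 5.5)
Your final plus-weight argument ($\cosh(x+y)\le\cosh x\,e^{|y|}$, hence $\pi/[\cosh(2\pi t)+1]\ge\frac{\pi}{2\cosh^2(\pi s_0)}e^{-2\pi|t-s_0|}$, then Lemma~\ref{lem:pwlog-exponential} with $a=2\pi$) is correct. It is the paper's proof, which states the same comparison as a bound $2\pi$ on the logarithmic derivative of the plus weight.

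Your spectral simplification also closes, with one caveat. The paper uses the sharper $w(a)\le\op{\log a}$, which follows from $\lambda_{\max}(a)\le1$. With your $w(a)\le2\op{\log a}$ you get $\frac{8}{\pi^2}(12+8\log2+12L)\approx14.2+9.7L\le16(1+L)$. This needs the numerical value $8/\pi^2\approx0.81$, since $8/\pi^2<1$ alone leaves the constant $12+8\log2\approx17.5>16$.

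The genuine gap is the minus-weight inequality \eqref{eq:pwlog-minus-evaluation}, which you leave open. Your fixed-rate attempt gives the coefficient $\frac{2}{\pi^2}e^{2\pi|s_0|}[\cdots]^2$. This exceeds the target $\frac{2}{\pi^2}(e^{2\pi|s_0|}-e^{-2\pi|s_0|})[\cdots]^2$ for every $s_0$, not only small ones. More fundamentally, no $s_0$-independent rate $a$ can work in this comparison. The best constant $c$ in $\pi/[\cosh(2\pi t)-1]\ge c\,e^{-a|t-s_0|}$ stays bounded as $s_0\to0$ (test at a fixed $t$ away from the origin), so the resulting coefficient $A_a/c$ cannot vanish. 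The remedies you sketch, integrating $p'$ from $0$ to $s_0$ or using difference weights, are not carried out, and they are not needed.

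The missing idea is to let the exponential rate depend on $s_0$. Put $v=|s_0|>0$, $K(t)=\pi/[\cosh(2\pi t)-1]$ and $a_v=2\pi\coth(\pi v)$.

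\begin{itemize}
\item The weight $K$ is even and decreasing on $(0,\infty)$, with $-\frac{\dd}{\dd x}\log K(x)=2\pi\coth(\pi x)\le a_v$ for $x\ge v$. Since $|t|\le v+|t-s_0|$, this gives $K(t)\ge K(v+|t-s_0|)\ge K(v)e^{-a_v|t-s_0|}$ for every $t\ne0$, including $|t|<v$.
\item Lemma~\ref{lem:pwlog-exponential} with $a=a_v$ then yields $|p(s_0)|^2\le A_{a_v}(W,N)K(v)^{-1}\int_\R K(t)|p(t)|^2\dd t$.
\item Factoring $\coth(\pi v)$ out of $A_{a_v}$ gives $A_{a_v}=\frac{4\coth(\pi v)}{\pi}\bigl[2+\sqrt{W\tanh(\pi v)+4\log(2N)+2}\bigr]^2$.
\item With $K(v)^{-1}=2\sinh^2(\pi v)/\pi$ and $\coth(\pi v)\sinh^2(\pi v)=\frac12\sinh(2\pi v)$, the coefficient is exactly $\frac12C_{\mathrm{eval}}(W\tanh(\pi v),N)\sinh(2\pi v)\le\frac12C_{\mathrm{eval}}(W,N)\sinh(2\pi v)$.
\end{itemize}

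The steep rate $a_v\sim2/v$ is what lets the singular mass of the minus weight near the origin force $p(s_0)$ to be small. The evaluation constant grows only linearly in $a$, while $K(v)^{-1}=O(v^2)$. The hypothesis $p(0)=0$ enters only to make the right-hand side finite, not through any interpolation.
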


We now apply these inequalities at prescribed parameters.
\begin{theorem}[Pointwise bounds at prescribed real parameters]
\label{thm:finite-recovery}
For $s\in\R$, define $C_+(s)$ below, and for $s\ne0$ define $C_-(s)$:
\begin{equation}\label{eq:pwlog-upper-coefficient}
\begin{aligned}
 C_+(s)&=\min\left\{2(r_\rho+r_{\Phi(\sigma)}),
 C_{\mathrm{eval}}\!\left(w(\sigma)+w(\Phi(\sigma)),r_\sigma r_{\Phi(\sigma)}\right)\right\}
 \cosh^2(\pi s).
\end{aligned}
\end{equation}
\begin{equation}\label{eq:pwlog-lower-coefficient}
\begin{aligned}
 C_-(s)&=\min\left\{r_\sigma+r_{\Phi(\rho)},
 \frac12C_{\mathrm{eval}}\!\left(\bigl(w(\sigma)+w(\Phi(\sigma))\bigr)\tanh(\pi|s|),
 r_\sigma r_{\Phi(\sigma)}\right)\right\}\sinh(2\pi|s|).
\end{aligned}
\end{equation}
For every prescribed $t\in\R$ and $s\ne0$,
\begin{equation}\label{eq:finite-recovery-simple}
 \begin{aligned}
 \HS{\rho^{1/2}-\Gamma(1/2+it)}^2+\varepsilon_1(t)&\le C_+(t)\int_\R\frac{\pi\bigl(\HS{\rho^{1/2}-\Gamma(1/2+iu)}^2+\varepsilon_1(u)\bigr)}{\cosh(2\pi u)+1}\dd u\le C_+(t)\Delta,
 \end{aligned}
\end{equation}
\begin{equation}\label{eq:compressed-lower-simple}
 \begin{aligned}
 \HS{\rho^{1/2}-\Gamma(-is)}^2+\varepsilon_0(s)&\le C_-(s)\int_\R\frac{\pi\bigl(\HS{\rho^{1/2}-\Gamma(-iu)}^2+\varepsilon_0(u)\bigr)}{\cosh(2\pi u)-1}\dd u \le C_-(s)\Delta.
 \end{aligned}
\end{equation}
For simpler logarithmic bounds, one may replace $C_+(t)$ and $C_-(s)$
in the individual estimates by
$16(1+\op{\log\sigma}+\op{\log\Phi(\sigma)})\cosh^2(\pi t)$ and
$8(1+\op{\log\sigma}+\op{\log\Phi(\sigma)})\sinh(2\pi|s|)$, respectively.
The logarithmic coefficients depend on neither the logarithmic norms nor
the spectral counts of $\rho$ and $\Phi(\rho)$.
The lower-boundary bounds extend to $s=0$ with right-hand side zero.
\end{theorem}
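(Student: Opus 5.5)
The plan is to recognise each boundary numerator in Theorem~\ref{thm:finite-recovery} as the squared norm of a vector-valued finite exponential polynomial in the rotation parameter. Lemmas~\ref{lem:frequency-evaluation} and~\ref{lem:pwlog-Poisson} then apply componentwise. Their right-hand sides add up over an orthonormal basis of components, since the constants depend only on the frequency data and not on the coefficients. The final inequality $\le C_\pm\Delta$ is then immediate from the boundary identity \eqref{eq:loss-expanded} of Theorem~\ref{thm:loss}, because both integrands there are nonnegative and each Poisson integral is bounded by the full sum.

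\textbf{Step 1: absorbing the Schwarz defects.} Fix a Stinespring form $\Phi^\dagger(X)=V^*(X\otimes\one_E)V$ with $V$ an isometry. Then the defect in \eqref{eq:eps0} factors as
\[
\Phi^\dagger(BB^*)-\Phi^\dagger(B)\Phi^\dagger(B)^*=V^*(B\otimes\one)(\one-VV^*)(B^*\otimes\one)V .
\]
Hence $\varepsilon_1(t)=\HS{(\one-VV^*)(B_t^*\otimes\one)V\sigma^{1/2-it}}^2$. Since $V^*(\one-VV^*)=0$, this block is orthogonal to the range of $V^*$. So I can assemble one block vector
\[
\Xi_+(t)=\bigl(\rho^{1/2}-\Gamma(1/2+it),\;(\one-VV^*)(B_t^*\otimes\one)V\sigma^{1/2-it}\bigr),
\]
whose squared Hilbert--Schmidt norm is exactly the upper numerator. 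The same construction with $U_s$ gives $\varepsilon_0(s)=\HS{(\one-VV^*)^{1/2}(U_s\otimes\one)V\rho^{1/2}}^2$, and a block vector $\Xi_-(s)$ for the lower numerator. By the unitary invariance already used in \eqref{eq:lower-polarization}, I may multiply $\Xi_\pm$ on the left or right by $s$-dependent unitaries such as $\rho^{\pm is}$, $\sigma^{\pm is}$, or $\Phi(\rho)^{\pm is}\otimes\one$. I will choose these multipliers to strip off as many frequency sources as possible. For example, right-multiplying by $\rho^{is}$ turns $\rho^{1/2}$ into $\rho^{1/2+is}$ and removes the trailing $\rho^{-is}$ from $\Gamma(1/2+is)$.

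\textbf{Step 2: frequency counting, which I expect to be the main obstacle.} In eigenbases of the remaining rotating factors, each matrix entry of the normalised $\Xi_+$ is $\sum_j c_je^{i\omega_jt}$. The frequencies are sums and differences of eigenvalues of $\log\sigma$, $\log\Phi(\sigma)$, $\log\Phi(\rho)$ and $\log\rho$. The task is to arrange the unitary multipliers so that, entry by entry, the number of distinct frequencies is at most $2(r_\rho+r_{\Phi(\sigma)})$ for $\Xi_+$ and at most $r_\sigma+r_{\Phi(\rho)}$ for $\Xi_-$. For the logarithmic alternative, the constant term must be separated out. After conjugating away the $\rho$ and $\Phi(\rho)$ factors, every nonconstant frequency should be of the form $\lambda-\mu$ with $\lambda\in\Lambda_\sigma$ and $\mu\in\Lambda_{\Phi(\sigma)}$. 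There are at most $r_\sigma r_{\Phi(\sigma)}$ of these, lying in an interval of length $w(\sigma)+w(\Phi(\sigma))$, which is exactly the input of Lemma~\ref{lem:pwlog-Poisson}. I will first try the multiplier $\rho^{is}$ on the right together with $\Phi(\rho)^{-is}$ inside, and check entrywise which products survive. If the count does not close, I would fall back on bounding by the product of spectral counts and see whether the stated constants still follow.

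\textbf{Step 3: applying the evaluation lemmas.} For the upper boundary I apply \eqref{eq:finite-frequency-evaluation} or \eqref{eq:pwlog-plus-evaluation} to every entry of $\Xi_+$, sum the results, and take the minimum of the two constants; this gives $C_+(t)$. For the lower boundary I note that $\Xi_-(0)=0$, because $\Gamma(0)=\rho^{1/2}$, $U_0=\one$ and $\varepsilon_0(0)=0$. So each entry vanishes at $s=0$, and \eqref{eq:finite-frequency-vanishing} or \eqref{eq:pwlog-minus-evaluation} applies. The reduced width $(w(\sigma)+w(\Phi(\sigma)))\tanh(\pi|s|)$ in $C_-$ suggests an additional rescaling: before invoking the exponential-weight estimate inside Lemma~\ref{lem:pwlog-Poisson}, one would restrict to a window adapted to $s_0$, so the effective spread scales by $\tanh$. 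I would verify this by rerunning that lemma's proof rather than using it as a black box. The simplified logarithmic constants then follow from \eqref{eq:pwlog-log-simplification}, together with $\tanh\le1$ and monotonicity of $C_{\mathrm{eval}}$ in $W$. The claim at $s=0$ is simply $\Xi_-(0)=0$.
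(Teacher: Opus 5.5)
There is a genuine gap in Steps 1--2; the evaluation lemmas in Step 3 are not the problem. Your block vector $\Xi_+(t)=\bigl(\rho^{1/2}-\Gamma(1/2+it),\,(\one-VV^*)(B_t^*\otimes\one)V\sigma^{1/2-it}\bigr)$ cannot produce the frequency counts the theorem asserts.

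In $\Gamma(1/2+it)=\sigma^{1/2+it}V^*\bigl(\Phi(\sigma)^{-1/2-it}\Phi(\rho)^{1/2+it}\otimes\one\bigr)V\rho^{-it}$, the pair $\Phi(\sigma)^{-it}\Phi(\rho)^{it}$ is trapped between $V^*$ and $V$. In your defect block it is trapped behind $\one-VV^*$. No $s$-dependent unitary applied on the outside removes it. After any outer multipliers and spectral projections, the entries generically carry frequencies $d-c$ (up to a shift), with $(c,d)\in\Lambda_{\Phi(\sigma)}\times\Lambda_{\Phi(\rho)}$. That is up to $r_{\Phi(\sigma)}r_{\Phi(\rho)}$ frequencies, and on the logarithmic route a cluster of width $w(\Phi(\sigma))+w(\Phi(\rho))$. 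The same happens for $\Xi_-$.

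Your fallback would therefore prove a weaker statement. You would get product counts instead of $r_\rho+r_{\Phi(\sigma)}$ and $r_\sigma+r_{\Phi(\rho)}$. You would also get logarithmic coefficients that depend on $\Phi(\rho)$, contrary to the theorem's final claims. Note also that the upper-boundary target is $N\le r_\rho+r_{\Phi(\sigma)}$ frequencies, not $2(r_\rho+r_{\Phi(\sigma)})$. The factor $2$ in $C_+$ is already the $2N$ of \eqref{eq:finite-frequency-evaluation}.

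The missing idea is not to split the defect off. Each numerator equals $2-2\Re\Tr\rho^{1/2}\Gamma$, and you can realize that overlap between two unit vectors in the dilated space with all rotations sitting outside $V$:
\[
 E_+(t)=(\Phi(\rho)^{it}\otimes\one)V\rho^{1/2-it}
 -(\Phi(\rho)^{1/2}\Phi(\sigma)^{-1/2+it}\otimes\one)V\sigma^{1/2-it},
 \qquad
 E_-(s)=(\Phi(\sigma)^{-is}\otimes\one)V\sigma^{is}\rho^{1/2}
 -(\Phi(\rho)^{-is}\otimes\one)V\rho^{1/2+is}.
\]
Each term has unit Hilbert--Schmidt norm. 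The cross terms are $\overline{\Tr\rho^{1/2}\Gamma(1/2+it)}$ and $\Tr\rho^{1/2}\Gamma(-is)$. Hence $\HS{E_\pm}^2$ are exactly the two numerators; this is the paper's Kraus lifting $\sum_j\HS{D_j^\pm}^2$ in Stinespring form.

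Your $\Xi_+$ is precisely the orthogonal splitting of $E_0(t)=(\Phi(\rho)^{-it}\otimes\one)E_+(t)$ by $VV^*$ and $\one-VV^*$. Indeed $VV^*E_0(t)=V\rho^{-it}\bigl(\rho^{1/2}-\Gamma(1/2+it)\bigr)^*$, and $(\one-VV^*)E_0(t)$ is minus your defect block. That splitting is what pushes the rotations inside.

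With $E_+$, project on the left onto eigenspaces of $\log\Phi(\rho)\otimes\one$ and on the right onto those of $\log\sigma$. This gives frequencies $d-b$ and $c-a$, at most $r_\rho+r_{\Phi(\sigma)}$ per block. For $E_-$, project on the left by $\log\Phi(\sigma)$ and on the right by $\log\rho$. This gives $a-c$ and $b-d$, at most $r_\sigma+r_{\Phi(\rho)}$, and every block vanishes at $s=0$.

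For the logarithmic route, conjugate first. Both $(\Phi(\rho)^{-it}\otimes\one)E_+(t)\rho^{it}$ and $(\Phi(\rho)^{is}\otimes\one)E_-(s)\rho^{-is}$ become a constant plus a translate, by $\pm(b-d)$, of the cluster $\{\pm(c-a)\}$. So only $\sigma$ and $\Phi(\sigma)$ enter.

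From there your Step 3 works as you describe: apply the lemmas entrywise, sum over orthogonal blocks, and bound each weighted integral by $\Delta$ via \eqref{eq:loss-expanded}. One correction: the $\tanh(\pi|s|)$ does not come from a window restriction. It is the algebraic identity $A_{a_v}(W,N)[\cosh(2\pi v)-1]/\pi=\tfrac12C_{\mathrm{eval}}(W\tanh(\pi v),N)\sinh(2\pi v)$, with $a_v=2\pi\coth(\pi v)$, already present in the proof of Lemma~\ref{lem:pwlog-Poisson}.
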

\label{sec:finite-recovery-proof}
\label{sec:finite-modular-proof}
\begin{proof}

Write
$\Phi(X)=\sum_jK_jXK_j^*$ within this proof. Define
\begin{equation}
 D_j^+(s)=\Phi(\rho)^{is}K_j\rho^{1/2-is}
 -\Phi(\rho)^{1/2}\Phi(\sigma)^{-1/2+is}
  K_j\sigma^{1/2-is}.
 \label{eq:pwlog-upper-Kraus}
\end{equation}
Both families in this difference have total squared norm one. For
the first this follows from $\sum_jK_j^*K_j=\one$. For the second,
\begin{equation*}
 \sum_j\HS{\Phi(\rho)^{1/2}\Phi(\sigma)^{-1/2+is} K_j\sigma^{1/2-is}}^2 \quad=\Tr\!\left[ \Phi(\rho)^{1/2}\Phi(\sigma)^{-1/2+is} \left(\sum_jK_j\sigma K_j^*\right) \Phi(\sigma)^{-1/2-is}\Phi(\rho)^{1/2}\right]=1.
\end{equation*}
Expanding the squared norms and using
$\varepsilon_1(s)=1-\HS{\Gamma(1/2+is)}^2$ therefore gives
\begin{equation}
 \HS{\rho^{1/2}-\Gamma(1/2+is)}^2+\varepsilon_1(s)
 =\sum_j\HS{D_j^+(s)}^2.
 \label{eq:pwlog-upper-lifting}
\end{equation}
For the lower boundary, set
\begin{equation}
 D_j^-(s)=\Phi(\sigma)^{-is}K_j\sigma^{is}\rho^{1/2}
             -\Phi(\rho)^{-is}K_j\rho^{1/2+is}.
 \label{eq:pwlog-lower-Kraus}
\end{equation}
Unitarity and $\sum_jK_j^*K_j=\one$ show that each of the two
families has total squared norm one. Expanding the cross term gives
\begin{equation}
 \HS{\rho^{1/2}-\Gamma(-is)}^2+\varepsilon_0(s)
 =\sum_j\HS{D_j^-(s)}^2.
 \label{eq:pwlog-lower-lifting}
\end{equation}
Here the summed cross term is
\[
 \Tr\!\left[\rho^{1/2}\sigma^{-is}
 \Phi^\dagger(\Phi(\sigma)^{is}\Phi(\rho)^{-is})
 \rho^{1/2+is}\right]
 =\Tr\rho^{1/2}\Gamma(-is).
\]

For the spectral-count estimate, project $D_j^+(s)$ on the left onto an
 eigenspace of $\log\Phi(\rho)$ with eigenvalue $d$ and on the right onto
 an eigenspace of $\log\sigma$ with eigenvalue $a$. The first term then
 has frequencies $d-b$, $b\in\Lambda_\rho$, and the second has frequencies
 $c-a$, $c\in\Lambda_{\Phi(\sigma)}$. Thus each block has at most
 $r_\rho+r_{\Phi(\sigma)}$ distinct frequencies. Apply
 \eqref{eq:finite-frequency-evaluation} entrywise and sum over blocks and
 Kraus operators. For any two complete orthogonal spectral resolutions,
 $\sum_{a,d}\|Q_dMP_a\|_2^2=\|M\|_2^2$, so this introduces no additional
 factor and gives the coefficient
 $2(r_\rho+r_{\Phi(\sigma)})\cosh^2(\pi s_0)$ multiplying the upper-boundary integral.

For the lower boundary, project $D_j^-(s)$ on the left onto an
 eigenspace of $\log\Phi(\sigma)$ with eigenvalue $c$ and on the right onto
 an eigenspace of $\log\rho$ with eigenvalue $b$. The resulting frequencies
 are $a-c$, $a\in\Lambda_\sigma$, and $b-d$, $d\in\Lambda_{\Phi(\rho)}$.
 Each block has at most $r_\sigma+r_{\Phi(\rho)}$ distinct frequencies
 and vanishes at $s=0$, since $D_j^-(0)=0$. Applying
 \eqref{eq:finite-frequency-vanishing} entrywise and summing over the
 orthogonal blocks and Kraus operators gives the coefficient
 $(r_\sigma+r_{\Phi(\rho)})\sinh(2\pi|s_0|)$ multiplying the lower-boundary integral.

For the logarithmic estimate, multiply on the left by $\Phi(\rho)^{-is}$
and on the right by $\rho^{is}$ before projecting onto eigenspaces:
\begin{equation}\label{eq:pwlog-upper-frame}
 \widetilde D_j^+(s) =\Phi(\rho)^{-is}D_j^+(s)\rho^{is} =K_j\rho^{1/2} -\Phi(\rho)^{1/2-is}\Phi(\sigma)^{-1/2+is} K_j\sigma^{1/2-is}\rho^{is}.
\end{equation}
Its Hilbert--Schmidt norm equals that of $D_j^+(s)$.
Let $P_b^\rho$, $Q_d^{\Phi(\rho)}$, $P_a^\sigma$ and
$P_c^{\Phi(\sigma)}$ be the spectral projections of the respective
logarithms, with eigenvalues $b,d,a,c$. Projecting
\eqref{eq:pwlog-upper-frame} on the left by $Q_d^{\Phi(\rho)}$ and
on the right by $P_b^\rho$ gives
\begin{equation}\label{eq:pwlog-upper-block}
 Q_d^{\Phi(\rho)}\widetilde D_j^+(s)P_b^\rho =e^{b/2}Q_d^{\Phi(\rho)}K_jP_b^\rho-\sum_{a,c}e^{(d-c+a)/2} Q_d^{\Phi(\rho)}P_c^{\Phi(\sigma)}K_jP_a^\sigma P_b^\rho e^{i(b-d+c-a)s}.
\end{equation}
Every coefficient is independent of $s$, and no distinct matrix
factors have been commuted. The nonconstant frequencies belong to
\[
 (b-d)+\{c-a:
 a\in\operatorname{spec}(\log\sigma),\
 c\in\operatorname{spec}(\log\Phi(\sigma))\}.
\]
There are at most $r_\sigma r_{\Phi(\sigma)}$ such frequencies, all in an
interval of width at most $w(\sigma)+w(\Phi(\sigma))$. The quantities $b,d$
only translate this interval and do not affect either parameter.
Apply \eqref{eq:pwlog-plus-evaluation} to each block, using
monotonicity of $C_{\mathrm{eval}}(W,N)$ if fewer frequencies occur.
For every matrix $M$,
\begin{equation}
 \sum_{b,d}\HS{Q_d^{\Phi(\rho)}MP_b^\rho}^2=\HS M^2.
 \label{eq:pwlog-orthogonality}
\end{equation}
Summing the block inequalities and then summing over $j$ recovers
\eqref{eq:pwlog-upper-lifting}, both pointwise and inside the
integral. Taking the minimum of this coefficient and the spectral-count
coefficient proves \eqref{eq:finite-recovery-simple}. 
For the logarithmic estimate, multiply on the left by $\Phi(\rho)^{is}$
and on the right by $\rho^{-is}$:
\begin{equation}\label{eq:pwlog-lower-frame}
 \widetilde D_j^-(s) =\Phi(\rho)^{is}D_j^-(s)\rho^{-is} =\Phi(\rho)^{is}\Phi(\sigma)^{-is} K_j\sigma^{is}\rho^{1/2-is}-K_j\rho^{1/2}.
\end{equation}
With the same spectral projections,
\begin{equation}\label{eq:pwlog-lower-block}
 Q_d^{\Phi(\rho)}\widetilde D_j^-(s)P_b^\rho =\sum_{a,c}e^{b/2} Q_d^{\Phi(\rho)}P_c^{\Phi(\sigma)}K_jP_a^\sigma P_b^\rho e^{i(d-b+a-c)s}-e^{b/2}Q_d^{\Phi(\rho)}K_jP_b^\rho.
\end{equation}
This is a constant plus a cluster of at most $r_\sigma r_{\Phi(\sigma)}$
frequencies in an interval of width at most $w(\sigma)+w(\Phi(\sigma))$.
Completeness of $P_a^\sigma$ and $P_c^{\Phi(\sigma)}$ gives
\[
 Q_d^{\Phi(\rho)}\widetilde D_j^-(0)P_b^\rho=0.
\]
Thus the lower-weight argument in the proof of
Lemma~\ref{lem:pwlog-Poisson} applies. Retaining the factor
$\tanh(\pi|s_0|)$ multiplying the spectral width in that proof gives
the second coefficient in the minimum.
Sum the resulting estimates over $b,d,j$, using
\eqref{eq:pwlog-orthogonality}, unitary invariance, and
\eqref{eq:pwlog-lower-lifting}. Taking the minimum with the
spectral-count coefficient proves \eqref{eq:compressed-lower-simple}.

Finally, \eqref{eq:loss-expanded} identifies the sum of the two boundary integrals with $\Delta$, giving the bounds by the total loss.
Equation~\eqref{eq:pwlog-log-simplification},
monotonicity of $C_{\mathrm{eval}}$ in its first argument, and
$\tanh(\pi|s|)\le1$ justify the stated logarithmic replacements in the individual bounds. At zero, $\Gamma(0)=\rho^{1/2}$ and
$\varepsilon_0(0)=0$. The simple lower-boundary coefficient vanishes
linearly in $|s|$.
\end{proof}

\begin{corollary}[Pointwise BS boundary bounds]\label{cor:bs-pointwise-boundary}
Under the faithful finite-dimensional assumptions of
Theorem~\ref{thm:bs-poisson}, let
$N=r_{\widehat\ell_{\mathrm{BS}}}$ count the distinct eigenvalues of
$\widehat\ell_{\mathrm{BS}}$ and set
\begin{equation}\label{eq:bs-pointwise-coefficient}
 W=\log\frac{\lambda_{\max}(\widehat\ell_{\mathrm{BS}})}
                  {\lambda_{\min}(\widehat\ell_{\mathrm{BS}})},
 \qquad
 C_{\mathrm{BS}}=\min\{2(N+1),C_{\mathrm{eval}}(W,N)\},
\end{equation}
where $C_{\mathrm{eval}}$ is defined in \eqref{eq:pwlog-K}.
For every $s\in\R$,
\begin{align}
 \HS{\Gamma_{\mathrm{BS}}(0)-\Gamma_{\mathrm{BS}}(1/2+is)}^2
 +\delta_{\mathrm{BS},1}(s)
 &\leq C_{\mathrm{BS}}\cosh^2(\pi s)\,\Delta_{\mathrm{BS}},
 \label{eq:bs-pointwise-upper}\\
 \HS{\Gamma_{\mathrm{BS}}(0)-\Gamma_{\mathrm{BS}}(is)}^2
 +\delta_{\mathrm{BS},0}(s)
 &\leq\tfrac12 C_{\mathrm{BS}}\sinh(2\pi|s|)\,\Delta_{\mathrm{BS}}.
 \label{eq:bs-pointwise-lower}
\end{align}
Each term on the left is nonnegative and hence obeys the same bound
separately. The coefficients depend only on the spectral count or
logarithmic spectral width of the output likelihood ratio.
\end{corollary}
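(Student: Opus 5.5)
Following the proof of Theorem~\ref{thm:finite-recovery}, I will write each boundary numerator as a squared Hilbert--Schmidt norm of a matrix-valued exponential polynomial in $s$. I will then apply Lemma~\ref{lem:frequency-evaluation} (spectral count) or Lemma~\ref{lem:pwlog-Poisson} (logarithmic width) entrywise. The final step bounds the resulting integral by $\Delta_{\mathrm{BS}}$ through \eqref{eq:bs-poisson}, whose two boundary integrals are nonnegative and sum to $\Delta_{\mathrm{BS}}$.

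\textbf{Step 1: lifting.} Write $\Phi^\dagger(Y)=\sum_jK_j^*YK_j$ and set $M_j(z)=\ell_{\mathrm{BS}}^{1/2-z}\sigma^{1/2}K_j^*\Phi(\sigma)^{-1/2}\widehat\ell_{\mathrm{BS}}^{\,z}$. Then $\Gamma_{\mathrm{BS}}(z)=\sum_jM_j(z)K_j\Phi(\sigma)^{1/2}$ up to the inner placement of $\Phi(\sigma)^{1/2}$, which I will fix as in \eqref{eq:bs-Gamma}. I expect Lemma~\ref{lem:bs-contraction} to come from exactly such a Stinespring-type lift, with a partial isometry $V$ under which $\Gamma_{\mathrm{BS}}(z)=\mathcal V^*G(z)$ and $\HS{G(j/2+is)}=1$. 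Here $G(z)$ is a lifted matrix whose left factor $\ell_{\mathrm{BS}}^{1/2-z}$ on the boundary is a unitary times a fixed operator, and whose right factor is $\widehat\ell_{\mathrm{BS}}^{\,z}$ sandwiched by $\Phi(\sigma)^{\mp1/2}$. Exactly as in \eqref{eq:pwlog-upper-lifting}, expanding norms gives
\[
\HS{\Gamma_{\mathrm{BS}}(0)-\Gamma_{\mathrm{BS}}(j/2+is)}^2+\delta_{\mathrm{BS},j}(s)=\HS{G_0-\widetilde G(j/2+is)}^2 .
\]
Here $G_0$ is a lift of $\Gamma_{\mathrm{BS}}(0)$ of unit norm, chosen to share the cross term. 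The two cross terms agree because $\Tr\Gamma_{\mathrm{BS}}(0)^\dagger\Gamma_{\mathrm{BS}}(z)$ equals the inner product of the lifts.

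\textbf{Step 2: frequency structure.} I will multiply on the left by the unitary $\ell_{\mathrm{BS}}^{is}$ (respectively $\ell_{\mathrm{BS}}^{-is}$), which removes the $s$-dependence of the input factor, as the frames \eqref{eq:pwlog-upper-frame}--\eqref{eq:pwlog-lower-frame} did. The only remaining $s$-dependence is then $\widehat\ell_{\mathrm{BS}}^{\,is}$. Projecting on the right onto eigenspaces of $\widehat\ell_{\mathrm{BS}}$ turns each block into a constant term, coming from $G_0$, plus at most $N$ frequencies. These frequencies lie in an interval of width $W$, and together with the constant give at most $N+1$ distinct frequencies in total. Orthogonality \eqref{eq:pwlog-orthogonality} ensures that summing over blocks loses nothing. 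Lemma~\ref{lem:frequency-evaluation} then gives the coefficient $2(N+1)\cosh^2(\pi s)$, and Lemma~\ref{lem:pwlog-Poisson} gives $C_{\mathrm{eval}}(W,N)\cosh^2(\pi s)$ for the upper boundary. Taking the minimum and using \eqref{eq:bs-poisson} yields \eqref{eq:bs-pointwise-upper}.

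\textbf{Step 3: lower boundary.} At $s=0$ the lifted difference vanishes, since $\Gamma_{\mathrm{BS}}(0)$ is reproduced and $\delta_{\mathrm{BS},0}(0)=0$. This must hold blockwise, and it does because the right projections commute with $\widehat\ell_{\mathrm{BS}}^{0}=\one$. The vanishing versions \eqref{eq:finite-frequency-vanishing} and \eqref{eq:pwlog-minus-evaluation} then give the coefficients $(N+1)\sinh(2\pi|s|)$ and $\tfrac12C_{\mathrm{eval}}\sinh(2\pi|s|)$, both at most $\tfrac12C_{\mathrm{BS}}\sinh(2\pi|s|)$. This proves \eqref{eq:bs-pointwise-lower}. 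Since each left-hand summand is nonnegative, the separate bounds follow.

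\textbf{Main obstacle.} The hardest part is Step 1: building an explicit lift in which the defect plus the squared error is exactly one squared norm, while the only surviving $s$-dependence is through $\widehat\ell_{\mathrm{BS}}$. Only then do the coefficients avoid depending on $\ell_{\mathrm{BS}}$ or $\sigma$. I will first try the lift used for Lemma~\ref{lem:bs-contraction} in Appendix~\ref{sec:bs-poisson-bures}, checking that $G_0$ can be chosen $s$-independent after the frame change.
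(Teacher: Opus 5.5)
Your Step~1 is correct once the lift is made explicit. With $\Phi^\dagger(X)=V^\dagger(X\otimes I)V$, set $G(z)=\ell_{\mathrm{BS}}^{1/2-z}\sigma^{1/2}V^\dagger\bigl(Z_{\mathrm{BS}}(z)\otimes I\bigr)$ and $G_0=G(0)$. Then $\Gamma_{\mathrm{BS}}(z)=G(z)V$ (right multiplication by the isometry, not $\mathcal V^*G$), $G_0(I-VV^\dagger)=0$, and $\HS{G_0}=\HS{G(j/2+is)}=1$. Splitting along $VV^\dagger$ and $I-VV^\dagger$ gives $\HS{\Gamma_{\mathrm{BS}}(0)-\Gamma_{\mathrm{BS}}(j/2+is)}^2+\delta_{\mathrm{BS},j}(s)=\HS{G_0-G(j/2+is)}^2$. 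This is the paper's expression for $\delta_{\mathrm{BS},j}$ in combined form.

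The gap is in Step~2, where the coefficient is actually produced. First, the frame $\ell_{\mathrm{BS}}^{is}$ does not leave the reference term constant: $\ell_{\mathrm{BS}}^{is}G_0$ oscillates at every frequency $\log a$, $a\in\operatorname{spec}(\ell_{\mathrm{BS}})$. In Theorem~\ref{thm:finite-recovery} the frames were chosen precisely to make the reference term constant. Here it is already constant, and the frame only transfers the input frequencies onto it, so the check announced in your last paragraph fails unless $\ell_{\mathrm{BS}}=\one$. Second, right projections $Q_b$ onto eigenspaces of $\widehat\ell_{\mathrm{BS}}$ do not isolate output frequencies, because $\widehat\ell_{\mathrm{BS}}^{\,is}$ is not the rightmost factor. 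The block $G(z)(Q_b\otimes I)$ contains $\Phi(\sigma)^{-1/2}\widehat\ell_{\mathrm{BS}}^{\,z}\Phi(\sigma)^{1/2}Q_b$, and $Q_{b'}\Phi(\sigma)^{1/2}Q_b\neq0$ in general, so every $b'$ survives. Your blocks therefore carry up to $r_{\ell_{\mathrm{BS}}}+N$ frequencies in two unrelated clusters, not a constant plus $N$ frequencies of spread $W$. The resulting coefficient would depend on the spectrum of $\ell_{\mathrm{BS}}$, i.e.\ on $\rho$ and $\sigma$, contrary to the statement.

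The missing idea is to diagonalize the input factor, not the output one. Drop the frame change and project on the left onto the eigenspaces $P_a$ of $\ell_{\mathrm{BS}}$. Then $P_a\ell_{\mathrm{BS}}^{(1-j)/2-is}=a^{(1-j)/2}e^{-is\log a}P_a$, and the block $P_aG_0$ stays constant. Now use the spectral expansion $\widehat\ell_{\mathrm{BS}}^{\,j/2+is}=\sum_b b^{j/2}e^{is\log b}Q_b$ inside the sandwich, with no projection there. It shows that $P_a\bigl(G_0-G(j/2+is)\bigr)$ is a constant plus at most $N$ exponentials with frequencies $\log b-\log a$, lying in an interval of width $W$. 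For $j=0$ this block vanishes identically at $s=0$. Since $\sum_a\HS{P_aM}^2=\HS{M}^2$, your remaining steps go through: apply Lemmas~\ref{lem:frequency-evaluation} and~\ref{lem:pwlog-Poisson} entrywise, then \eqref{eq:bs-poisson}, and you obtain exactly $C_{\mathrm{BS}}$. This is the paper's argument.

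One minor point: in Step~3 the two lower coefficients are not each at most $\tfrac12C_{\mathrm{BS}}\sinh(2\pi|s|)$; their minimum equals it.
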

\begin{proof}
Apply Lemmas~\ref{lem:frequency-evaluation} and~\ref{lem:pwlog-Poisson}
as in the proof of Theorem~\ref{thm:finite-recovery}.
Indeed, write $\Phi^\dagger(X)=V^\dagger(X\otimes I)V$ with $V$ an
isometry. With $Z_{\mathrm{BS}}$ from \eqref{eq:bs-output-family},
the Schwarz identities \eqref{eq:bs-schwarz-lower}--\eqref{eq:bs-schwarz-upper}
give, for $j=0,1$,
\[
 \delta_{\mathrm{BS},j}(s)
 =\left\|\ell_{\mathrm{BS}}^{(1-j)/2-is}\sigma^{1/2}
 V^\dagger\bigl(Z_{\mathrm{BS}}(j/2+is)\otimes I\bigr)
 (I-VV^\dagger)\right\|_2^2.
\]
After left projection onto an eigenspace of $\ell_{\mathrm{BS}}$
with eigenvalue $a$, every entry in this factor and in
$\Gamma_{\mathrm{BS}}(0)-\Gamma_{\mathrm{BS}}(j/2+is)$ is a constant
plus at most $N$ exponentials with frequencies $\log b-\log a$,
$b\in\operatorname{spec}(\widehat\ell_{\mathrm{BS}})$, lying in an
interval of width $W$. For $j=0$, all these entries vanish at $s=0$.
Apply the two evaluation lemmas entrywise and sum over the orthogonal
spectral projections. The corresponding boundary integral is at most
$\Delta_{\mathrm{BS}}$ by \eqref{eq:bs-poisson}, proving both bounds;
the lower bound at zero follows directly from
$\Gamma_{\mathrm{BS}}(0)$ and $\delta_{\mathrm{BS},0}(0)=0$.
\end{proof}

\subsection{Interior bound}\label{sec:interior-pointwise}
The full numerator in \eqref{eq:loss-expanded} extends to a nonnegative
harmonic function inside the strip. This allows us to bound the matrix
difference and the boundary defects together, directly from the Poisson
formula.

\begin{theorem}[Interior bound including the boundary defects]\label{sv:complex-interior}
For $z=x+it$ with $0<x<1/2$,
\begin{equation}\label{sv:complex-values}
 0\leq\HS{\rho^{1/2}-\Gamma(z)}^2+1-\HS{\Gamma(z)}^2
 \leq\frac{2\bigl[\cosh(2\pi t)-\cos(2\pi x)\bigr]}
 {\pi\sin(2\pi x)}\,\Delta.
\end{equation}
Both terms on the left are nonnegative. On the two boundaries,
$1-\HS{\Gamma(-is)}^2=\varepsilon_0(s)$ and
$1-\HS{\Gamma(1/2+is)}^2=\varepsilon_1(s)$, so this expression agrees
with the respective numerators of \eqref{eq:loss-expanded}.
The coefficient depends only on the interior point $z$, not on the
dimensions or the spectra of the states.
In particular, at $z=1/4$,
\begin{equation}\label{sv:interior-midpoint}
 \HS{\rho^{1/2}-\Gamma(1/4)}^2\leq\frac{2\Delta}{\pi},
 \qquad
 \HS{\Gamma(1/4)}^2\geq1-\frac{2\Delta}{\pi}.
\end{equation}
\end{theorem}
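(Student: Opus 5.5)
The plan is to reduce the left-hand side of \eqref{sv:complex-values} to the harmonic function $h_{\one}$ of \eqref{eq:h}, and then compare Poisson kernels.

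\textbf{Step 1: identify the left side with $h_{\one}$.} Take $\omega=\one$ in \eqref{eq:polarization}. That identity is purely algebraic, since it only expands a squared norm, so it holds at every point of the strip and not just on the boundary:
\[
\HS{\rho^{1/2}-\Gamma(z)}^2+1-\HS{\Gamma(z)}^2=2-2\Re\Tr\rho^{1/2}\Gamma(z)=2h_{\one}(z).
\]
On the two boundary lines this reduces to the numerators of \eqref{eq:loss-expanded}, because $\varepsilon_0(s)=1-\HS{\Gamma(-is)}^2$ and $\varepsilon_1(s)=1-\HS{\Gamma(1/2+is)}^2$. This gives the boundary agreement claimed in the statement.

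\textbf{Step 2: nonnegativity.} The first term is a squared norm, so it is nonnegative. For the second term I need $\HS{\Gamma(z)}\le1$ in the interior. The function $z\mapsto\log\HS{\Gamma(z)}$ is subharmonic, because $\Gamma$ is a bounded analytic Hilbert--Schmidt-valued function. On both boundary lines $\HS{\Gamma}\le1$, by the Schwarz defects $\varepsilon_0,\varepsilon_1\ge0$. A Phragm\'en--Lindel\"of argument on the strip (three-lines theorem) then gives $\HS{\Gamma(z)}\le1$ in the interior. Alternatively, $h_{\one}$ is the Poisson integral \eqref{eq:Poisson} of nonnegative boundary data, so $h_{\one}\ge0$ directly. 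That already gives nonnegativity of the sum; the three-lines argument is only needed for the second term separately.

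\textbf{Step 3: upper bound by kernel comparison.} By the translated Poisson formula,
\[
h_{\one}(x+it)=\int_\R P_0(x,t-s)\,h_{\one}(is)\dd s+\int_\R P_1(x,t-s)\,h_{\one}(1/2+is)\dd s,
\]
where
\[
P_0(x,u)=\frac{\sin(2\pi x)}{\cosh(2\pi u)-\cos(2\pi x)},\qquad P_1(x,u)=\frac{\sin(2\pi x)}{\cosh(2\pi u)+\cos(2\pi x)}.
\]
Lemma~\ref{lem:Poisson-derivative}, combined with \eqref{eq:h-derivative} at $\omega=\one$, gives
\[
\Delta=2\int_\R\frac{\pi\,h_{\one}(is)}{\cosh(2\pi s)-1}\dd s+2\int_\R\frac{\pi\,h_{\one}(1/2+is)}{\cosh(2\pi s)+1}\dd s.
\]
Since both boundary values of $h_{\one}$ are nonnegative, it suffices to prove the pointwise kernel inequalities
\[
P_0(x,t-s)\le K(z)\,\frac{2\pi}{\cosh(2\pi s)-1},\qquad P_1(x,t-s)\le K(z)\,\frac{2\pi}{\cosh(2\pi s)+1},
\]
with $K(z)=[\cosh(2\pi t)-\cos(2\pi x)]/[\pi\sin(2\pi x)]$. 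Then $2h_{\one}(z)\le 2K(z)\Delta$, which is \eqref{sv:complex-values}. This is a boundary Harnack comparison between the Poisson kernel at $z$ and the normal-derivative kernel at $0$.

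\textbf{Step 4: the two kernel inequalities (main obstacle).} After clearing denominators, each inequality reads
\[
\sin^2(2\pi x)\,[\cosh(2\pi s)\mp1]\le 2\pi^2[\cosh(2\pi(t-s))\mp\cos(2\pi x)]\,[\cosh(2\pi t)-\cos(2\pi x)].
\]
I would verify these with the conformal map $\zeta=e^{2\pi z}$. It sends the strip to the upper half-plane, with the lower line going to the positive real axis and the upper line to the negative real axis. Under this map $\cosh(2\pi u)-\cos(2\pi x)=|e^{2\pi(u+ix)}-1|^2e^{-2\pi u}/2$, so each factor becomes a squared distance. The inequalities then reduce to triangle-type estimates such as $|\zeta-\eta|\,|\zeta-1|\ge\Im\zeta\,|\eta-1|$ for real $\eta$ and $\Im\zeta>0$, up to elementary factors of $e^{\pi s}$. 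I expect matching the exact constant $2/\pi$ to be the delicate part. It may require keeping the $\cosh$ factors precisely rather than bounding them crudely.

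\textbf{Step 5: the midpoint.} At $z=1/4$ we have $\cos(2\pi x)=0$, $\sin(2\pi x)=1$ and $\cosh(0)=1$, so $K=1/\pi$ and the right side of \eqref{sv:complex-values} is $2\Delta/\pi$. Bounding each nonnegative term on the left separately gives both inequalities in \eqref{sv:interior-midpoint}.
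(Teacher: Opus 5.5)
Your proposal is correct. Steps~1--3 and~5 follow the paper's proof: identify the left side with $2h_{\one}(z)$, represent $h_{\one}$ by the translated strip Poisson formula, express $\Delta$ through the boundary-derivative weights of Lemma~\ref{lem:Poisson-derivative}, and compare kernels pointwise using nonnegativity of the boundary data. Your three-lines bound $\HS{\Gamma(z)}\le1$ is interchangeable with the paper's maximum principle for the subharmonic function $\HS{\Gamma}^2$, whose Laplacian is $4\HS{\Gamma'}^2$.

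The genuine difference is Step~4. The paper verifies the kernel comparison by explicit sum-of-squares identities, e.g.\ $2a[\cosh(T-S)-\cos\theta]-\sin^2\theta(\cosh S-1)=2[a\cosh(S/2)-\sinh T\sinh(S/2)]^2$ with $a=\cosh T-\cos\theta$. Your conformal route also works and gives the exact constant, once you fix an arithmetic slip. The $\pi$ in $K(z)$ cancels the $2\pi$ in the weights, so the cleared inequality is
\[
 \sin^2\theta\,(\cosh S\mp1)\le2(\cosh T-\cos\theta)\bigl(\cosh(T-S)\mp\cos\theta\bigr),
 \qquad \theta=2\pi x,\ T=2\pi t,\ S=2\pi s,
\]
with factor $2$, not $2\pi^2$. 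The version you wrote is weaker by a factor $\pi^2$ and would not yield the stated coefficient.

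With the corrected factor, take $\zeta=e^{T+i\theta}=e^{2\pi(t+ix)}$ (not $e^{2\pi z}$, which does not map the strip onto the half-plane) and $\eta=e^{S}$. The points $is$, $1/2+is$ and $0$ go to $\eta$, $-\eta$ and $1$. All powers of $\eta$ and $|\zeta|$ then cancel exactly, and the inequality becomes $|1\mp\eta|\operatorname{Im}\zeta\le|\zeta-1|\,|\zeta\mp\eta|$. This follows at once from $\operatorname{Im}\bigl[(\zeta-1)\overline{(\zeta\mp\eta)}\bigr]=(1\mp\eta)\operatorname{Im}\zeta$: twice the area of a triangle is at most the product of two of its sides. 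So the constant is not delicate at all.

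The paper's identities are self-contained algebra and show the equality case as a vanishing square. Your version explains the comparison geometrically, with equality when the angle at $\zeta$ is a right angle, and makes its boundary-Harnack character transparent.
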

\begin{proof}
First, $\HS{\Gamma(z)}^2$ is bounded and subharmonic, since
\[
 (\partial_x^2+\partial_t^2)\HS{\Gamma(x+it)}^2
 =4\HS{\Gamma'(x+it)}^2\geq0.
\]
The nonnegativity of the boundary defects gives
$\HS{\Gamma(it)}^2\leq1$ and $\HS{\Gamma(1/2+it)}^2\leq1$.
The maximum principle for bounded subharmonic functions on the strip
therefore gives $\HS{\Gamma(z)}^2\leq1$ throughout the strip.

Use $h=h_{\one}=1-\Re f$ from \eqref{eq:h}. Expanding the squared norm
and using $\Tr\rho=1$ gives
\begin{equation}\label{sv:interior-harmonic}
 2h(z)=2\bigl(1-\Re\Tr\rho^{1/2}\Gamma(z)\bigr)
 =\HS{\rho^{1/2}-\Gamma(z)}^2+1-\HS{\Gamma(z)}^2\geq0.
\end{equation}
Thus the full numerator is harmonic, because $f$ is analytic.
By \eqref{eq:loss-expanded} and the evenness of the lower-boundary weight,
\begin{equation}\label{sv:interior-boundary-loss}
 \Delta=2\pi\int_\R\left[
 \frac{h(is)}{\cosh(2\pi s)-1}
 +\frac{h(1/2+is)}{\cosh(2\pi s)+1}\right]\dd s.
\end{equation}

We next compare the Poisson kernels at $z=x+it$ with these boundary
weights. With corresponding choices of signs,
\begin{equation}\label{sv:interior-kernel-comparison}
 \frac{\sin(2\pi x)}{\cosh(2\pi(t-s))\mp\cos(2\pi x)}
 \leq\frac{2\bigl[\cosh(2\pi t)-\cos(2\pi x)\bigr]}
 {\sin(2\pi x)}\,
 \frac{1}{\cosh(2\pi s)\mp1}.
\end{equation}
For the minus signs, this comparison is made at $s\ne0$.
To verify both inequalities, write
$\theta=2\pi x$, $T=2\pi t$, $S=2\pi s$, and
$a=\cosh T-\cos\theta>0$ within this calculation. The elementary identities
\[
\begin{aligned}
 &2a\bigl[\cosh(T-S)-\cos\theta\bigr]
       -\sin^2\theta\bigl(\cosh S-1\bigr)=2\bigl[a\cosh(S/2)-\sinh T\sinh(S/2)\bigr]^2\geq0, \\
 &2a\bigl[\cosh(T-S)+\cos\theta\bigr]
       -\sin^2\theta\bigl(\cosh S+1\bigr)=2\bigl[a\sinh(S/2)-\sinh T\cosh(S/2)\bigr]^2\geq0
\end{aligned}
\]
follow by expanding the squares. Dividing by the positive denominators
proves \eqref{sv:interior-kernel-comparison}.

Apply the strip Poisson formula \eqref{eq:Poisson} to the shifted
harmonic function $z\mapsto h(z+it)$ and change the boundary variable.
Since both boundary values of $h$ are nonnegative, the kernel comparison
gives
\[
\begin{aligned}
 2h(x+it)
 &=2\int_\R\left[
 \frac{\sin(2\pi x)h(is)}{\cosh(2\pi(t-s))-\cos(2\pi x)}
 +\frac{\sin(2\pi x)h(1/2+is)}{\cosh(2\pi(t-s))+\cos(2\pi x)}
 \right]\dd s\\
 &\leq\frac{4\bigl[\cosh(2\pi t)-\cos(2\pi x)\bigr]}{\sin(2\pi x)}
 \int_\R\left[
 \frac{h(is)}{\cosh(2\pi s)-1}
 +\frac{h(1/2+is)}{\cosh(2\pi s)+1}\right]\dd s\\
 &=\frac{2\bigl[\cosh(2\pi t)-\cos(2\pi x)\bigr]}
 {\pi\sin(2\pi x)}\,\Delta,
\end{aligned}
\]
where the last equality is \eqref{sv:interior-boundary-loss}.
The lower-boundary integrand is integrable at zero because
$h(is)=O(s^2)$, as established in \eqref{eq:h-derivative}.
Combining this estimate with \eqref{sv:interior-harmonic} proves
\eqref{sv:complex-values}.
\end{proof}

\begin{corollary}[Pointwise BS interior bound]\label{cor:bs-pointwise-interior}
Under the same assumptions, for $z=x+it$ with $0<x<1/2$,
\begin{equation}\label{eq:bs-pointwise-interior}
 0\leq\HS{\Gamma_{\mathrm{BS}}(0)-\Gamma_{\mathrm{BS}}(z)}^2
       +1-\HS{\Gamma_{\mathrm{BS}}(z)}^2
 \leq\frac{2[\cosh(2\pi t)-\cos(2\pi x)]}{\pi\sin(2\pi x)}
       \,\Delta_{\mathrm{BS}}.
\end{equation}
Both terms on the left are nonnegative. In particular,
\begin{equation}\label{eq:bs-interior-midpoint}
 \HS{\Gamma_{\mathrm{BS}}(0)-\Gamma_{\mathrm{BS}}(1/4)}^2
 \leq\frac{2\Delta_{\mathrm{BS}}}{\pi},
 \qquad
 \HS{\Gamma_{\mathrm{BS}}(1/4)}^2
 \geq1-\frac{2\Delta_{\mathrm{BS}}}{\pi}.
\end{equation}
\end{corollary}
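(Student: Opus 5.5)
The proof will follow Theorem~\ref{sv:complex-interior} line by line, with the Umegaki quantities replaced by their BS counterparts. The only BS-specific inputs are the BS boundary identity \eqref{eq:bs-poisson} and the contraction property of Lemma~\ref{lem:bs-contraction}. Introduce the scalar analytic function
\[
 g(z)=\Tr\bigl[\Gamma_{\mathrm{BS}}(0)^\dagger\Gamma_{\mathrm{BS}}(z)\bigr],\qquad h_{\mathrm{BS}}(z)=1-\Re g(z).
\]
By \eqref{eq:bs-reference-factorization}, $\HS{\Gamma_{\mathrm{BS}}(0)}^2=\Tr\rho=1$, so expanding the square gives, at every point of the closed strip,
\[
 2h_{\mathrm{BS}}(z)=\HS{\Gamma_{\mathrm{BS}}(0)-\Gamma_{\mathrm{BS}}(z)}^2+1-\HS{\Gamma_{\mathrm{BS}}(z)}^2 .
\]
On the two boundary lines this is exactly the numerator of \eqref{eq:bs-poisson}, with defects $\delta_{\mathrm{BS},j}$. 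The function $h_{\mathrm{BS}}$ is bounded and harmonic, because $\Gamma_{\mathrm{BS}}$ is a bounded analytic family (finite sums of exponentials $e^{z\log b}$ for $b\in\operatorname{spec}\ell_{\mathrm{BS}}$ or $\operatorname{spec}\widehat\ell_{\mathrm{BS}}$).

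\textbf{Nonnegativity of both terms.} The first term is a squared norm. For the second, $\HS{\Gamma_{\mathrm{BS}}(z)}^2$ is subharmonic, since its Laplacian equals $4\HS{\Gamma_{\mathrm{BS}}'}^2$. Its boundary values are at most one by $\delta_{\mathrm{BS},j}\ge0$ (Lemma~\ref{lem:bs-contraction}). It is bounded, since the exponentials are evaluated on a strip of bounded real part. The maximum principle for bounded subharmonic functions on the strip (Phragm\'en--Lindel\"of form) then gives $\HS{\Gamma_{\mathrm{BS}}(z)}^2\le1$ throughout. Hence $h_{\mathrm{BS}}\ge0$ on the boundary, which the kernel comparison step requires, and the left-hand side of \eqref{eq:bs-pointwise-interior} is a sum of two nonnegative terms.

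\textbf{Upper bound.} Rewrite \eqref{eq:bs-poisson} as
\[
 \Delta_{\mathrm{BS}}=2\pi\int_\R\Bigl[\frac{h_{\mathrm{BS}}(is)}{\cosh(2\pi s)-1}+\frac{h_{\mathrm{BS}}(1/2+is)}{\cosh(2\pi s)+1}\Bigr]\dd s .
\]
Next apply the strip Poisson formula \eqref{eq:Poisson} to the translate $z\mapsto h_{\mathrm{BS}}(z+it)$. Bound each kernel using the elementary comparison \eqref{sv:interior-kernel-comparison} already proved in Theorem~\ref{sv:complex-interior}; this is legitimate because the boundary values are nonnegative. Combining these gives
\[
 2h_{\mathrm{BS}}(x+it)\le\frac{2[\cosh(2\pi t)-\cos(2\pi x)]}{\pi\sin(2\pi x)}\Delta_{\mathrm{BS}}.
\]
The lower-boundary integrand is integrable at $s=0$ because the numerator is $O(s^2)$. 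Theorem~\ref{thm:bs-poisson} asserts this removability; directly, $\Gamma_{\mathrm{BS}}(0)-\Gamma_{\mathrm{BS}}(is)=O(s)$, and $\delta_{\mathrm{BS},0}(s)=O(s^2)$ as a nonnegative smooth function vanishing at $0$.

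\textbf{Midpoint case and main obstacle.} At $z=1/4$ the coefficient is $2(1-0)/(\pi\cdot1)=2/\pi$. Dropping either nonnegative term then yields both inequalities in \eqref{eq:bs-interior-midpoint}. The only step needing care is the boundedness and maximum-principle argument on the unbounded strip. I expect this to be routine here, since all entries are finite exponential sums, which are bounded on the closed strip.
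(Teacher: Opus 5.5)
Your proposal is correct and follows the paper's proof essentially verbatim. The paper also uses the boundary contractions of Lemma~\ref{lem:bs-contraction} and the subharmonic maximum principle to get $\HS{\Gamma_{\mathrm{BS}}(z)}\le1$, then runs the argument of Theorem~\ref{sv:complex-interior} on $h(z)=1-\Re\Tr[\Gamma_{\mathrm{BS}}(0)^\dagger\Gamma_{\mathrm{BS}}(z)]$ with \eqref{eq:bs-poisson} and the kernel comparison \eqref{sv:interior-kernel-comparison}. One small remark: boundary nonnegativity of $h_{\mathrm{BS}}$ already follows directly from $\delta_{\mathrm{BS},j}\ge0$, so the maximum principle is needed only for the interior nonnegativity of $1-\HS{\Gamma_{\mathrm{BS}}(z)}^2$.
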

\begin{proof}
The boundary contractions in Lemma~\ref{lem:bs-contraction} and the
subharmonic maximum principle give $\HS{\Gamma_{\mathrm{BS}}(z)}\leq1$
throughout the strip. Apply the proof of
Theorem~\ref{sv:complex-interior} to the nonnegative harmonic function
$h(z)=1-\Re\Tr[\Gamma_{\mathrm{BS}}(0)^\dagger\Gamma_{\mathrm{BS}}(z)]$,
using \eqref{eq:bs-reference-factorization} and \eqref{eq:bs-poisson}.
The same kernel comparison \eqref{sv:interior-kernel-comparison} gives
\eqref{eq:bs-pointwise-interior}, and $z=1/4$ gives the separate bounds.
\end{proof}

\subsection{Comparison with previous quantitative estimates}\label{sec:comparison}
The following comparisons concern the same error norms and retain the
distinction between prescribed and averaged recovery maps. They describe
improvements of the small-loss rate in fixed dimensions, not uniform
numerical domination for every dimension and every state.
For a channel that projects matrices onto a subalgebra $\mathcal N$
while preserving the trace (a trace-preserving conditional expectation),
Carlen--Vershynina~\cite[Corollary~1.8, Eq.~(1.33)]{CV} give
\begin{equation}\label{eq:CV-comparison}
 \|\rho-\mathcal R_0(\Phi(\rho))\|_1\le\frac8\pi
 \left(\|L_\sigma R_{\rho^{-1}}\|\,
       \|\rho_{\mathcal N}\|_\infty
       \|\sigma_{\mathcal N}^{-1}\|_\infty\right)^{1/2}\Delta^{1/4}.
\end{equation}
Here $L_\sigma R_{\rho^{-1}}$ is the relative modular operator on
Hilbert--Schmidt matrices. 

A stronger benchmark is from Gao--Wilde~\cite[Theorem~4.13 and Eq.~(61)]{GaoWilde}.
Writing $Q=\Tr\rho^2\sigma^{-1}$, their cutoff estimate implies, for
$0<\Delta<1$,
\begin{equation}\label{eq:GW-comparison}
 \|\rho-\mathcal R_s(\Phi(\rho))\|_1\le\frac{2\cosh(\pi s)}\pi\sqrt\Delta
 \left(4\sqrt Q+4+\sqrt{2\log(1/\Delta)}\right).
\end{equation}
This is obtained by choosing $S=\Delta$ and $T=\Delta^{-1}$ in their proof;
their stated family also gives $O(\Delta^{1/2-\epsilon})$ for every fixed
$0<\epsilon<1/2$. Their rotation has the opposite sign convention, which
does not affect the displayed $\cosh$ factor. In common finite-dimensional
settings, the trace-norm consequence of \eqref{eq:finite-recovery-simple}
removes the logarithmic loss
and the dependence on $Q$, with a coefficient controlled by $r_\rho+r_{\Phi(\sigma)}$.
For fixed dimensions its bound is strictly smaller for sufficiently small
positive $\Delta$ than the right-hand side of \eqref{eq:GW-comparison}.

Theorem~\ref{thm:finite-recovery} controls the squared Hilbert--Schmidt difference together
with the Schwarz defect, which is stronger than controlling the trace-norm
recovery error alone. Indeed, set $\tau_s=\mathcal R_s(\Phi(\rho))$ and
$D_s=\tau_s-\Gamma(1/2+is)\Gamma(1/2+is)^\dagger\ge0$;
Schwarz positivity gives $\Tr D_s=\varepsilon_1(s)$.
For $A=[\rho^{1/2}\;0]$ and
$B=[\Gamma(1/2+is)\;D_s^{1/2}]$, we have
$AA^\dagger=\rho$, $BB^\dagger=\tau_s$.
Thus H\"older's inequality yields
\[
 \|\rho-\tau_s\|_1
 =\|(A-B)A^\dagger+B(A-B)^\dagger\|_1
 \le (\HS A+\HS B)\HS{A-B}=2\HS{A-B},
\]
and hence
\begin{equation}\label{eq:comparison-trace-recovery}
 \|\rho-\mathcal R_s(\Phi(\rho))\|_1^2
 \le4\left(\HS{\rho^{1/2}-\Gamma(1/2+is)}^2
             +\varepsilon_1(s)\right).
\end{equation}
In particular, the combined bound implies trace-norm recovery error
$O(\sqrt\Delta)$ for fixed $s$ and dimensions.

\begin{table}[tb]
\centering
\small
\renewcommand{\arraystretch}{1.24}
\setlength{\tabcolsep}{4pt}
\begin{tabular}{@{}p{0.23\textwidth}p{0.43\textwidth}p{0.30\textwidth}@{}}
\hline
Estimate & Error scaling & Scope of the constant\\
\hline
Carlen--Vershynina & $\|\rho-\mathcal R_0(\Phi(\rho))\|_1=O(\Delta^{1/4})$ & Spectral factors; unrotated map\\
Gao--Wilde & $\|\rho-\mathcal R_s(\Phi(\rho))\|_1=O(\sqrt{\Delta\log(1/\Delta)})$ & Prescribed $s$; $Q$-dependent\\
Theorem~\ref{thm:finite-recovery} (upper boundary) & $\HS{\rho^{1/2}-\Gamma(1/2+is)}^2+\varepsilon_1(s) =O(\Delta)$ & Prescribed $s$; fixed $r_\rho+r_{\Phi(\sigma)}$\\
Theorem~\ref{thm:finite-recovery} (lower boundary) & $\HS{\rho^{1/2}-\Gamma(-is)}^2+\varepsilon_0(s) =O(\Delta)$ & Prescribed $s$; fixed $r_\sigma+r_{\Phi(\rho)}$\\
Theorem~\ref{sv:complex-interior} (interior) & $\|\rho^{1/2}-\Gamma(z)\|_2^2+1-\|\Gamma(z)\|_2^2=O(\Delta)$ & Fixed $0<\Re z<1/2$; dimension- and spectrum-independent\\
\hline
\end{tabular}
\caption{Small-loss comparisons for fixed indicated parameters.
The real-parameter bounds more precisely use their respective boundary integrals.
The interior bound is from Section~\ref{sec:interior-pointwise}; its coefficient may diverge at the strip edges. References and hypotheses are specified in the text;
the table does not assert exhaustive priority or dimension-free domination.}
\label{tab:pointwise-comparison}
\end{table}

\begin{proposition}[Optimality of the square-root exponent]\label{prop:optimal-exponent}
Neither prescribed Petz trace-distance recovery nor the general-channel
modular matrix difference $\HS{\rho^{1/2}-\Gamma(-is)}$ at fixed $s\ne0$ admits a bound
$C\Delta^\alpha$ with $\alpha>1/2$ and a constant uniform in fixed
dimensions.
\end{proposition}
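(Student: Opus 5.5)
The plan is to exhibit a one-parameter family of examples in a fixed small dimension (qubits suffice) for which $\Delta=\Theta(\epsilon^2)$ while both error quantities are $\Theta(\epsilon)$. Then $\|\cdot\|\le C\Delta^\alpha$ would force $\epsilon\lesssim C\epsilon^{2\alpha}$, which fails as $\epsilon\to0$ whenever $\alpha>1/2$. The simplest choice is a classical (commuting) example, where $\Phi(\rho)$ and $\Phi(\sigma)$ commute with everything relevant and all quantities become explicit. To see the modular term, I would instead use a partial trace or a measurement channel acting on noncommuting states.

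\textbf{Petz trace distance.} Take $\Phi$ to be the completely dephasing channel on a qubit and $\sigma=\one/2$, so that $\Phi(\sigma)=\sigma$ and $\mathcal R_s=\Phi^\dagger=\Phi$ (self-adjoint, with the $\sigma$ factors scalar). Let $\rho_\epsilon=\tfrac12(\one+\epsilon X)$. Then $\Phi(\rho_\epsilon)=\one/2$, so
\[
\Delta=D(\rho_\epsilon\|\one/2)=\log 2-S(\rho_\epsilon)=\tfrac{\epsilon^2}{2}+O(\epsilon^4),
\]
whereas $\|\rho_\epsilon-\mathcal R_s(\Phi(\rho_\epsilon))\|_1=\|\tfrac{\epsilon}{2}X\|_1=\epsilon$ for every $s$, including $s=0$. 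Hence the trace distance is $\sqrt{2\Delta}\,(1+o(1))$, which rules out any exponent $\alpha>1/2$.

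\textbf{Modular term.} In the same example $\Phi(\rho)=\Phi(\sigma)$, so $U_s=\one$ and $\Gamma(-is)=\sigma^{-is}\rho^{1/2+is}=2^{is}\rho^{1/2+is}$. This gives
\[
\HS{\rho^{1/2}-\Gamma(-is)}=\HS{\rho^{1/2}(\one-2^{is}\rho^{is})}.
\]
The eigenvalues of $\rho_\epsilon$ are $\tfrac12(1\pm\epsilon)$, so $2^{is}\rho^{is}$ has eigenvalues $e^{is\log(1\pm\epsilon)}$. Each phase factor differs from $1$ by $|s|\epsilon+O(\epsilon^2)$. Therefore $\HS{\rho^{1/2}-\Gamma(-is)}^2=s^2\epsilon^2+O(\epsilon^3)$, i.e.\ the norm is $|s|\epsilon\,(1+o(1))=|s|\sqrt{2\Delta}\,(1+o(1))$ for fixed $s\ne0$. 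Again this excludes $\alpha>1/2$ with a constant uniform in the dimension, which is fixed at $2$.

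\textbf{Main obstacle.} The delicate point is confirming that the lower-boundary quantity is genuinely first order in $\epsilon$, not second order through some cancellation. I expect no cancellation, since the phases $\pm s\epsilon$ act on the two eigenvectors separately and the dephasing channel is a genuine channel. I would double-check the convention $\Gamma(-is)=\sigma^{-is}\Phi^\dagger(\Phi(\sigma)^{is}\Phi(\rho)^{-is})\rho^{1/2+is}$ and the $\Delta$ expansion. Finally, I would note that by Theorem~\ref{thm:finite-recovery} and \eqref{eq:comparison-trace-recovery} the exponent $1/2$ is attained, so these examples show it is sharp.
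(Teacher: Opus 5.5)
Your proposal is correct and takes essentially the same route as the paper, which uses erasure to a scalar with $\sigma=\one_2/2$ and $\rho_\epsilon=\operatorname{diag}(1/2+\epsilon,1/2-\epsilon)$. Your complete dephasing with an off-diagonal perturbation works for the same reason: $\Phi(\rho)=\Phi(\sigma)$, so every $\mathcal R_s$ returns $\sigma$ and $\Gamma(-is)=\sigma^{-is}\rho^{1/2+is}$. It therefore yields the same computations ($\Delta\sim\epsilon^2$, both errors $\sim\epsilon$) up to a rescaling of $\epsilon$.
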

\begin{proof}
For erasure to a scalar, take $\sigma=\one_2/2$ and
$\rho_\epsilon=\operatorname{diag}(1/2+\epsilon,1/2-\epsilon)$.
All rotations recover $\sigma$, while
$\|\rho-\mathcal R_s(\Phi(\rho))\|_1=2|\epsilon|$ and
$\Delta=2\epsilon^2+O(\epsilon^4)$.
For this same erasure example,
\[
 \HS{\rho^{1/2}-\Gamma(-is)}^2=
 \sum_{\pm}(1/2\pm\epsilon)
 \left|1-e^{is\log(1\pm2\epsilon)}\right|^2
 =4s^2\epsilon^2+O(\epsilon^4).
\]
Thus the general modular-error exponent is already optimal for a
scalar-output channel. This does not establish optimality of the
frequency-count prefactors.
These faithful, fixed-dimensional examples exclude $\alpha>1/2$.
\end{proof}

\section{Discussion}\label{sec:discussion}

The identities proved here express relative-entropy loss through exact,
nonnegative boundary and surface integrals. The boundary representation
separates state-reconstruction error from modular-flow mismatch, with
Schwarz defects accounting for general channels. We discuss extensions
to other divergences and possible applications below.

\subsection{Integral representations for other quantum divergence losses}
The Umegaki and BS identities use different analytic matrix families within
a common strip Poisson--Green framework. Both yield exact nonnegative
boundary and surface representations, but their reconstruction terms differ:
the Umegaki identity involves rotated Petz recovery, whereas the BS identity
involves a positive, generally subnormalized operator whose dependence on
$\Phi(\rho)$ is generally nonlinear.

A natural question is whether analogous identities hold for Petz
R\'enyi~\cite{HiaiMosonyiPetzBeny2011}, sandwiched
R\'enyi~\cite{MullerLennert2013,WildeWinterYang2014}, and geometric
R\'enyi divergences~\cite{FangFawzi2021} in their data-processing ranges.
Such extensions require suitable analytic families and positive boundary
decompositions, as well as an argument converting power-trace identities
into divergence identities with explicit recovery remainders.

Resolvent representations of operator-convex functions offer a complementary
route. They underlie monotonicity and equality results for standard
$f$-divergences~\cite{HiaiMosonyiPetzBeny2011}, with recovery refinements
available for standard and optimized $f$-divergences~\cite{GaoWilde}, and
for BS entropy and further maximal $f$-divergences~\cite{BluhmCapel2020}.
Relating these representations to exact strip boundary or surface identities
with explicit reconstruction remainders remains an open direction.

\subsection{Recovery at a prescribed modular rotation}
Theorem~\ref{thm:finite-recovery} controls the recovery error at each
prescribed rotation, including the canonical Petz map at $s=0$.
Together with \eqref{eq:comparison-trace-recovery}, its spectral-count
and logarithmic bounds give
\begin{equation}\label{eq:discussion-canonical-petz}
 \|\rho-\mathcal R_0(\Phi(\rho))\|_1
 \le 2\sqrt{\min\!\left\{2(r_\rho+r_{\Phi(\sigma)}),
 16(1+\op{\log\sigma}+\op{\log\Phi(\sigma)})\right\}}\sqrt{\Delta}.
\end{equation}
Under the faithful finite-dimensional assumptions of this paper, the
spectral-count coefficient depends only on distinct-eigenvalue counts and requires no
quantitative lower bound on the eigenvalues. It therefore remains
uniform as eigenvalues approach zero in fixed dimensions. 
The alternative coefficient depends only logarithmically on the smallest
eigenvalues of $\sigma$ and $\Phi(\sigma)$ and Proposition~\ref{prop:optimal-exponent} shows that the exponent $1/2$
cannot be improved in a uniform bound of this form.

The canonical Petz map appears as the transpose channel in
approximate quantum error correction~\cite{NgMandayam2010}.
Relating the present statewise entropy-loss estimate to uniform recovery
guarantees on a code space, including preservation of entanglement with
a reference system, requires additional analysis.

\subsection{Modular flow and topological phases}
The vanishing of conditional mutual information, $I(X:Z\mid Y)_\rho = 0$, defines a quantum 
Markov chain~\cite{HaydenJozsaPetzWinter2004}. In our formulation (Corollary~\ref{cor:CMI}), 
the partial trace channel has a $*$-homomorphic adjoint, causing both Schwarz defects to vanish 
identically. The CMI then splits cleanly into Petz recovery error and modular flow mismatch:
\begin{equation}\label{eq:disc-cmi}
  I(X:Z\mid Y)_\rho = \pi \int_\R \frac{\left\|\left(\rho^{is} - \rho_{XY}^{is}\rho_Y^{-is}\rho_{YZ}^{is}\right)\rho^{1/2}\right\|_2^2}{\cosh(2\pi s) - 1}\dd s 
                    + \pi \int_\R \frac{\left\|\rho^{1/2} - \rho_{YZ}^{1/2+is}\rho_Y^{-1/2-is}\rho_{XY}^{1/2+is}\rho^{-is}\right\|_2^2}{\cosh(2\pi s) + 1}\dd s.
\end{equation}

A related quantity is the modular commutator
\begin{equation}\label{eq:discussion-modular-commutator}
 J(X,Y,Z)_\rho
 :=i\Tr\rho_{XYZ}[K_{XY},K_{YZ}],
 \qquad K_R=-\log\rho_R,
\end{equation}
For suitable spatial tripartitions of two-dimensional gapped ground states, this quantity
has been proposed as a bulk probe of the chiral central charge
$c_-$~\cite{Kim2022c-minus,Modular-commutator-Gapped}.
In Ref.~\cite{yang2026modularcommutatorrobusttopological}, we use control
of the Markov-decomposition error of modular flow to show that a nonzero
modular commutator obstructs arbitrarily rapid decay of the CMI. 
Under the area-law and geometric assumptions of that work, with separation comparable to
the geometric scale satisfies schematically,
\[
 I(A:C\mid B)\gtrsim\exp\!\left[-O\!\left(\frac{\operatorname{dist}(A,C)}{|J|}\right)\right],
\]
up to polynomial prefactors, where $J\ne0$ is the modular commutator of
the corresponding partition. This connects chirality to the
approximate local Markov property of the state.

The argument combines a CMI bound on the finite-time Markov-decomposition
error with continuity estimates of $J$. The first term of \eqref{eq:disc-cmi} isolates the integrated modular-flow mismatch
that underlies this control. We expect further applications of this exact
decomposition to relations between quantum information and topological
properties of many-body states.

\subsection{Non-faithful states, infinite dimensions, and operator algebras}
The results in this paper are proved for faithful states. We believe that
their extension to non-faithful states in finite dimensions should be
straightforward under the support condition
$\supp\rho\subseteq\supp\sigma$. A detailed treatment of the support
conventions and limiting arguments needed for this extension is left to
future work.

The boundary identity in Theorem~\ref{thm:loss} follows from the strip
Poisson formula, the analytic overlap $f(z)$, and the Schwarz inequality
for the unital adjoint channel. These ingredients suggest investigating
an extension to faithful normal states on von Neumann algebras using
relative modular operators and Connes cocycles. The universal recovery
inequalities of Junge and LaRacuente~\cite{JL} provide relevant tools
for this direction. An exact extension would require a precise modular
analogue of $\Gamma$, a well-defined relative-entropy loss, control of
boundary differentiation and convergence, and retention of the channel's
Schwarz defects. Establishing such an identity, including its scope in
quantum field theory, remains an open problem.
\section*{Acknowledgments}
T.H.Y is grateful to Bowen Shi for fruitful discussions on the modular flow Markov decomposition error. 
T.H.Y also thanks Marius Junge for helpful discussions and for drawing his attention to Ref.~\cite{JL}, and for helping him recognize the significance of the discovery made by ChatGPT.
T.H.Y is supported by Taiwan-UIUC fellowship program.

\appendix
\section{Proofs of the frequency-evaluation lemmas}\label{app:frequency-evaluation-proofs}

\renewcommand{\restatedlemmanumber}{\ref{lem:one-sided-evaluation}}
\begin{restatedlemma}[One-sided finite-frequency evaluation]
Let $\omega_1,\ldots,\omega_N$ be distinct real numbers and let
$p(t)=\sum_{j=1}^N c_je^{i\omega_jt}$, with $c_j\in\C$.
For every $a>0$,
\begin{equation*}
 |p(0)|^2\le aN\int_0^\infty e^{-at}|p(t)|^2\dd t.
\end{equation*}
The constant $aN$ is optimal for each such frequency set.
\end{restatedlemma}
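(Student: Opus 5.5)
The plan is to reduce the inequality to a finite-dimensional quadratic-form problem and solve it exactly. Write $p(t)=\sum_j c_je^{i\omega_jt}$ and set $c=(c_1,\dots,c_N)^T$. Then
\[
 \int_0^\infty e^{-at}|p(t)|^2\dd t=\sum_{j,k}\bar c_jc_k\int_0^\infty e^{-(a+i\omega_j-i\omega_k)t}\dd t=c^*Mc,
 \qquad M_{jk}=\frac{1}{a-i(\omega_j-\omega_k)}.
\]
Meanwhile $|p(0)|^2=|\mathbf 1^Tc|^2$. Here $M$ is the Gram matrix of the functions $e^{i\omega_jt}e^{-at/2}$ in $L^2(0,\infty)$. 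These functions are linearly independent because the frequencies are distinct, so $M$ is positive definite.

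By Cauchy--Schwarz in the $M$-inner product,
\[
 |\mathbf 1^Tc|^2\le(\mathbf 1^TM^{-1}\mathbf 1)\,(c^*Mc).
\]
Equality holds for $c=M^{-1}\mathbf 1$. Hence the optimal constant for a fixed frequency set is exactly $\mathbf 1^TM^{-1}\mathbf 1$, and the whole lemma reduces to showing
\[
 \mathbf 1^TM^{-1}\mathbf 1=aN.
\]
This single identity gives both the inequality and its optimality.

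To compute this sum, recognize $M$ as a Cauchy matrix. Write $a-i\omega_j+i\omega_k=x_j-y_k$ with $x_j=a/2-i\omega_j$ and $y_k=-a/2-i\omega_k$, so $M_{jk}=1/(x_j-y_k)$. The classical entry-sum identity for inverse Cauchy matrices, \cite[Theorem~2.1]{GrinbergInverseCauchy}, states that the sum of all entries of $M^{-1}$ is
\[
 \sum_j x_j-\sum_k y_k.
\]
Here this equals $N(a/2)-N(-a/2)=aN$, because the $-i\omega$ terms cancel.

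If I were to verify the identity independently rather than cite it, I would take $u=M^{-1}\mathbf 1$ and consider the rational function $R(z)=\sum_k u_k/(z-y_k)$, which equals $1$ at each $z=x_j$. Then $R(z)-1$ has zeros at the $x_j$ and poles at the $y_k$, so
\[
 R(z)-1=-\prod_j\frac{z-x_j}{z-y_j}.
\]
Comparing the $1/z$ coefficients at infinity gives $\sum_k u_k=\sum_j(x_j-y_j)$. This step requires the $x_j$ to be distinct from one another and from all $y_k$, which holds because $\Re(x_j-y_k)=a>0$ and the $\omega_j$ are distinct.

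The only real obstacle is this inverse-Cauchy entry-sum computation. Everything else is routine linear algebra.
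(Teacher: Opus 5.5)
Your proposal is correct and is essentially the paper's proof: the same Cauchy-matrix quadratic form, Cauchy--Schwarz against $M^{-1}$, the same inverse-Cauchy entry-sum identity from Grinberg, and the same rational-interpolation check. Your identity $R(z)-1=-\prod_j(z-x_j)/(z-y_j)$ is the paper's $R=F$ up to a translation of the variable. There is one sign slip: your own integrand $e^{-(a+i\omega_j-i\omega_k)t}$ gives $M_{jk}=1/(a+i(\omega_j-\omega_k))$, which is the transpose of what you wrote. This is harmless, because $\mathbf 1^TM^{-1}\mathbf 1$ is invariant under transposition (equivalently, take $x_j=a/2+i\omega_j$ and $y_k=-a/2+i\omega_k$), so the value $aN$, the inequality, and its optimality are unaffected.
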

\begin{proof}[Proof of Lemma~\ref{lem:one-sided-evaluation}]
Write $c=(c_1,\ldots,c_N)^{\mathsf T}$ and let
$\mathbf e=(1,\ldots,1)^{\mathsf T}\in\C^N$.
Expanding the squared modulus and integrating each term gives
\[
 \int_0^\infty e^{-at}|p(t)|^2\dd t=c^*Cc,
 \qquad
 C_{jk}=\int_0^\infty e^{-at}e^{i(\omega_k-\omega_j)t}\dd t
       =\frac{1}{a+i(\omega_j-\omega_k)}.
\]
The matrix $C$ is Hermitian positive definite. Indeed, if this integral
vanishes, continuity implies $p(t)=0$ for all $t\ge0$. Its first $N$
derivatives of orders $0,\ldots,N-1$ at zero then give
$\sum_j c_j(i\omega_j)^m=0$ for $m=0,\ldots,N-1$. The Vandermonde matrix is invertible
because the frequencies are distinct, so $c=0$.
Since $p(0)=\mathbf e^*c$, Cauchy--Schwarz yields
\[
 |p(0)|^2
 =\bigl|(C^{-1/2}\mathbf e)^*(C^{1/2}c)\bigr|^2
 \le (\mathbf e^*C^{-1}\mathbf e)(c^*Cc).
\]
It remains to compute $\mathbf e^*C^{-1}\mathbf e$.
The inverse-Cauchy identity (see \cite[Theorem~2.1]{GrinbergInverseCauchy}
and \cite{Schechter1959}) states that, for
$C_{jk}=(x_j+y_k)^{-1}$, this quantity is
$\sum_j x_j+\sum_k y_k$; here $x_j=a+i\omega_j$ and
$y_k=-i\omega_k$, so it equals $aN$. The following rational-interpolation
argument verifies this identity directly in our setting.
Put $r=C^{-1}\mathbf e$ and define
\[
 R(z)=\sum_{k=1}^N\frac{r_k}{z-i\omega_k},
 \qquad
 F(z)=1-\prod_{k=1}^N\frac{z-a-i\omega_k}{z-i\omega_k}.
\]
The equations $Cr=\mathbf e$ give $R(a+i\omega_j)=1$ for every $j$.
Also $F(a+i\omega_j)=1$, since one numerator factor vanishes.
Both rational functions have denominator
$Q(z)=\prod_k(z-i\omega_k)$ and numerator degree at most $N-1$;
for $F$, the leading terms of the two monic degree-$N$ polynomials
cancel. Thus $Q(z)(R(z)-F(z))$ has degree at most $N-1$ and vanishes
at the $N$ distinct points $a+i\omega_j$. It is identically zero,
so $R=F$.

As $z\to\infty$, their expansions are
\[
 R(z)=\frac{\sum_k r_k}{z}+O(z^{-2}),
 \qquad
 F(z)=1-\prod_{k=1}^N\left(1-\frac{a}{z}+O(z^{-2})\right)
     =\frac{aN}{z}+O(z^{-2}).
\]
Comparing coefficients gives
$\mathbf e^*C^{-1}\mathbf e=\sum_k r_k=aN$, proving
\eqref{eq:one-sided-evaluation}. Equality holds for
$c=C^{-1}\mathbf e$, so the constant is optimal.
\end{proof}

\renewcommand{\restatedlemmanumber}{\ref{lem:frequency-evaluation}}
\begin{restatedlemma}[Two finite-frequency evaluation inequalities]
Let $p(s)=\sum_{j=1}^N c_j e^{i\omega_js}$, where the $\omega_j$
are distinct real numbers and $c_j\in\C$. For every $s_0\in\R$,
\begin{equation*}
 |p(s_0)|^2\le 2N\cosh^2(\pi s_0)
 \int_\R\frac{\pi|p(s)|^2}{\cosh(2\pi s)+1}\dd s.
\end{equation*}
If additionally $p(0)=0$, then
\begin{equation*}
 |p(s_0)|^2\le N\sinh(2\pi|s_0|)
 \int_\R\frac{\pi|p(s)|^2}{\cosh(2\pi s)-1}\dd s.
\end{equation*}
\end{restatedlemma}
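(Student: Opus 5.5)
The plan is to reduce both inequalities to Lemma~\ref{lem:one-sided-evaluation}, applied on the two half-lines issuing from $s_0$. The only new ingredient is a pointwise lower bound for each Poisson weight by a two-sided exponential $e^{-a|s-s_0|}$ centred at $s_0$.

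\textbf{Two-sided exponential evaluation.} For fixed $s_0$, both $t\mapsto p(s_0+t)$ and $t\mapsto p(s_0-t)$ are exponential polynomials with $N$ distinct frequencies ($\omega_j$ and $-\omega_j$, with coefficients $c_je^{i\omega_js_0}$). Applying Lemma~\ref{lem:one-sided-evaluation} to each with the same $a>0$ and averaging gives
\[
 |p(s_0)|^2\le \frac{aN}{2}\int_\R e^{-a|s-s_0|}|p(s)|^2\dd s .
\]

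\textbf{First inequality.} Write $\pi/(\cosh(2\pi s)+1)=\pi/(2\cosh^2(\pi s))$. Since $\cosh$ is log-convex with $|(\log\cosh(\pi s))'|\le\pi$, we have $\cosh(\pi s)\le\cosh(\pi s_0)e^{\pi|s-s_0|}$. Hence
\[
 e^{-2\pi|s-s_0|}\le \frac{2\cosh^2(\pi s_0)}{\pi}\cdot\frac{\pi}{\cosh(2\pi s)+1}.
\]
Taking $a=2\pi$ in the two-sided evaluation gives $|p(s_0)|^2\le \pi N\cdot\frac{2\cosh^2(\pi s_0)}{\pi}\int_\R \frac{\pi|p|^2}{\cosh(2\pi s)+1}\dd s$, which is exactly the stated constant $2N\cosh^2(\pi s_0)$.

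\textbf{Second inequality.} Now write $\pi/(\cosh(2\pi s)-1)=\pi/(2\sinh^2(\pi s))$. The case $s_0=0$ is trivial because $p(0)=0$, and replacing $p(s)$ by $p(-s)$ reduces to $s_0>0$. The hypothesis $p(0)=0$ makes $|p(s)|^2=O(s^2)$, so the right-hand integral is finite.

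The key step is a bound $\sinh(\pi|s|)\le\sinh(\pi s_0)e^{\pi\coth(\pi s_0)|s-s_0|}$ for all real $s$, which I would check in three cases.
\begin{itemize}
\item For $s\ge s_0$ it follows from concavity of $\log\sinh(\pi s)$ on $(0,\infty)$, whose tangent slope at $s_0$ is $\pi\coth(\pi s_0)$.
\item For $0\le s\le s_0$ it is immediate from monotonicity, since then $\sinh(\pi s)\le\sinh(\pi s_0)$.
\item For $s<0$, evenness of $\sinh^2$ together with $\bigl||s|-s_0\bigr|\le|s-s_0|$ reduces it to the previous two cases.
\end{itemize}
Hence, with $a=2\pi\coth(\pi s_0)$,
\[
 e^{-a|s-s_0|}\le\frac{2\sinh^2(\pi s_0)}{\pi}\cdot\frac{\pi}{\cosh(2\pi s)-1}.
\]
The two-sided evaluation then yields the constant
\[
 \pi\coth(\pi s_0)N\cdot\frac{2\sinh^2(\pi s_0)}{\pi}=2N\sinh(\pi s_0)\cosh(\pi s_0)=N\sinh(2\pi s_0).
\]

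\textbf{Main obstacle.} The one delicate point is the choice of decay rate in the second inequality. The naive rate $a=2\pi$ on the outer half-line loses a factor involving $\tanh$ and does not give the stated constant. The tangent-line rate $2\pi\coth(\pi s_0)$ from log-concavity, used symmetrically on both sides of $s_0$, is what makes the constant come out exactly.
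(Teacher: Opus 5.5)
Your proposal is correct and follows the paper's proof essentially step for step. You use the same two-sided exponential evaluation, obtained by applying Lemma~\ref{lem:one-sided-evaluation} to $p(s_0\pm t)$, with $a=2\pi$ and a log-derivative weight bound for the first inequality and $a=2\pi\coth(\pi|s_0|)$ for the second; the only cosmetic difference is that you verify the lower-weight comparison via concavity of $\log\sinh$ and a three-case split, whereas the paper uses $|s|\le|s_0|+|s-s_0|$, monotonicity of the weight, and $2\pi\coth(\pi x)\le a$ for $x\ge|s_0|$.
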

\begin{proof}[Proof of Lemma~\ref{lem:frequency-evaluation}]
Apply \eqref{eq:one-sided-evaluation} to $p(s_0+t)$ and $p(s_0-t)$ and
add the resulting inequalities. This gives
\begin{equation}\label{eq:two-sided-evaluation}
 |p(s_0)|^2\le\frac{aN}{2}\int_\R
                  e^{-a|s-s_0|}|p(s)|^2\dd s.
\end{equation}
For \eqref{eq:finite-frequency-evaluation}, $\left|\frac{\dd}{\dd s}\log\frac{\pi}{\cosh(2\pi s)+1}\right|\le2\pi$, so
$\frac{\pi}{\cosh(2\pi s)+1}\ge \frac{\pi}{\cosh(2\pi s_0)+1}e^{-2\pi|s-s_0|}$. Taking $a=2\pi$ in
\eqref{eq:two-sided-evaluation} gives
\[
 |p(s_0)|^2\le2N\cosh^2(\pi s_0)
 \int_\R\frac{\pi|p(s)|^2}{\cosh(2\pi s)+1}\dd s.
\]
For \eqref{eq:finite-frequency-vanishing}, suppose $s_0\ne0$ and set
$a=2\pi\coth(\pi|s_0|)$. Since $\pi/[\cosh(2\pi s)-1]$ is even and decreasing on $(0,\infty)$,
\[
 \frac{\pi}{\cosh(2\pi s)-1}\ge \frac{\pi}{\cosh(2\pi (|s_0|+|s-s_0|))-1}
           \ge \frac{\pi}{\cosh(2\pi s_0)-1}e^{-a|s-s_0|}.
\]
The second inequality uses
$-\frac{\dd}{\dd x}\log\frac{\pi}{\cosh(2\pi x)-1}=2\pi\coth(\pi x)\le a$ for $x\ge|s_0|$.
Equation~\eqref{eq:two-sided-evaluation} therefore yields
\[
 |p(s_0)|^2\le N\sinh(2\pi|s_0|)
 \int_\R\frac{\pi|p(s)|^2}{\cosh(2\pi s)-1}\dd s.
\]
When $p(0)=0$, the second estimate at zero is exact.
If $p$ takes values in a finite-dimensional normed space, apply the scalar
inequality to a norm-one linear functional attaining $\|p(s_0)\|$.
The integral of the scalar square is no larger than that of $\|p(s)\|^2$.
This proves the dimension-independent extension to vector and matrix
coefficients.

\end{proof}

\renewcommand{\restatedlemmanumber}{\ref{lem:pwlog-exponential}}
\begin{restatedlemma}[Two-sided exponential-weight evaluation]
Let
\begin{equation*}
 p(t)=c_0+\sum_{j=1}^{N}c_j e^{i\omega_jt},
 \qquad N\geq1,
\end{equation*}
where the $\omega_j$ are real and lie in an interval of length at most
$W\geq0$. There is no restriction on the location of this interval
relative to the constant frequency. For $a>0$, define
\begin{equation*}
 A_a(W,N)=\left[
 \sqrt{\frac{8a}{\pi^2}}+
 \sqrt{\frac{4W}{\pi}
       +\frac{4a}{\pi^2}\bigl(2\log(2N)+1\bigr)}
 \right]^2.
\end{equation*}
Then, for every $s_0\in\R$,
\begin{equation*}
 |p(s_0)|^2\leq A_a(W,N)
 \int_{\R} e^{-a|t-s_0|}|p(t)|^2\dd t.
\end{equation*}
\end{restatedlemma}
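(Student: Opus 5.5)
The constants are not derived in this plan; the values of $\sqrt{8a/\pi^2}$, $4W/\pi$ and $\log(2N)$ are matched only heuristically, and I expect the log-$N$ step to be the weak point.

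By translation invariance (replace $p(t)$ by $p(t+s_0)$, which only changes the coefficients $c_j$ by unimodular factors) it suffices to take $s_0=0$. Write the frequency interval as $[\mu-W/2,\mu+W/2]$ and split
\[
 p(t)=c_0+e^{i\mu t}q(t),\qquad q(t)=\sum_{j=1}^N c_je^{i(\omega_j-\mu)t},
\]
so that $q$ is an exponential sum with spectrum in $[-W/2,W/2]$.

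The shape of $A_a(W,N)=[\sqrt{\alpha}+\sqrt{\beta}]^2$ suggests a triangle inequality $|p(0)|\le |\Lambda_1(p)|+|\Lambda_2(p)|$. Here $\Lambda_1,\Lambda_2$ would be two linear functionals with $\Lambda_1+\Lambda_2=\mathrm{ev}_0$, each bounded by Cauchy--Schwarz against the weight $e^{-a|t|}$:
\[
 |\Lambda_k(p)|^2\le \Bigl(\int_\R e^{a|t|}|\kappa_k(t)|^2\dd t\Bigr)\int_\R e^{-a|t|}|p(t)|^2\dd t,
\]
when $\Lambda_k(p)=\int\kappa_k p$. The first functional should have $\int e^{a|t|}|\kappa_1|^2=8a/\pi^2$. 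The second should cost $4W/\pi+(4a/\pi^2)(2\log(2N)+1)$.

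\textbf{Step 1 (constant part).} I would take a smoothing kernel with $\int\kappa_1=1$ built from the weight itself, e.g. $\kappa_1(t)\propto e^{-a|t|}$ times an arctan/sech-type factor. Its Fourier transform is then small away from $0$, and Cauchy--Schwarz yields the coefficient $8a/\pi^2$. This functional reproduces $c_0$ exactly. Its action on the cluster $e^{i\mu t}q$ is not zero, but is controlled by $\hat\kappa_1$ on $[\mu-W/2,\mu+W/2]$. The location of the cluster is unrestricted, so whatever leakage remains must be absorbed into the second functional.

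\textbf{Step 2 (cluster part).} The remaining functional $\Lambda_2=\mathrm{ev}_0-\Lambda_1$ annihilates constants. On $e^{i\mu t}q$ it acts as evaluation at $0$ minus a smoothed average. For a band-limited $q$ of type $W/2$, a Plancherel--P\'olya/Bernstein estimate localized to a window of length $L\sim\pi/W$ should give the $4W/\pi$ term.

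\textbf{Step 3 (the $\log(2N)$ term) — the expected obstacle.} This term should come from controlling $q$ outside a window of length $L$, where $e^{-a|t|}$ has decayed. I would use a Tur\'an--Nazarov type lemma: an exponential sum with $N$ terms cannot concentrate its $L^2$ mass far from a window beyond a factor $e^{O(\log N)}$ relative to the window. Concretely, I would first try truncating at $T=(2\log(2N)+1)/a$, where $e^{-aT}=e^{-1}/(4N^2)$. The remaining tail would then be controlled by the trivial bound $|q(t)|^2\le N\sum_j|c_j|^2$ together with an $L^2$ lower bound for $q$ on $[-T,T]$, via the Gram matrix $\frac1{2T}\int_{-T}^T e^{i(\omega_j-\omega_k)t}\dd t$. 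That Gram matrix is nearly diagonal only for well-separated frequencies, so for clustered frequencies this route may fail. The fallback is the entry-sum Cauchy-matrix identity of Lemma~\ref{lem:one-sided-evaluation}, applied to the sum of the two one-sided exponential weights and restricted to $q$. I do not yet see how to make the constant depend on $N$ only through $\log(2N)$ rather than linearly in $N$.

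The convex combination of the two Cauchy--Schwarz bounds then assembles into $[\sqrt{\alpha}+\sqrt{\beta}]^2$. Finally, the vector/matrix-valued extension follows by the norming-functional argument already used in Lemma~\ref{lem:frequency-evaluation}.
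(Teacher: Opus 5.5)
There is a genuine gap. Your outer structure matches the shape $[\sqrt{\alpha}+\sqrt{\beta}]^2$ of the paper's constant: you split $\mathrm{ev}_0=\Lambda_1+\Lambda_2$, apply Cauchy--Schwarz to each piece, and use the triangle inequality. However, the step that actually produces the constant is never supplied, and the routes you sketch for Steps~2--3 would not supply it.

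First, $\Lambda_2$ must be a kernel acting on all of $p$, with $\hat\kappa_2(0)=0$ and $\hat\kappa_2(\omega_j)=1-\hat\kappa_1(\omega_j)$. Its cost $\int e^{a|t|}|\kappa_2|^2$ must also be uniform in the position of the cluster. Estimating it ``on $e^{i\mu t}q$'' through band-limited estimates for $q$ cannot work, because the split $p=c_0+e^{i\mu t}q$ is not controlled by the weighted norm of $p$. For $p(t)=(1-e^{i\epsilon t})/\epsilon$ one has $\int_\R e^{-a|t|}|p|^2\dd t\le 4/a^3$, while $|c_0|=|q|\equiv 1/\epsilon\to\infty$.

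Second, the time-domain tools you list for the logarithm do not give it. Tur\'an--Nazarov constants grow exponentially in $N$, and the Gram matrix degenerates for clustered frequencies. The inverse-Cauchy identity gives a constant linear in $N$, at most $a(N+1)/2$ in the two-sided form of Lemma~\ref{lem:one-sided-evaluation}. As you note yourself, the plan stops exactly where this lemma differs from the spectral-count bound.

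The missing idea is to solve the interpolation problem on the Laplace side, where $N$ enters only through a product of unimodular factors. Set $R(z)=c_0/z+\sum_j c_j/(z-i\omega_j)$. Parseval on the lines $\operatorname{Re}z=\pm a/2$ gives $\int_\R(|R(a/2+iy)|^2+|R(-a/2+iy)|^2)\dd y=2\pi\int_\R e^{-a|t|}|p|^2\dd t$, and the residues of $R$ sum to $p(0)$. Take any $H$ analytic on the strip $|\operatorname{Re}z|<a/2$, decaying at its ends, with $H(0)=H(i\omega_j)=1$. The residue theorem and Cauchy--Schwarz then give $|p(0)|^2\le(2\pi)^{-1}\|H\|_\partial^2\int_\R e^{-a|t|}|p|^2\dd t$, where $\|H\|_\partial$ is the $L^2$ norm over both boundary lines.

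The paper takes $H=1-b_0^2B_1$, built from the strip Blaschke factors $b_\omega(z)=-i\tan(\pi(z-i\omega)/(2a))$. These vanish at $i\omega$ and are unimodular on the boundary. $B_1$ is a product of an even number $M\le 2N$ of such factors over the cluster; evenness makes $B_1\to1$ at both ends. Via Parseval, the splitting $H=(1-b_0^2)+b_0^2(1-B_1)$ is exactly your $\kappa_1+\kappa_2$.
\begin{itemize}
\item On the boundary, $|1-b_0^2|^2=4\operatorname{sech}^2(\pi y/a)$, which yields the term $8a/\pi^2$.
\item Since $|b_0^2|=1$ there, the second piece vanishes at $0$ at no cost. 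Interpolation at $i\omega_j$ is automatic, because there it equals $b_0(i\omega_j)^2=1-(1-b_0^2)(i\omega_j)$. This is what makes the constant independent of the cluster's location.
\item Off the frequency interval $[\alpha,\beta]$, each factor lies within $2e^{-\pi\operatorname{dist}(y,[\alpha,\beta])/a}$ of $\pm1$. Telescoping gives $|1-B_1|\le 2\min\{1,Me^{-\pi\operatorname{dist}(y,[\alpha,\beta])/a}\}$.
\end{itemize}
Integrating the last bound gives $4W/\pi$ from the interval itself and $(4a/\pi^2)(2\log M+1)$ from the tails. This is where $\log(2N)$ comes from.
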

\begin{proof}[Proof of Lemma~\ref{lem:pwlog-exponential}]
It suffices first to consider scalar coefficients and $s_0=0$.
Repeated frequencies may be grouped. If a cluster frequency equals
zero, its contribution may either be grouped with $c_0$ or left in
the displayed representation; neither choice affects the argument.
Set
\begin{equation}
 R(z)=\frac{c_0}{z}+\sum_{j=1}^{N}\frac{c_j}{z-i\omega_j}.
 \label{eq:pwlog-resolvent}
\end{equation}
For $\operatorname{Re}z>0$ and $\operatorname{Re}z<0$, respectively,
\begin{equation}
 R(z)=\int_0^\infty e^{-zt}p(t)\dd t,
 \qquad
 R(z)=-\int_0^\infty e^{zt}p(-t)\dd t.
 \label{eq:pwlog-Laplace}
\end{equation}
Parseval's identity therefore gives
\begin{equation}
 \int_{\R}\!\left(
 |R(a/2+iy)|^2+|R(-a/2+iy)|^2\right)\dd y
 =2\pi\int_{\R}e^{-a|t|}|p(t)|^2\dd t.
 \label{eq:pwlog-Parseval}
\end{equation}

We construct an analytic interpolant on the vertical strip
$S_a=\{z:|\operatorname{Re}z|<a/2\}$. For a real $\omega$, put
\begin{equation}
 b_\omega(z)=-i\tan\!\left(\frac{\pi(z-i\omega)}{2a}\right).
 \label{eq:pwlog-factor}
\end{equation}
The function $b_\omega$ is analytic on a neighborhood of the closed
strip, vanishes at $i\omega$, and satisfies $|b_\omega|\leq1$ in
$\overline{S_a}$ and $|b_\omega|=1$ on its two boundary lines. More
explicitly, with $u=\pi(y-\omega)/a$,
\begin{equation}
 b_\omega(\pm a/2+iy)=\tanh u\mp i\operatorname{sech}u.
 \label{eq:pwlog-factor-boundary}
\end{equation}

For the cluster frequencies $\omega_1,\ldots,\omega_N$, define
\[
 B_1(z)=
 \begin{cases}
  \displaystyle\prod_{j=1}^{N}b_{\omega_j}(z),&N\text{ even},\\[4pt]
  \displaystyle b_{\omega_1}(z)\prod_{j=1}^{N}b_{\omega_j}(z),&N\text{ odd}.
 \end{cases}
\]
Thus, when $N$ is odd, the factor $b_{\omega_1}$ appears twice.
The total number of factors is $M=N$ for even $N$ and $M=N+1$ for
odd $N$; in both cases $M$ is even and $M\leq2N$.
Set
\begin{equation}
 B_0=b_0^2,\qquad H=1-B_0B_1.
 \label{eq:pwlog-interpolant}
\end{equation}
Thus $H(0)=H(i\omega_j)=1$ for all $j$. The even numbers of factors
ensure that $B_0$ and $B_1$ tend to $1$ at both vertical ends of the
strip. Consequently $H$ tends to zero exponentially at both ends.

For a function on the two boundary lines, write
\[
 \|F\|_{\partial S_a,2}^2
 =\int_{\R}\!\left(
 |F(a/2+iy)|^2+|F(-a/2+iy)|^2\right)\dd y.
\]
Apply the residue theorem to $RH$ on rectangles exhausting $S_a$.
The horizontal integrals tend to zero: $R(z)=O(|z|^{-1})$ at the
vertical ends and $H$ decays exponentially there. The sum of the
residues is $c_0+\sum_jc_j=p(0)$, including when coincident poles
are grouped. Cauchy--Schwarz on the two vertical sides, followed by
\eqref{eq:pwlog-Parseval}, yields
\begin{equation}
 |p(0)|^2
 \leq\frac{\|H\|_{\partial S_a,2}^2}{2\pi}
 \int_{\R}e^{-a|t|}|p(t)|^2\dd t.
 \label{eq:pwlog-residue-estimate}
\end{equation}

We estimate the boundary norm of $H$ explicitly. First,
\eqref{eq:pwlog-factor-boundary} gives
\[
 |1-b_0(\pm a/2+iy)^2|^2
 =4\operatorname{sech}^2(\pi y/a),
\]
so
\begin{equation}
 \frac{\|1-B_0\|_{\partial S_a,2}^2}{2\pi}
 =\frac{8a}{\pi^2}.
 \label{eq:pwlog-singleton-norm}
\end{equation}
Let $[\alpha,\beta]$ contain the cluster frequencies, with
$\beta-\alpha\leq W$. Above this interval each factor differs from
$1$ by at most $2e^{-\pi(y-\beta)/a}$. Below it, each factor differs
from $-1$ by at most $2e^{-\pi(\alpha-y)/a}$. Since $M$ is even,
a telescoping product estimate and $|B_1|=1$ on the boundary give
\begin{equation}
 |1-B_1(\pm a/2+iy)|
 \leq2\min\!\left\{1,
 M e^{-\pi\operatorname{dist}(y,[\alpha,\beta])/a}\right\}.
 \label{eq:pwlog-tail}
\end{equation}
The tail integral on either side of the frequency interval is
controlled by
\[
 \int_0^\infty\min\{1,M^2e^{-2\pi u/a}\}\dd u
 =\frac{a}{\pi}\log M+\frac{a}{2\pi}.
\]
Counting both tails and both strip boundaries gives
\begin{equation}
 \frac{\|1-B_1\|_{\partial S_a,2}^2}{2\pi}
 \leq\frac{4W}{\pi}
       +\frac{4a}{\pi^2}(2\log M+1).
 \label{eq:pwlog-cluster-norm}
\end{equation}
Finally,
\[
 H=(1-B_0)+B_0(1-B_1),\qquad |B_0|=1
 \quad\hbox{on }\partial S_a.
\]
The triangle inequality in $L^2(\partial S_a)$, together with
\eqref{eq:pwlog-singleton-norm}, \eqref{eq:pwlog-cluster-norm} and
$M\leq2N$, proves
$\|H\|_{\partial S_a,2}^2/(2\pi)\leq A_a(W,N)$.
This proves \eqref{eq:pwlog-exponential} at zero. Applying it to
$p(s_0+t)$ proves the result at an arbitrary $s_0$, since translation
changes the coefficients but not the frequencies.
For Hilbert-space coefficients, expand in an orthonormal basis and
sum the scalar squared estimates. No dimension factor is introduced.
\end{proof}

\renewcommand{\restatedlemmanumber}{\ref{lem:pwlog-Poisson}}
\begin{restatedlemma}[Evaluation against the two Poisson weights]
Under the hypotheses of Lemma~\ref{lem:pwlog-exponential}, define
\begin{equation*}
 C_{\mathrm{eval}}(W,N)=\frac{8}{\pi^2}
 \left[2+\sqrt{W+4\log(2N)+2}\right]^2.
\end{equation*}
For every $s_0\in\R$,
\begin{equation*}
 |p(s_0)|^2\leq C_{\mathrm{eval}}(W,N)\cosh^2(\pi s_0)
 \int_{\R}\frac{\pi|p(t)|^2}{\cosh(2\pi t)+1}\dd t.
\end{equation*}
If in addition $p(0)=0$, then
\begin{align*}
 |p(s_0)|^2
 \leq\frac12 C_{\mathrm{eval}}(W,N)\sinh(2\pi|s_0|)
 \int_{\R}\frac{\pi|p(t)|^2}{\cosh(2\pi t)-1}\dd t.
\end{align*}
For the faithful density matrices $\sigma$ and $\Phi(\sigma)$,
\begin{equation*}
 C_{\mathrm{eval}}\!\left(w(\sigma)+w(\Phi(\sigma)),r_\sigma r_{\Phi(\sigma)}\right)
 \leq16(1+\op{\log\sigma}+\op{\log\Phi(\sigma)}).
\end{equation*}
\end{restatedlemma}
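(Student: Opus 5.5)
The plan is to derive Lemma~\ref{lem:pwlog-Poisson} from Lemma~\ref{lem:pwlog-exponential} in the same way Lemma~\ref{lem:frequency-evaluation} was derived from the two-sided bound \eqref{eq:two-sided-evaluation}. There are three steps: bound each Poisson weight below by a pointwise exponential centered at $s_0$, choose the decay rate $a$, and check that $A_a(W,N)$ times the resulting constant collapses to $C_{\mathrm{eval}}$.

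\emph{Upper weight.} The log-derivative bound from the proof of Lemma~\ref{lem:frequency-evaluation} gives
\[
\frac{\pi}{\cosh(2\pi t)+1}\ \ge\ \frac{\pi}{2\cosh^2(\pi s_0)}\,e^{-2\pi|t-s_0|}.
\]
Hence
\[
\int_\R e^{-2\pi|t-s_0|}|p|^2\,\dd t\ \le\ \frac{2\cosh^2(\pi s_0)}{\pi}\int_\R\frac{\pi|p|^2}{\cosh(2\pi t)+1}\,\dd t .
\]
Insert this into \eqref{eq:pwlog-exponential} with $a=2\pi$. Then
\[
\sqrt{8a/\pi^2}=\tfrac{2}{\sqrt\pi}\cdot2,\qquad \sqrt{4W/\pi+\tfrac{8}{\pi}(2\log(2N)+1)}=\tfrac{2}{\sqrt\pi}\sqrt{W+4\log(2N)+2},
\]
so $A_{2\pi}(W,N)=\frac4\pi\bigl[2+\sqrt{W+4\log(2N)+2}\bigr]^2$. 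Multiplying by $2\cosh^2(\pi s_0)/\pi$ gives exactly $C_{\mathrm{eval}}(W,N)\cosh^2(\pi s_0)$.

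\emph{Lower weight.} For $s_0\neq0$, take $a=2\pi\coth(\pi|s_0|)$. The monotonicity argument from the proof of Lemma~\ref{lem:frequency-evaluation} gives
\[
\frac{\pi}{\cosh(2\pi t)-1}\ \ge\ \frac{\pi}{2\sinh^2(\pi s_0)}\,e^{-a|t-s_0|}.
\]
Now estimate $A_a$:
\[
\sqrt{8a/\pi^2}=\tfrac{4}{\sqrt\pi}\sqrt{\coth},\qquad \sqrt{4W/\pi+\tfrac{4a}{\pi^2}(2\log(2N)+1)}=\tfrac{2}{\sqrt\pi}\sqrt{\coth}\,\sqrt{W\tanh(\pi|s_0|)+4\log(2N)+2}.
\]
This gives $A_a\le\frac4\pi\coth(\pi|s_0|)\bigl[2+\sqrt{W\tanh+4\log(2N)+2}\bigr]^2$. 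Multiplying by $2\sinh^2(\pi s_0)/\pi$ and using $2\sinh\cosh=\sinh(2\pi|s_0|)$ yields $\tfrac12C_{\mathrm{eval}}(W\tanh(\pi|s_0|),N)\sinh(2\pi|s_0|)$. Since $C_{\mathrm{eval}}$ is monotone in $W$ and $\tanh\le1$, this is at most the stated bound. This sharper $\tanh$ form is the one quoted in Theorem~\ref{thm:finite-recovery}. At $s_0=0$ the hypothesis $p(0)=0$ makes the inequality trivial. The vector-valued extension is the same orthonormal-basis summation used before.

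\emph{Logarithmic simplification.} This is the only step needing a small new observation, and the main care is keeping the numerics honest. Write $L=\op{\log\sigma}+\op{\log\Phi(\sigma)}$.
\begin{itemize}
\item Width: for a density matrix $\lambda_{\max}\le1$, so $w(a)\le-\log\lambda_{\min}(a)=\op{\log a}$. Hence $W\le L$.
\item Count: $\lambda_{\min}(a)\le1/d\le1/r_a$, so $\log r_a\le\op{\log a}$. Hence $\log(2N)\le L+\log2$ for $N=r_\sigma r_{\Phi(\sigma)}$.
\end{itemize}
Then $C_{\mathrm{eval}}\le\frac8{\pi^2}\bigl[2+\sqrt{5L+4\log2+2}\bigr]^2$. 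Applying $(x+y)^2\le2x^2+2y^2$ gives
\[
C_{\mathrm{eval}}\ \le\ \frac{16}{\pi^2}\bigl(6+4\log2+5L\bigr)\ \le\ 16(1+L),
\]
since $\frac{16}{\pi^2}(6+4\log 2)\approx14.2\le16$ and $\frac{80}{\pi^2}\approx8.1\le16$. Monotonicity of $C_{\mathrm{eval}}$ in both arguments justifies replacing the actual data by these upper bounds.
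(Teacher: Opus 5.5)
Your proposal is correct and follows the paper's proof essentially step for step. It uses the same exponential lower bounds on the two Poisson weights with $a=2\pi$ and $a=2\pi\coth(\pi|s_0|)$, and the same evaluation of $A_a$; in the lower-weight case this is actually an equality, not just $\le$. The logarithmic simplification also matches: the same spectral estimates $w(a)\le\op{\log a}$ and $\log r_a\le\op{\log a}$, followed by $(2+\sqrt{x})^2\le 8+2x$, with the constants checking out.
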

\begin{proof}[Proof of Lemma~\ref{lem:pwlog-Poisson}]
The logarithmic derivative of $\pi/[\cosh(2\pi t)+1]$ has
absolute value at most $2\pi$. Hence
\[
 \frac{\pi}{\cosh(2\pi t)+1}\geq \frac{\pi}{\cosh(2\pi s_0)+1}e^{-2\pi|t-s_0|}.
\]
Apply Lemma~\ref{lem:pwlog-exponential} with $a=2\pi$ and use
\[
 A_{2\pi}(W,N)=\frac4\pi
 \left[2+\sqrt{W+4\log(2N)+2}\right]^2,
 \qquad \frac{\cosh(2\pi s_0)+1}{\pi}=\frac{2\cosh^2(\pi s_0)}{\pi}.
\]
This gives \eqref{eq:pwlog-plus-evaluation}.

For the lower boundary, put $v=|s_0|>0$ and
$a_v=2\pi\coth(\pi v)$. The function
$\pi/[\cosh(2\pi t)-1]$ is even and decreasing on
$(0,\infty)$. For $x\geq v$,
\[
 -\frac{\dd}{\dd x}\log \frac{\pi}{\cosh(2\pi x)-1}
 =2\pi\coth(\pi x)\leq a_v.
\]
Since $|t|\leq v+|t-s_0|$, it follows that, for $t\ne0$,
\begin{equation}
 \frac{\pi}{\cosh(2\pi t)-1}\geq \frac{\pi}{\cosh(2\pi (v+|t-s_0|))-1}
 \geq \frac{\pi}{\cosh(2\pi v)-1}e^{-a_v|t-s_0|}.
 \label{eq:pwlog-minus-comparison}
\end{equation}
The vanishing assumption implies $p(t)=O(t)$ at zero, so the lower
weighted integral is finite there; both weighted integrals converge
at infinity. Lemma~\ref{lem:pwlog-exponential} with $a=a_v$ yields
the first inequality in \eqref{eq:pwlog-minus-evaluation}, because
\begin{align*}
 \frac{A_{a_v}(W,N)[\cosh(2\pi v)-1]}{\pi}
 &=\frac4{\pi^2}\sinh(2\pi v)
 \left[2+\sqrt{W\tanh(\pi v)+4\log(2N)+2}\right]^2\\
 &=\frac12 C_{\mathrm{eval}}\!\left(W\tanh(\pi v),N\right)\sinh(2\pi v).
\end{align*}
The second inequality uses $\tanh(\pi v)\leq1$ and monotonicity of
$C_{\mathrm{eval}}(W,N)$ in $W$. The Hilbert-space extension follows by summing
scalar estimates.

Finally, we prove \eqref{eq:pwlog-log-simplification}.
For every faithful density matrix $a$,
\begin{equation}
 w(a)\leq\op{\log a},\qquad
 \log r_a\leq\log\operatorname{rank}(a)
 \leq-\log\lambda_{\min}(a)=\op{\log a}.
 \label{eq:pwlog-normalization}
\end{equation}
The second inequality follows from
$1=\Tr a\geq\operatorname{rank}(a)\lambda_{\min}(a)$, while the first
uses $\lambda_{\max}(a)\leq1$. Thus the radicand in
\eqref{eq:pwlog-K} is at most $5\op{\log\sigma}+5\op{\log\Phi(\sigma)}+4\log2+2$.
Applying $(2+\sqrt{x})^2\leq8+2x$, followed by
$5<\pi^2$ and $6+4\log2<\pi^2$, yields
\eqref{eq:pwlog-log-simplification}.
\end{proof}

\section{Proofs of the BS identities and further consequences}
\label{sec:bs-poisson-bures}
We use the notation of Sections~\ref{sec:bs-results}
and~\ref{sec:bs-fidelity-variational}.
The two expressions in \eqref{eq:bs-definition} agree by polar decomposition, or by
$T^\dagger\log(TT^\dagger)T=(T^\dagger T)\log(T^\dagger T)$ with
$T=\rho^{1/2}\sigma^{-1/2}$.

For the appendix calculations, abbreviate
\begin{equation}\label{eq:bs-output-family}
 Z_{\mathrm{BS}}(z)
 :=\Phi(\sigma)^{-1/2}\widehat\ell_{\mathrm{BS}}^{\,z}
                  \Phi(\sigma)^{1/2}.
\end{equation}
Thus $\Gamma_{\mathrm{BS}}(z)=\ell_{\mathrm{BS}}^{1/2-z}
\sigma^{1/2}\Phi^\dagger(Z_{\mathrm{BS}}(z))$.

\subsection{Weighted Schwarz contraction}
\begin{lemma}[Weighted Schwarz contraction]\label{lem:bs-contraction}
Both defects in \eqref{eq:bs-defects} are nonnegative. More explicitly,
\begin{align}
 \delta_{\mathrm{BS},0}(s)
 &=\Tr\rho\Bigl[
   \Phi^\dagger\bigl(Z_{\mathrm{BS}}(is)Z_{\mathrm{BS}}(is)^\dagger\bigr)
   -\Phi^\dagger\bigl(Z_{\mathrm{BS}}(is)\bigr)
    \Phi^\dagger\bigl(Z_{\mathrm{BS}}(is)\bigr)^\dagger
   \Bigr],\label{eq:bs-schwarz-lower}\\
 \delta_{\mathrm{BS},1}(s)
 &=\Tr\sigma\Bigl[
   \Phi^\dagger\bigl(Z_{\mathrm{BS}}(1/2+is)
                         Z_{\mathrm{BS}}(1/2+is)^\dagger\bigr)
   -\Phi^\dagger\bigl(Z_{\mathrm{BS}}(1/2+is)\bigr)
    \Phi^\dagger\bigl(Z_{\mathrm{BS}}(1/2+is)\bigr)^\dagger
   \Bigr].\label{eq:bs-schwarz-upper}
\end{align}
\end{lemma}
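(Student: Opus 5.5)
The plan is to compute $\HS{\Gamma_{\mathrm{BS}}(j/2+is)}^2$ directly and compare it with $1$. Write $\Gamma_{\mathrm{BS}}(z)=\ell_{\mathrm{BS}}^{1/2-z}\sigma^{1/2}\Phi^\dagger(Z)$ with $Z=Z_{\mathrm{BS}}(z)$. Then
\[
\HS{\Gamma_{\mathrm{BS}}(z)}^2=\Tr\bigl[\Phi^\dagger(Z)^\dagger\sigma^{1/2}\ell_{\mathrm{BS}}^{1-2\Re z}\sigma^{1/2}\Phi^\dagger(Z)\bigr],
\]
since $\ell_{\mathrm{BS}}^{1/2-z}$ has modulus $\ell_{\mathrm{BS}}^{1/2-\Re z}$ in the sense that $(\ell^{1/2-z})^\dagger\ell^{1/2-z}=\ell^{1-2\Re z}$ for real-spectrum positive $\ell$.

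\textbf{Case $j=0$.} Here $\sigma^{1/2}\ell_{\mathrm{BS}}\sigma^{1/2}=\rho$, so
\[
\HS{\Gamma_{\mathrm{BS}}(is)}^2=\Tr\rho\,\Phi^\dagger(Z)\Phi^\dagger(Z)^\dagger
\]
after cycling the trace.

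\textbf{Case $j=1$.} Here the middle factor is $\ell^0=\one$, giving
\[
\HS{\Gamma_{\mathrm{BS}}(1/2+is)}^2=\Tr\sigma\,\Phi^\dagger(Z)\Phi^\dagger(Z)^\dagger.
\]

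It then remains to show that $\Tr\rho\,\Phi^\dagger(ZZ^\dagger)=1$ for $z=is$, and $\Tr\sigma\,\Phi^\dagger(ZZ^\dagger)=1$ for $z=1/2+is$. Once these hold, subtracting yields exactly \eqref{eq:bs-schwarz-lower} and \eqref{eq:bs-schwarz-upper}. Nonnegativity then follows from the Kadison--Schwarz inequality $\Phi^\dagger(XX^\dagger)\ge\Phi^\dagger(X)\Phi^\dagger(X)^\dagger$ for the unital completely positive map $\Phi^\dagger$, paired against the positive operators $\rho$ and $\sigma$.

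For the normalizations, use duality to move to the output: $\Tr\rho\,\Phi^\dagger(ZZ^\dagger)=\Tr\Phi(\rho)ZZ^\dagger$.

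\textbf{Normalization for $z=is$.} Since $\widehat\ell^{\,is}$ is unitary,
\[
ZZ^\dagger=\Phi(\sigma)^{-1/2}\widehat\ell^{\,is}\Phi(\sigma)\widehat\ell^{\,-is}\Phi(\sigma)^{-1/2},
\]
and $\Phi(\rho)=\Phi(\sigma)^{1/2}\widehat\ell\,\Phi(\sigma)^{1/2}$. Hence
\[
\Tr\Phi(\rho)ZZ^\dagger=\Tr\bigl[\widehat\ell\,\widehat\ell^{\,is}\Phi(\sigma)\widehat\ell^{\,-is}\bigr]=\Tr\bigl[\widehat\ell\,\Phi(\sigma)\bigr]=\Tr\Phi(\rho)=1,
\]
using that $\widehat\ell$ commutes with its own powers.

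\textbf{Normalization for $z=1/2+is$.}
\[
\Tr\Phi(\sigma)ZZ^\dagger=\Tr\bigl[\Phi(\sigma)^{1/2}\widehat\ell^{\,1/2+is}\Phi(\sigma)\widehat\ell^{\,1/2-is}\Phi(\sigma)^{-1/2}\bigr]=\Tr\bigl[\widehat\ell\,\Phi(\sigma)\bigr]=1.
\]

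The main point requiring care is the trace conventions. Because the output algebra is restricted to the common support, the identity $\Tr\Phi(\cdot)X=\Tr(\cdot)\Phi^\dagger(X)$ and the unitality of $\Phi^\dagger$ must be understood relative to the support projection. I expect this bookkeeping, rather than any inequality, to be the only real obstacle. Under the faithful assumptions it is harmless, since $\Phi(\sigma)$ is invertible on the reduced space and $\Phi^\dagger$ of the reduced identity is $\one_A$ for trace-preserving $\Phi$.
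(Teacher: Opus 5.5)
Your proposal is correct and follows essentially the same route as the paper. You strip the factor $\ell_{\mathrm{BS}}^{1/2-z}$ to express $\HS{\Gamma_{\mathrm{BS}}(j/2+is)}^2$ as $\Tr\rho\,\Phi^\dagger(Z)\Phi^\dagger(Z)^\dagger$ or $\Tr\sigma\,\Phi^\dagger(Z)\Phi^\dagger(Z)^\dagger$, check the weighted normalizations $\Tr\Phi(\rho)ZZ^\dagger=\Tr\Phi(\sigma)ZZ^\dagger=1$ by cyclicity, and conclude with the Schwarz inequality for the unital map $\Phi^\dagger$; your remark on the common-support restriction is a harmless clarification that the paper leaves implicit.
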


\begin{proof}
The unital completely positive map $\Phi^\dagger$ satisfies
\begin{equation}\label{eq:bs-schwarz}
 \Phi^\dagger(M)\Phi^\dagger(M)^\dagger
 \leq\Phi^\dagger(MM^\dagger)
\end{equation}
for every output matrix $M$. Cyclicity of trace gives
\begin{align}
 \Tr\Phi(\rho)Z_{\mathrm{BS}}(is)Z_{\mathrm{BS}}(is)^\dagger
 &=\Tr\widehat\ell_{\mathrm{BS}}\,
       \widehat\ell_{\mathrm{BS}}^{is}\Phi(\sigma)
       \widehat\ell_{\mathrm{BS}}^{-is}=1,
 \label{eq:bs-lower-unit-weight}\\
 \Tr\Phi(\sigma)Z_{\mathrm{BS}}(1/2+is)
                       Z_{\mathrm{BS}}(1/2+is)^\dagger
 &=\Tr\widehat\ell_{\mathrm{BS}}^{1/2+is}\Phi(\sigma)
       \widehat\ell_{\mathrm{BS}}^{1/2-is}=1.
 \label{eq:bs-upper-unit-weight}
\end{align}
Here $\Tr\Phi(\sigma)\widehat\ell_{\mathrm{BS}}=\Tr\Phi(\rho)=1$.
On the other hand, the factors $\ell_{\mathrm{BS}}^{-is}$ in the
boundary values of $\Gamma_{\mathrm{BS}}$ are unitary, so
\begin{equation*}
 \HS{\Gamma_{\mathrm{BS}}(is)}^2 =\Tr\rho\,\Phi^\dagger\bigl(Z_{\mathrm{BS}}(is)\bigr) \Phi^\dagger\bigl(Z_{\mathrm{BS}}(is)\bigr)^\dagger,\qquad \HS{\Gamma_{\mathrm{BS}}(1/2+is)}^2 =\Tr\sigma\,\Phi^\dagger\bigl(Z_{\mathrm{BS}}(1/2+is)\bigr) \Phi^\dagger\bigl(Z_{\mathrm{BS}}(1/2+is)\bigr)^\dagger.
\end{equation*}
Subtracting proves \eqref{eq:bs-schwarz-lower}--\eqref{eq:bs-schwarz-upper};
\eqref{eq:bs-schwarz} proves positivity. Notice that
$Z_{\mathrm{BS}}(is)$ need not be unitary. It is the weighted trace
identities above, rather than unitarity of this operator, that give the
contractions.
\end{proof}

\subsection{Proof of the boundary loss identity}
\begin{proof}[Proof of the boundary identity in Theorem~\ref{thm:bs-poisson}]
The function
\[
 h(z)=1-\Re\Tr\!\left[\Gamma_{\mathrm{BS}}(0)^\dagger
                              \Gamma_{\mathrm{BS}}(z)\right]
\]
is bounded and harmonic on the strip, with $h(0)=0$ by
\eqref{eq:bs-reference-factorization}. Differentiating
\eqref{eq:bs-Gamma} at zero and using cyclicity gives
\[
 \partial_x h(0)
 =\Tr\sigma\ell_{\mathrm{BS}}\log\ell_{\mathrm{BS}}
 -\Tr\Phi(\sigma)\widehat\ell_{\mathrm{BS}}\log\widehat\ell_{\mathrm{BS}}
 =\Delta_{\mathrm{BS}}.
\]
The derivative of the analytic overlap at zero is real, so
$h(is)=O(s^2)$. For $j\in\{0,1\}$, expanding the squared norm and
using \eqref{eq:bs-defects} yields
\[
 2h(j/2+is)
 =\HS{\Gamma_{\mathrm{BS}}(0)-\Gamma_{\mathrm{BS}}(j/2+is)}^2
 +\delta_{\mathrm{BS},j}(s).
\]
Lemma~\ref{lem:Poisson-derivative} now gives \eqref{eq:bs-poisson}.
Nonnegativity follows from Lemma~\ref{lem:bs-contraction}; convergence
follows from $h(is)=O(s^2)$ and exponential decay of the kernels.
\end{proof}

\subsection{Proof of the fidelity remainder}
\begin{proof}[Proof of Theorem~\ref{thm:bs-fidelity-remainder}]
The averaging weight has total mass one:
\begin{equation}\label{eq:bs-boundary-probability}
 \int_{\R}\frac{\pi}{\cosh(2\pi s)+1}\dd s
 =\frac12\int_{\R}\pi\,\operatorname{sech}^2(\pi s)\dd s=1.
\end{equation}
Together with \eqref{eq:bs-subnormalization}, this proves positivity and
subnormalization of \eqref{eq:bs-averaged-reconstruction}. Moreover,
\begin{equation}\label{eq:bs-integrated-trace-defect}
 1-\Tr\overline\tau_{\mathrm{BS}}
 =\int_{\R}\frac{\pi\,\delta_{\mathrm{BS},1}(s)}
                     {\cosh(2\pi s)+1}\dd s.
\end{equation}
Thus the missing trace is exactly the integrated Schwarz defect.

Both $\Tr(\rho\omega)$ and $\Tr(\overline\tau_{\mathrm{BS}}\omega^{-1})$ are positive: the averaged reconstruction is nonzero,
since $\Tr\sigma\Phi^\dagger(Z_{\mathrm{BS}}(1/2))
=\Tr\Phi(\sigma)\widehat\ell_{\mathrm{BS}}^{1/2}>0$,
and $\Gamma_{\mathrm{BS}}(1/2+is)$ is continuous in $s$.

For fixed $\omega>0$, start with the bounded analytic function on
$0\leq\Re z\leq1/2$,
\begin{equation}\label{eq:bs-scalar-function}
 f_{\mathrm{BS},\omega}(z)
 :=\bigl[\Tr(\rho\omega)\Tr(\overline\tau_{\mathrm{BS}}\omega^{-1})\bigr]^{-z}
 \Tr\!\left[\Gamma_{\mathrm{BS}}(0)^\dagger\Gamma_{\mathrm{BS}}(z)\right].
\end{equation}
Boundedness on the closed strip follows from finite dimensionality,
unitarity of the imaginary powers, and $\Tr(\rho\omega)\Tr(\overline\tau_{\mathrm{BS}}\omega^{-1})>0$.
We have $f_{\mathrm{BS},\omega}(0)=1$, and differentiation gives
\begin{equation}\label{eq:bs-scalar-derivative}
 \begin{aligned}
 f_{\mathrm{BS},\omega}'(0)
 &=-\log\bigl[\Tr(\rho\omega)\Tr(\overline\tau_{\mathrm{BS}}\omega^{-1})\bigr]
 -\Tr\sigma\ell_{\mathrm{BS}}\log\ell_{\mathrm{BS}}
 +\Tr\Phi(\rho)\Phi(\sigma)^{-1/2}\log\widehat\ell_{\mathrm{BS}}\,\Phi(\sigma)^{1/2}\\
 &=-\log\bigl[\Tr(\rho\omega)\Tr(\overline\tau_{\mathrm{BS}}\omega^{-1})\bigr]-\Delta_{\mathrm{BS}}\in\R.
 \end{aligned}
\end{equation}
The last equality uses
$\Phi(\rho)=\Phi(\sigma)^{1/2}\widehat\ell_{\mathrm{BS}}\Phi(\sigma)^{1/2}$
and cyclicity, without commuting distinct factors.
Thus $1-\Re f_{\mathrm{BS},\omega}(is)=O(s^2)$ at zero.
The boundary overlaps are
\begin{equation}\label{eq:bs-boundary-overlaps}
 f_{\mathrm{BS},\omega}(j/2+is)
 =\bigl[\Tr(\rho\omega)\Tr(\overline\tau_{\mathrm{BS}}\omega^{-1})\bigr]^{-j/2-is}
   \Tr\!\left[\Gamma_{\mathrm{BS}}(0)^\dagger\Gamma_{\mathrm{BS}}(j/2+is)\right],
 \qquad j\in\{0,1\}.
\end{equation}
Applying the differentiated Poisson formula on the strip gives
\begin{equation}
 \Delta_{\mathrm{BS}}+\log\bigl[\Tr(\rho\omega)\Tr(\overline\tau_{\mathrm{BS}}\omega^{-1})\bigr]
 =
 2\pi\int_{\R}\Bigg[
 \frac{1-\Re f_{\mathrm{BS},\omega}(is)}
      {\cosh(2\pi s)-1}
 +
 \frac{1-\Re f_{\mathrm{BS},\omega}(1/2+is)}
      {\cosh(2\pi s)+1}
 \Bigg]\dd s .
 \label{eq:bs-fid-poisson-tilted}
\end{equation}

At $z=is$,
\begin{equation}
 2\bigl(1-\Re f_{\mathrm{BS},\omega}(is)\bigr)
 =
 \|\Gamma_{\mathrm{BS}}(0)-\bigl[\Tr(\rho\omega)\Tr(\overline\tau_{\mathrm{BS}}\omega^{-1})\bigr]^{-is}\Gamma_{\mathrm{BS}}(is)\|_2^2
 +\delta_{\mathrm{BS},0}(s).
 \label{eq:bs-fid-lower-polarization}
\end{equation}
At $z=1/2+is$, define
\begin{equation}
 u_{\omega}
 :=
 \frac{\Gamma_{\mathrm{BS}}(0)\omega^{1/2}}{\sqrt{\Tr(\rho\omega)}},
 \qquad
 v_{\omega,s}
 :=
 \bigl[\Tr(\rho\omega)\Tr(\overline\tau_{\mathrm{BS}}\omega^{-1})\bigr]^{-is}
 \frac{\Gamma_{\mathrm{BS}}(1/2+is)\omega^{-1/2}}{\sqrt{\Tr(\overline\tau_{\mathrm{BS}}\omega^{-1})}}.
\end{equation}
Then $\|u_{\omega}\|_2=1$ and, by cyclicity,
\begin{equation}
 \Tr(u_{\omega}^\dagger v_{\omega,s})
 =
 f_{\mathrm{BS},\omega}(1/2+is).
\end{equation}
Hence
\begin{equation}
 2\bigl(1-\Re f_{\mathrm{BS},\omega}(1/2+is)\bigr)
 =
 \|u_{\omega}-v_{\omega,s}\|_2^2
 +
 1-\frac{
   \Tr\bigl(\tau_{\mathrm{BS}}(s)\omega^{-1}\bigr)
 }{\Tr(\overline\tau_{\mathrm{BS}}\omega^{-1})}.
 \label{eq:bs-fid-upper-polarization}
\end{equation}
The final term integrates to zero against $\pi/[\cosh(2\pi s)+1]$
by \eqref{eq:bs-boundary-probability}, \eqref{eq:bs-averaged-reconstruction},
and linearity of the trace.
Substitution into \eqref{eq:bs-fid-poisson-tilted} proves
\eqref{eq:bs-fid-test-identity}.
All integrals converge absolutely: the nonnegative lower-boundary
numerator is real analytic and $O(s^2)$ by
\eqref{eq:bs-fid-lower-polarization}, so its apparent singularity is
removable. The upper-boundary denominator is at least two, all matrix families
are bounded for fixed $\omega$, and both kernels decay exponentially.

Finally, Alberti's variational characterization of Uhlmann fidelity~\cite{Alberti1983},
valid for positive operators without trace normalization, is
\begin{equation}
 F(\rho,\overline\tau_{\mathrm{BS}})
 =
 \inf_{\omega>0}
 \Tr(\rho\omega)\,
 \Tr(\overline\tau_{\mathrm{BS}}\omega^{-1}).
 \label{eq:bs-fid-alberti}
\end{equation}
Since the left-hand side of
\eqref{eq:bs-fid-test-identity} is independent of $\omega$,
taking the supremum of the first term is equivalent to taking the
infimum of the remainder, and yields
\eqref{eq:bs-fid-remainder}.
\end{proof}

\begin{remark}[Scope and equality]
Theorem~\ref{thm:bs-fidelity-remainder} is an averaged-rotation reconstruction bound;
it does not assert the same coefficient at a prescribed $s$.
If $\Phi^\dagger$ is a $*$-homomorphism, both Schwarz defects vanish
and every $\tau_{\mathrm{BS}}(s)$ is already normalized.
For a general channel, no such normalization is assumed.
If $\Delta_{\mathrm{BS}}=0$, continuity and nonnegativity in
\eqref{eq:bs-poisson} force
$\Gamma_{\mathrm{BS}}(0)=\Gamma_{\mathrm{BS}}(1/2+is)$ for every $s$;
hence $\tau_{\mathrm{BS}}(s)=\overline\tau_{\mathrm{BS}}=\rho$.
This conclusion concerns the nonlinear construction
\eqref{eq:bs-reconstructed-operator}, not recovery by a common CPTP
map. All statements above are proved under the faithful finite-dimensional
input assumptions.
\end{remark}

\section{Green-function boundary limit}
\subsection{Passing the limit through the surface integral}\label{app:green-boundary-limit}
For the Green function in \eqref{sv:Green}, we establish the $L^1$
convergence used in the proof of Lemma~\ref{sv:greenlemma}.
We compare the limit of the integrals with the integral of the
pointwise limit. First, integration in $t$ gives
\[
 \int_\R G(r,x+it)\dd t=
 \begin{cases}
  2x(1/2-r),&0\le x\le r,\\
  2r(1/2-x),&r\le x\le1/2.
 \end{cases}
\]
Indeed, integrating the defining equation for $G$ in $t$ shows that
this integral has negative second derivative $\delta(x-r)$ and
vanishes at $x=0,1/2$. It is therefore continuous and linear on each
side of $r$, with derivative jump $-1$, which determines the displayed
expression. The $t$-derivative term integrates to zero by exponential
decay, with the equation understood distributionally. Consequently,
\[
\begin{aligned}
 \lim_{r\downarrow0}\int_0^{1/2}\!\int_\R
 \frac{G(r,x+it)}r\dd t\dd x
 &=\lim_{r\downarrow0}\left[
 \frac{2(1/2-r)}r\int_0^r x\dd x
 +2\int_r^{1/2}(1/2-x)\dd x\right]\\
 &=\lim_{r\downarrow0}\frac{1/2-r}{2}=\frac14.
\end{aligned}
\]
On the other hand, \eqref{sv:Green} gives the pointwise limit
$G(r,x+it)/r\to\sin(2\pi x)/[\cosh(2\pi t)-\cos(2\pi x)]$.
Integrating this limit instead gives
\[
 \int_0^{1/2}\!\int_\R
 \frac{\sin(2\pi x)}{\cosh(2\pi t)-\cos(2\pi x)}\dd t\dd x
 =\int_0^{1/2}(1-2x)\dd x=\frac14.
\]
Thus the two calculations agree. Since $G/r$ and its pointwise limit
are nonnegative, convergence of their integrals implies $L^1$
convergence by Scheff\'e's lemma.
Multiplication by the bounded function $\|F'(x+it)\|^2$ preserves
convergence of the integrals. This permits passing the limit through
the surface integral in Green's representation divided by $r$.

\bibliographystyle{apsrev4-2}
\makeatletter
\immediate\write\@auxout{\string\citation{IMFPetzBibliographyControl}}
\makeatother
\let\OriginalBibitemShut\BibitemShut
\renewcommand*{\BibitemShut}[1]{\OriginalBibitemShut{Stop}\backrefprint}
\bibliography{references}

@CONTROL{IMFPetzBibliographyControl,title="1",pages="1"}

@article{lieb1973proof,
  author = {Lieb, Elliott H. and Ruskai, Mary Beth},
  title = {Proof of the Strong Subadditivity of Quantum-Mechanical Entropy},
  journal = {Journal of Mathematical Physics},
  volume = {14},
  number = {12},
  pages = {1938--1941},
  year = {1973},
  doi = {10.1063/1.1666274}
}

@ARTICLE{Kim2022c-minus,
       author = {{Kim}, Isaac H. and {Shi}, Bowen and {Kato}, Kohtaro and {Albert}, Victor V.},
        title = "{Chiral Central Charge from a Single Bulk Wave Function}",
      journal = {Physical Review Letters},
         year = 2022,
        month = apr,
       volume = {128},
       number = {17},
          eid = {176402},
        pages = {176402},
          doi = {10.1103/PhysRevLett.128.176402},
          url = {https://doi.org/10.1103/PhysRevLett.128.176402},
archivePrefix = {arXiv},
       eprint = {2110.06932},
 primaryClass = {quant-ph},
       adsurl = {https://ui.adsabs.harvard.edu/abs/2022PhRvL.128q6402K}
}

@ARTICLE{Fawzi2014,
       author = {{Fawzi}, Omar and {Renner}, Renato},
        title = "{Quantum Conditional Mutual Information and Approximate Markov Chains}",
      journal = {Communications in Mathematical Physics},
         year = 2015,
        month = dec,
       volume = {340},
       number = {2},
        pages = {575-611},
          doi = {10.1007/s00220-015-2466-x},
archivePrefix = {arXiv},
       eprint = {1410.0664},
 primaryClass = {quant-ph},
       adsurl = {https://ui.adsabs.harvard.edu/abs/2015CMaPh.340..575F}
}

@article{Junge2015,
  author = {Junge, Marius and Renner, Renato and Sutter, David and Wilde, Mark M. and Winter, Andreas},
  title = {Universal Recovery Maps and Approximate Sufficiency of Quantum Relative Entropy},
  journal = {Annales Henri Poincare},
  volume = {19},
  number = {10},
  pages = {2955--2978},
  year = {2018},
  doi = {10.1007/s00023-018-0716-0},
  eprint = {1509.07127},
  archivePrefix = {arXiv},
  primaryClass = {quant-ph}
}

@ARTICLE{Modular-commutator-Gapped,
       author = {{Kim}, Isaac H. and {Shi}, Bowen and {Kato}, Kohtaro and {Albert}, Victor V.},
        title = "{Modular commutator in gapped quantum many-body systems}",
      journal = {Physical Review B},
         year = 2022,
        month = aug,
       volume = {106},
       number = {7},
          eid = {075147},
        pages = {075147},
          doi = {10.1103/PhysRevB.106.075147},
          url = {https://doi.org/10.1103/PhysRevB.106.075147},
archivePrefix = {arXiv},
       eprint = {2110.10400},
 primaryClass = {quant-ph},
       adsurl = {https://ui.adsabs.harvard.edu/abs/2022PhRvB.106g5147K}
}

@article{HaydenJozsaPetzWinter2004,
  author = {Hayden, Patrick and Jozsa, Richard and Petz, D{\'e}nes and Winter, Andreas},
  title = {Structure of States Which Satisfy Strong Subadditivity of Quantum Entropy with Equality},
  journal = {Communications in Mathematical Physics},
  volume = {246},
  number = {2},
  pages = {359--374},
  year = {2004},
  doi = {10.1007/s00220-004-1049-z},
  url = {https://doi.org/10.1007/s00220-004-1049-z}
}

@article{SutterBertaTomamichel2017,
  author = {Sutter, David and Berta, Mario and Tomamichel, Marco},
  title = {Multivariate Trace Inequalities},
  journal = {Communications in Mathematical Physics},
  volume = {352},
  number = {1},
  pages = {37--58},
  year = {2017},
  doi = {10.1007/s00220-016-2778-5},
  url = {https://doi.org/10.1007/s00220-016-2778-5}
}

@article{widder1961functions,
  author  = {Widder, D. V.},
  title   = {Functions Harmonic in a Strip},
  journal = {Proceedings of the American Mathematical Society},
  volume  = {12},
  number  = {1},
  pages   = {67--72},
  year    = {1961},
  doi     = {10.2307/2034126}
}

@misc{yang2026modularcommutatorrobusttopological,
  title         = {Modular commutator as a robust topological invariant and approximate Markovianity},
  author        = {Tai Hsuan Yang},
  year          = {2026},
  eprint        = {2609.09019},
  archivePrefix = {arXiv},
  primaryClass  = {quant-ph},
  url           = {https://arxiv.org/abs/2609.09019}
}

@article{CV,
  author = {Carlen, Eric A. and Vershynina, Anna},
  title = {Recovery Map Stability for the Data Processing Inequality},
  journal = {Journal of Physics A: Mathematical and Theoretical},
  volume = {53},
  number = {3},
  pages = {035204},
  year = {2020},
  doi = {10.1088/1751-8121/ab5ab7}
}

@article{JL,
  author = {Junge, Marius and LaRacuente, Nicholas},
  title = {Multivariate Trace Inequalities, {$p$}-Fidelity, and Universal Recovery Beyond Tracial Settings},
  journal = {Journal of Mathematical Physics},
  volume = {63},
  number = {12},
  pages = {122204},
  year = {2022},
  doi = {10.1063/5.0066653}
}

@article{Lindblad75,
  author = {Lindblad, G{\"o}ran},
  title = {Completely Positive Maps and Entropy Inequalities},
  journal = {Communications in Mathematical Physics},
  volume = {40},
  number = {2},
  pages = {147--151},
  year = {1975},
  doi = {10.1007/BF01609396}
}

@article{Petz86,
  author = {Petz, Denes},
  title = {Sufficient Subalgebras and the Relative Entropy of States of a von Neumann Algebra},
  journal = {Communications in Mathematical Physics},
  volume = {105},
  number = {1},
  pages = {123--131},
  year = {1986},
  doi = {10.1007/BF01212345}
}

@article{Petz88,
  author = {Petz, Denes},
  title = {Sufficiency of Channels over von Neumann Algebras},
  journal = {Quarterly Journal of Mathematics},
  volume = {39},
  number = {1},
  pages = {97--108},
  year = {1988},
  doi = {10.1093/qmath/39.1.97}
}

@article{SFR,
  author = {Sutter, David and Fawzi, Omar and Renner, Renato},
  title = {Universal Recovery Map for Approximate Markov Chains},
  journal = {Proceedings of the Royal Society A},
  volume = {472},
  number = {2186},
  pages = {20150623},
  year = {2016},
  doi = {10.1098/rspa.2015.0623}
}

@article{Wilde,
  author = {Wilde, Mark M.},
  title = {Recoverability in Quantum Information Theory},
  journal = {Proceedings of the Royal Society A},
  volume = {471},
  number = {2182},
  pages = {20150338},
  year = {2015},
  doi = {10.1098/rspa.2015.0338}
}

@article{GaoWilde,
  author = {Gao, L. and Wilde, M. M.},
  title = {Recoverability for Optimized Quantum {$f$}-Divergences},
  journal = {Journal of Physics A: Mathematical and Theoretical},
  volume = {54},
  number = {38},
  pages = {385302},
  year = {2021},
  doi = {10.1088/1751-8121/ac1dc2},
  eprint = {2008.01668},
  archivePrefix = {arXiv},
  url = {https://arxiv.org/abs/2008.01668}
}

@article{MullerLennert2013,
  author = {M{\"u}ller-Lennert, Martin and Dupuis, Fr{\'e}d{\'e}ric and Szehr, Oleg and Fehr, Serge and Tomamichel, Marco},
  title = {On quantum {R{\'e}nyi} entropies: A new generalization and some properties},
  journal = {Journal of Mathematical Physics},
  volume = {54},
  pages = {122203},
  year = {2013},
  doi = {10.1063/1.4838856},
  eprint = {1306.3142},
  archivePrefix = {arXiv}
}

@article{WildeWinterYang2014,
  author = {Wilde, Mark M. and Winter, Andreas and Yang, Dong},
  title = {Strong Converse for the Classical Capacity of Entanglement-Breaking and {Hadamard} Channels via a Sandwiched {R{\'e}nyi} Relative Entropy},
  journal = {Communications in Mathematical Physics},
  volume = {331},
  pages = {593--622},
  year = {2014},
  doi = {10.1007/s00220-014-2122-x},
  eprint = {1306.1586},
  archivePrefix = {arXiv}
}

@article{FangFawzi2021,
  author = {Fang, Kun and Fawzi, Hamza},
  title = {Geometric {R{\'e}nyi} Divergence and its Applications in Quantum Channel Capacities},
  journal = {Communications in Mathematical Physics},
  volume = {384},
  pages = {1615--1677},
  year = {2021},
  doi = {10.1007/s00220-021-04064-4},
  eprint = {1909.05758},
  archivePrefix = {arXiv}
}

@article{BelavkinStaszewski1982,
  author = {Belavkin, V. P. and Staszewski, P.},
  title = {{$C^*$}-algebraic generalization of relative entropy and entropy},
  journal = {Annales de l'Institut Henri Poincar{\'e}, Section A, Physique Th{\'e}orique},
  volume = {37},
  number = {1},
  pages = {51--58},
  year = {1982},
  url = {https://www.numdam.org/item/AIHPA_1982__37_1_51_0/}
}

@article{BluhmCapel2020,
  author = {Bluhm, Andreas and Capel, {\'A}ngela},
  title = {A strengthened data processing inequality for the {Belavkin--Staszewski} relative entropy},
  journal = {Reviews in Mathematical Physics},
  volume = {32},
  number = {2},
  pages = {2050005},
  year = {2020},
  doi = {10.1142/S0129055X20500051},
  eprint = {1904.10768},
  archivePrefix = {arXiv}
}

@article{Schechter1959,
  author = {Schechter, Samuel},
  title = {{On the Inversion of Certain Matrices}},
  journal = {Mathematical Tables and Other Aids to Computation},
  volume = {13},
  number = {66},
  pages = {73--77},
  year = {1959},
  doi = {10.2307/2001955},
  url = {https://doi.org/10.2307/2001955}
}

@misc{GrinbergInverseCauchy,
  author = {Grinberg, Darij},
  title = {The entry sum of the inverse {Cauchy} matrix},
  year = {2026},
  note = {Note, version of June 6, 2026, Theorem 2.1},
  url = {https://www.cip.ifi.lmu.de/~grinberg/algebra/invcauchy.pdf}
}

@article{Alberti1983,
  author = {Alberti, Peter M.},
  title = {A note on the transition probability over {$C^*$}-algebras},
  journal = {Letters in Mathematical Physics},
  volume = {7},
  number = {1},
  pages = {25--32},
  year = {1983},
  doi = {10.1007/BF00398708}
}

@article{NgMandayam2010,
  author = {Ng, Hui Khoon and Mandayam, Prabha},
  title = {Simple approach to approximate quantum error correction based on the transpose channel},
  journal = {Physical Review A},
  volume = {81},
  pages = {062342},
  year = {2010},
  doi = {10.1103/PhysRevA.81.062342},
  eprint = {0909.0931},
  archivePrefix = {arXiv},
  primaryClass = {quant-ph}
}

@article{HiaiMosonyiPetzBeny2011,
  author = {Hiai, Fumio and Mosonyi, Mil{\'a}n and Petz, D{\'e}nes and B{\'e}ny, C{\'e}dric},
  title = {Quantum {$f$}-divergences and error correction},
  journal = {Reviews in Mathematical Physics},
  volume = {23},
  number = {7},
  pages = {691--747},
  year = {2011},
  doi = {10.1142/S0129055X11004412},
  eprint = {1008.2529},
  archivePrefix = {arXiv}
}

@book{Melnikov2012,
  author = {Melnikov, Yuri A. and Melnikov, Max Y.},
  title = {Green's Functions: Construction and Applications},
  publisher = {De Gruyter},
  address = {Berlin},
  year = {2012},
  doi = {10.1515/9783110253399}
}

\end{document}